\DeclareMathOperator*{\argmax}{arg\,max}
\DeclareMathOperator*{\argmin}{arg\,min}
\newcommand{\D}{\nabla}
\newcommand{\dd}{{\sf d}}
\newcommand{\dmax}{d_{\rm max}}
\newcommand{\dmin}{d_{\rm min}}
\newcommand{\dbar}{d_{\rm avg}}
\newcommand{\vbar}{\bar{v}}
\DeclareMathOperator*{\Ex}{\mathbb{E}}
\newcommand{\G}{\mathcal{G}}
\newcommand{\R}{\mathbb{R}}
\newcommand{\Lap}{\mathcal{L}}
\newcommand{\I}{{\tt I}\xspace}
\newcommand{\MH}{M}
\newcommand{\one}{{\sf 1}}
\newcommand{\Proj}{\Pi}
\newcommand{\sdot}{\! \cdot \!}
\newcommand{\T}{\top}
\newcommand{\Tor}{\mathbb{T}^{m}}
\newcommand{\Torone}{\mathbb{T}^{1}}
\newcommand{\U}{{\sf u}}
\newcommand{\V}{{\sf v}}
\newcommand{\W}{{\sf w}}
\newcommand{\X}{{\sf x}}
\newcommand{\Y}{{\sf y}}
\newcommand{\zero}{{\sf 0}}
\definecolor{applegreen}{rgb}{0.55, 0.71, 0.0}
\definecolor{alizarin}{rgb}{0.82, 0.1, 0.26}
\definecolor{slategray}{rgb}{0.44, 0.5, 0.56}
\definecolor{amber}{rgb}{1.0, 0.75, 0.0}
\definecolor{mikadoyellow}{rgb}{1.0, 0.77, 0.05}
\definecolor{cadmiumgreen}{rgb}{0.0, 0.42, 0.24}
\definecolor{forestgreen}{rgb}{0.13, 0.55, 0.13}
\definecolor{lust}{rgb}{0.9, 0.13, 0.13}
\definecolor{denim}{rgb}{0.08, 0.38, 0.74}
\definecolor{purpleheart}{rgb}{0.41, 0.21, 0.61}
\definecolor{cherryblossompink}{rgb}{1.0, 0.72, 0.77}
\definecolor{darktangerine}{rgb}{1.0, 0.66, 0.07}
\definecolor{bananayellow}{rgb}{1.0, 0.88, 0.21}
\definecolor{lightblue}{rgb}{0.55,0.82,0.77}
\let\emptyset\varnothing
\begin{document}

\title{Measuring Segregation via Analysis on Graphs}
\author{Moon Duchin\thanks{Department of Mathematics, Tufts University, Medford, MA 02155, USA \email{(moon.duchin@tufts.edu,jm.murphy@tufts.edu})}
	\and James M. Murphy\footnotemark[1]
	\and Thomas Weighill\thanks{Department of Mathematics, UNC Greensboro, NC 27402, USA (\email{t\_weighill@uncg.edu})}
        }
\maketitle

\begin{abstract}In this paper, we use analysis on graphs to study quantitative measures of segregation.  We focus on a classical statistic from the geography and urban sociology literature known as  {\em Moran's $\I$}, which in our language is a score associated to a real-valued function on a graph, computed with respect to a spatial weight matrix such as the adjacency matrix associated to the geographic units that tile a city.  Our results characterizing the extremal behavior of $\I$ illustrate the important role of the  underlying graph structure, especially the degree distribution, in interpreting the score. 
In addition to the standard spatial weight matrices encoding unit adjacency, we consider  the Laplacian $L$ and a doubly-stochastic approximation $\MH$.  These alternatives allow us to connect $\I$ to ideas from Fourier analysis and random walks.
We offer illustrations of our theoretical results with a mix of stylized synthetic examples and real geographic/demographic data.
\end{abstract}

\begin{keywords}
Spectral graph theory, analysis on graphs, geography, computational social science.
\end{keywords}

\begin{AMS}
  05C50, 65D18, 68T09, 91D30 
\end{AMS}

\section{Introduction}

A central question for geographers, urban sociologists, and demographers is to identify and measure levels of spatial correlation for a social statistic that is associated to geographic units.  When the topic is the human geography of a population subgroup, the presence of strong  correlation between number and place goes by the general name of {\em segregation}.

In recent decades, researchers have made increasing use of network structure to model the relationship between the geographical units that make up an area under study.  The nodes might stand for individual people or for geographical units like census blocks or counties.  Network topology can be given by simple adjacency of units, or by proximity (placing edges between units that are within a threshold  distance apart).

Figure~\ref{fig:Intro} shows the basic motivating example: a square lattice graph is first decorated with a checkerboard pattern and then with a clustered pattern.  The central question under consideration in this paper is the design of a numerical indicator that detects the intermixing of types on the checkerboard, in contrast to the separation of types on the clustered grid---and that lets us know, on the other hand, when no pattern is present at all.

\begin{figure}[ht]
\centering
\begin{tikzpicture}
\node at (-4,0) {\includegraphics[width=2.8in]{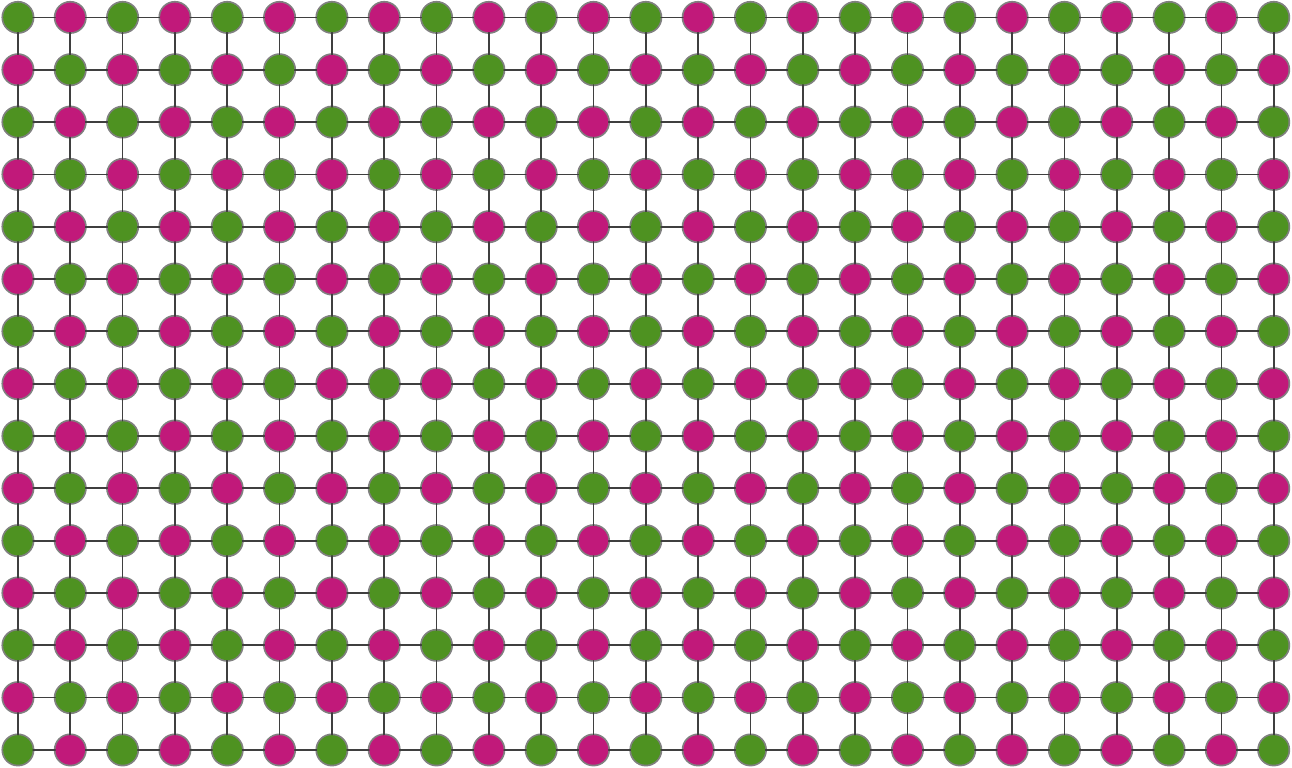}};
\node at (-4,-2.7) {Checkerboard: $\I=-1$};
\node at (4,0) {\includegraphics[width=2.8in]{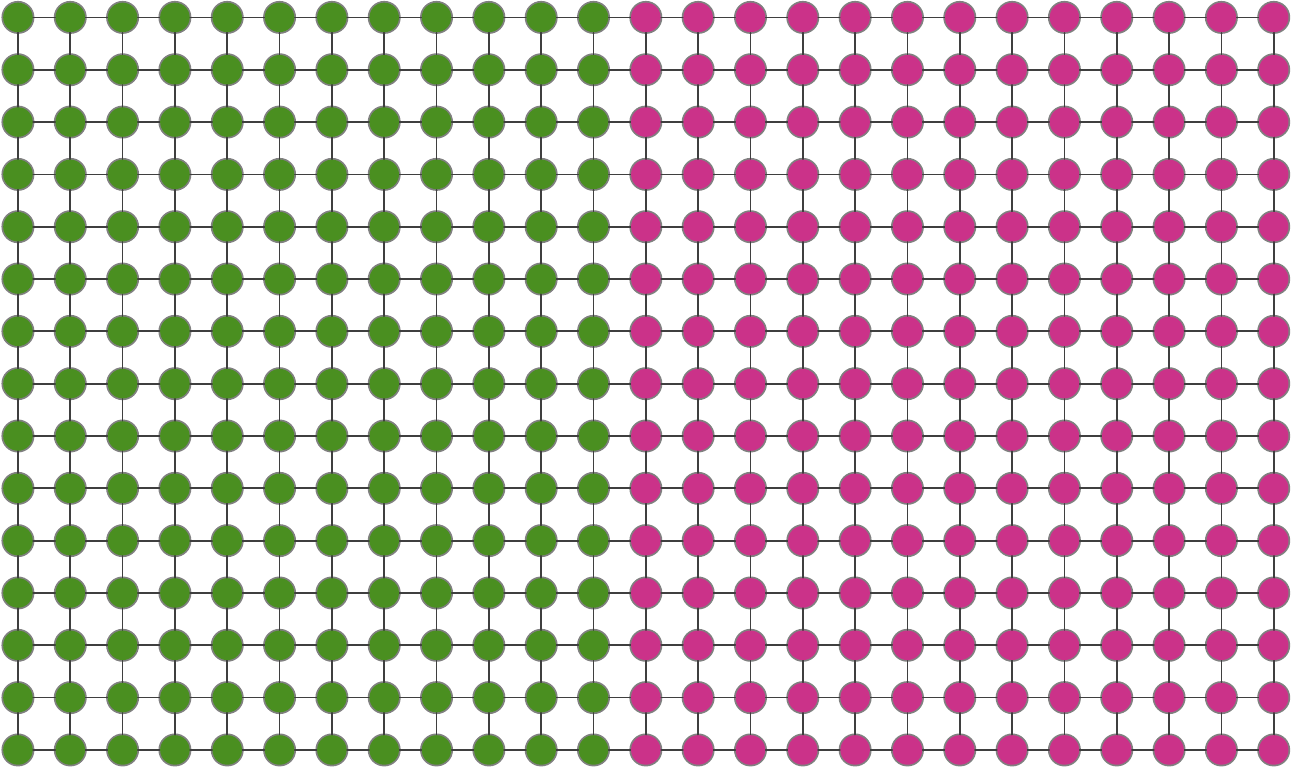}};
\node at (4,-2.7) {Clusters: $\I=0.9576$};
\end{tikzpicture}

    \caption{In these images, green and purple represent two different numerical values, say 0 and 1.  On the left, they are arranged in a checkerboard pattern; on the right, a clustered pattern. Moran's \I returns a low value of $-1$ for the checkerboard and a high value of nearly $1$ for the clusters.}\label{fig:Intro}
\end{figure}

While there is no shortage of proposed metrics to quantify segregation (see \S~\ref{subsec:LitReview}), the go-to choice in spatial statistics is Moran's \I, formally defined below in Definition~\ref{defn:I}. Introduced in the mid-20th century by P.A.P. Moran \cite{Moran1950_Notes}, this score is so prominent in the study of spatial structure in numerical data that it is almost synonymous with the concept of {\em spatial auto-correlation}. Over the years, social scientists have developed multiscale generalizations and extensive statistical frameworks  that allow for hypothesis tests in which the null hypothesis is of the form ``this population is not segregated on this network" \cite{deJong1984_Extreme}.  Despite the widespread currency of $\I$ in the field of geography, authors have articulated concerns about the feasibility of reducing a complex social phenomenon such as segregation to a simple score \cite{Legendre1993_Spatial}. Basic questions about how to make comparisons using Moran's $\I$---both to compare populations on a common network and to compare across networks---are wide open.


\subsection{Summary of Contributions}

Broadly, this paper seeks to describe features and properties of Moran's \I as it is commonly used, and to propose related alternatives that have improved properties.  

First, after introducing notation and definitions (\S\ref{sec:Defs}), we provide a spectral graph theory description of Moran's \I (with respect to a spatial weight matrix $W$) that we then use to derive basic properties and to consider the validity of standard claims pertaining to its use in spatial statistics (\S \ref{sec:SpectralGraphTheoryAndI}).  
For various choices of $W$ closely related to a graph, 
we show that the graph topology (degree distribution, cut lengths) controls the range of attainable values (\S \ref{sec:ComparingI}), which impacts our ability to interpret $\I$ within and especially across localities. 
Then, in \S \ref{sec:Empirical}, we consider alternatives to the standard choices of spatial weight matrix (classically, the adjacency matrix $A$ and its row-standardization $P$).
We particularly focus on the Laplacian $L$ and a doubly-stochastic alternative we call $\MH$, which offer connections to other rich mathematical concepts.
We derive a relationship between $\I(\V;L)$ and Dirichlet energies (\S \ref{sec:GraphFourierAnalysis}) that connects  quantitative ``smoothness" on a graph, from the point of view of harmonic analysis, to qualitative notions of segregation. 
Next, we develop a random-walk interpretation of  $\I(\V;\MH)$ in \S \ref{sec:WalksandI} that offers the version of Moran's \I that seems most promising for bounds and comparisons of any yet proposed.  

Finally, supported by a mix of theoretical and empirical work, we make concrete recommendations for the practical use of network-based segregation metrics in geography (\S \ref{sec:Conclusions}).

\section{Background}
\label{sec:Defs}

\subsection{Notation and Basic Definitions}

Suppose we have $n$ geographic units indexed by $1,2,\dots,n$ with adjacency data $A\in\R^{n\times n}$.  The matrix $A$ has entries $A_{ij}=1$ if the $i$th and $j$th units are adjacent and $A_{ij}=0$ if not, with the convention that $A_{ii}=0$ for all $i$.  For example, the units may be census tracts, with  $A_{ij}=1$ if tracts $i$ and $j$ have a shared boundary of positive length (but not if they meet at a corner).  Mathematically, $A$ is the (symmetric) adjacency matrix for a graph with nodes corresponding to the geographic units, which we will call the {\em dual graph} $\G$.  We will use $P\in\R^{n\times n}$ to denote the row-standardized adjacency matrix, i.e., the matrix with entries $P_{ij}=A_{ij}/\sum_{k=1}^n A_{ik}$.  By construction, $P$ is row-stochastic and has real eigenvalues because $P=D^{-1}A$ is conjugate to the symmetric matrix $D^{-1/2}AD^{-1/2}$ where $D$ is the diagonal matrix with $D_{ii}=\sum_{k=1}^{n}A_{ik}$. It achieves a largest eigenvalue of $1$ and has all eigenvalues greater than or equal to $-1$, which is achieved iff the graph is bipartite \cite{Chung1997_Spectral}.  Examples using real census data are shown in Figure~\ref{fig:GA-duals}; the supplemental material includes a histogram of vertex degrees.

\begin{figure}[ht]
    \centering
\begin{tikzpicture}
\node at (-5,0)    {\includegraphics[width=.32\textwidth]{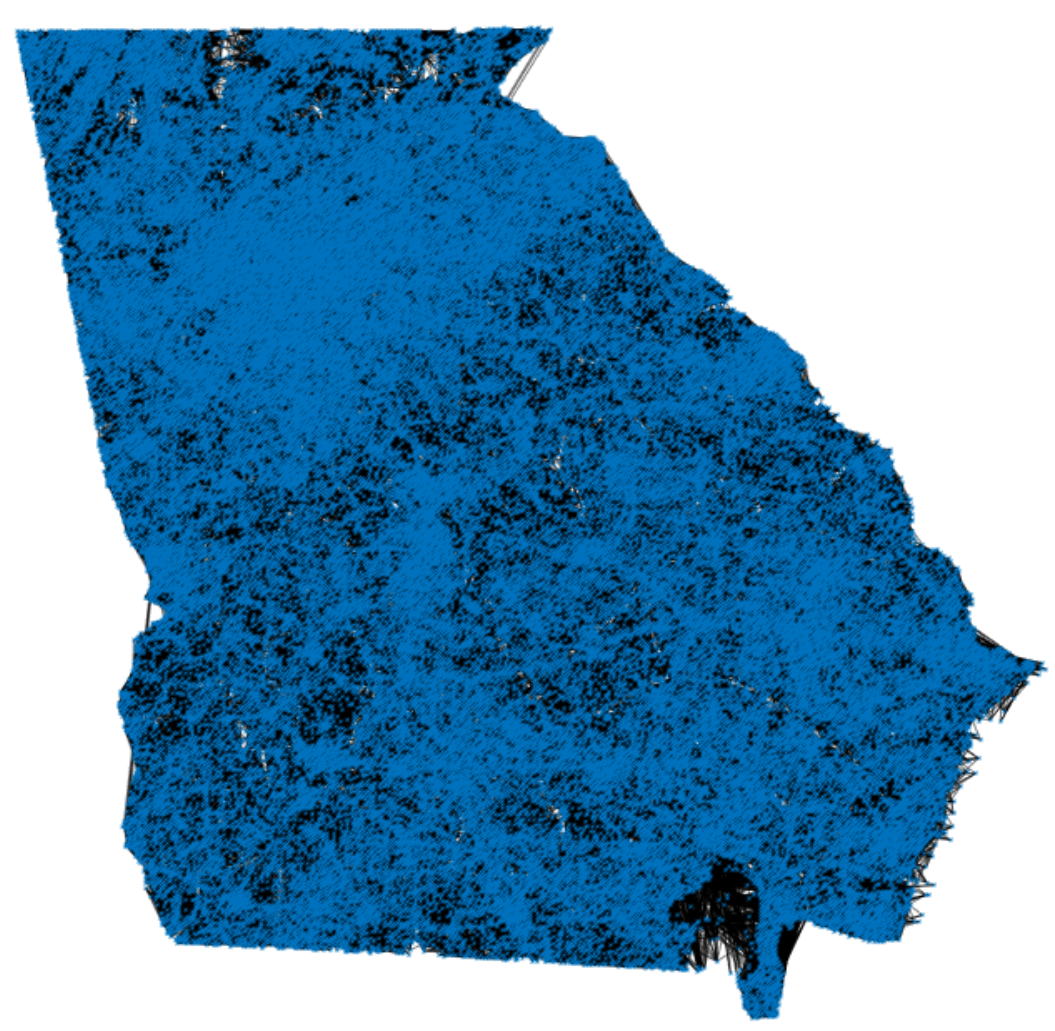}}; 
\node at (0,0)    {\includegraphics[width=.32\textwidth]{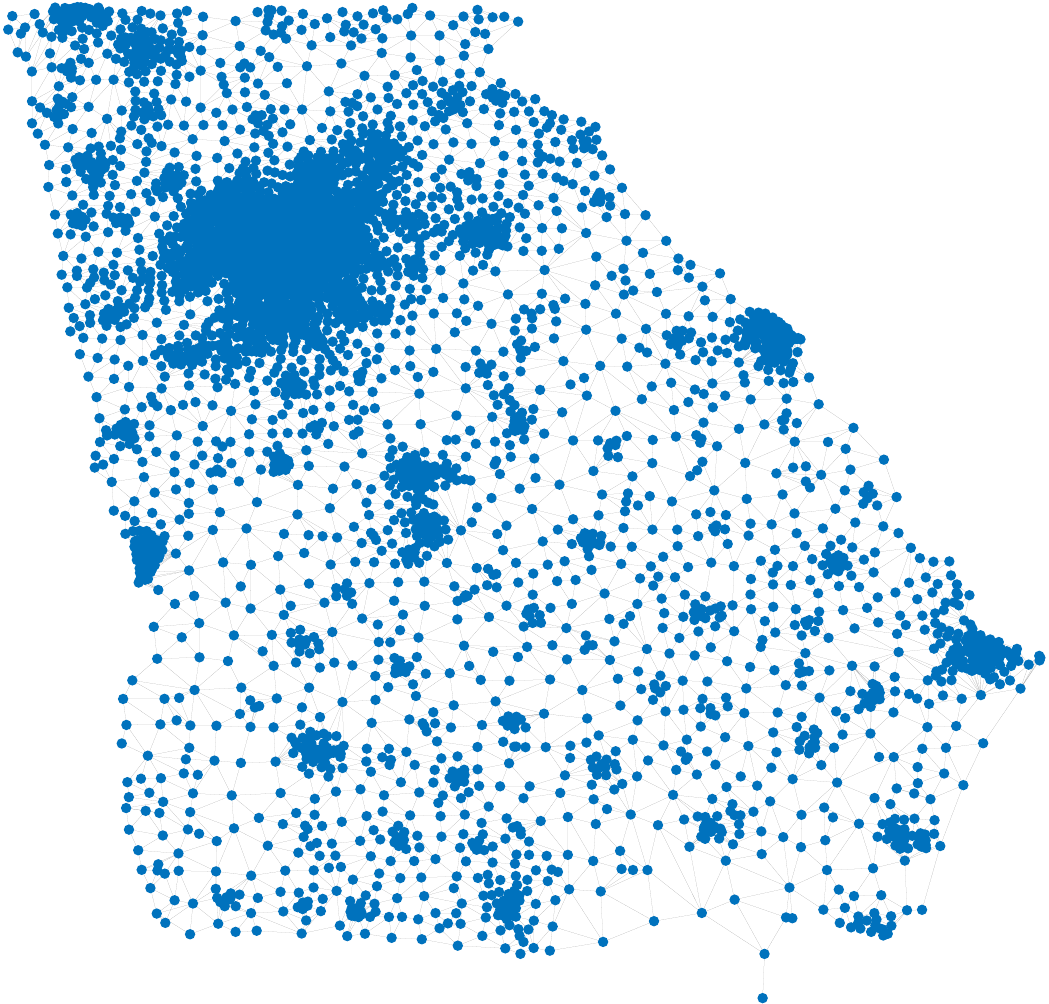}};
\node at (5,0)    {\includegraphics[width=.32\textwidth]{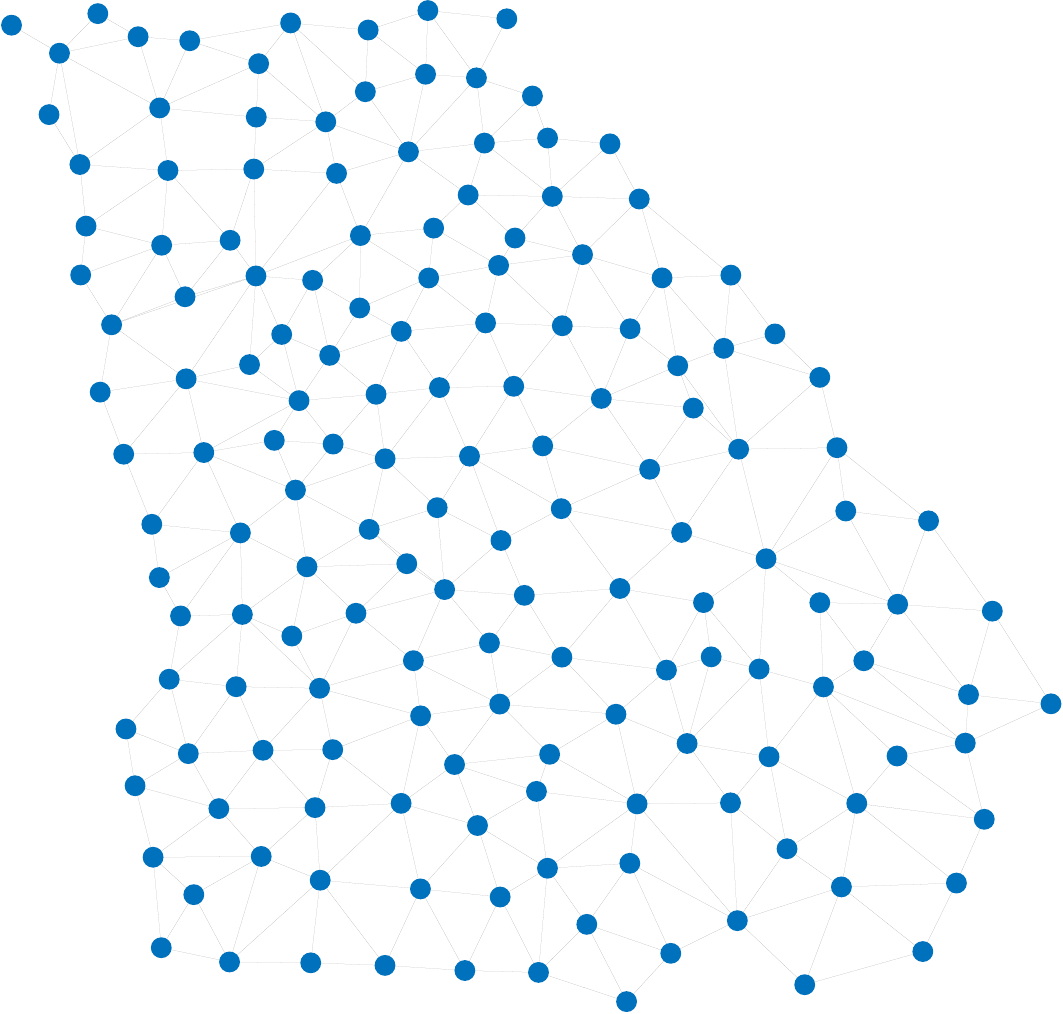}};
\end{tikzpicture}    

    \caption{The dual graphs of the partition of Georgia into 
    $291,086$ census blocks (left), 5533 block groups (middle), and 159 counties (right). 
    The Census Bureau's geographical hierarchy is nested:  each of these can be seen as a quotient of the previous one, collapsing several smaller units into each larger one.
    All are shown with centroidal embedding.  Features of the underlying graphs (like the degree distribution and spectrum) will be shown to provide bounds on the range of $\I$ values that are achievable.}
    \label{fig:GA-duals}
\end{figure}

Consider a function $\V:V(\G)\rightarrow\R$  on the graph nodes, which we will treat as a column vector $\V=(v_1,v_2,\ldots, v_{n})^{\T}\in\R^{n\times 1}$. For example, $v_i$ may be the percentage of residents in tract $i$ who are identified as belonging to a particular demographic (e.g., Hispanic) by the U.S. Census Bureau.  Figure \ref{fig:Chicago} shows real examples drawn from Chicago, Illinois. Let $\vbar=\frac 1n \sum_{i=1}^{n} v_{i}$ be the average of the entries of $\V$.  We will write $\zero,\one$ for the vectors (of length $n$) whose entries are all zero or all one, respectively.

\begin{definition}[Moran's \I]\label{defn:I}  With notation as above, let $W\in\R^{n\times n}$ be a matrix that is not the zero matrix, and let $w=\sum_{i,j=1}^{n}|W_{ij}|$.  \emph{Moran's $\I$ with respect to $W$} is a functional $\I(\ \cdot \ ;W):\R^{n}\rightarrow \R$ defined by \[\I(\V; W) := \left(n \displaystyle\sum_{i,j=1}^{n} W_{ij} (v_i - \vbar)(v_j -\vbar)\right) \bigg /\left(w \displaystyle\sum_{i=1}^{n}(v_i-\vbar)^2\right) = \frac nw\left( \frac{\X^\T W\X}{\X^\T\X}\right),\]
where $\X=\V-\vbar\one$.
\end{definition}

The most common choices of $W$ in geography are $W=A$ and $W=P$ \cite{Anselin1996_Moran}.  We shall refer to any $n\times n$ matrix that is not identically 0 as a \emph{weight matrix}; when it is associated to geographic features (as $A$ and $P$ are) we will call it a spatial weight matrix.  The usual interpretation in the geography community, dating to Moran's original work,  is that for either standard choice of adjacency weights, or for weights based on geographic distance, $\I(\V; W)$ takes higher positive values when the $v_i$ are ``spatially correlated" in the sense that neighboring/nearby units tend to have similar values \cite{Moran1950_Notes}.  Conversely, the standard understanding is that \I is negative when neighboring units tend to have very different values, and near zero when there is little or no relationship between the values of neighboring units.  The precise notion of spatial correlation is therefore graph-dependent.  In short, large values of $\I$ are taken to indicate segregation and small or negative values of $\I$ indicate a lack of segregation.  This article clarifies this intuition in a precise mathematical sense through the lens of analysis on graphs.

\begin{remark}[Zero-centering and rescaling]\label{rem:proj}It suffices to consider \I on vectors with $\ell^{2}$-norm 1 and mean 0.  
Let $X:=\{\X\in\R^{n} \ | \ \X\sdot\one=0\}=\one^\perp$ be the subspace of vectors with  mean 0.  For an arbitrary $\V\in\R^{n}$ with mean value $\vbar$, we let $\X=\V-\vbar\one$ denote its orthogonal projection onto $X$.
Then we immediately see for any $\V$ and $W$ that  $\I(\V; W)=\I(\X; W)$ and that  $\I(\alpha\V; W)=\I(\V;W)$ and $\I(\V; \alpha W)=\I(\V; W)$ for all scalars $\alpha\neq 0$.  
\end{remark}

We will refer to the $i$th rowsum $d_{i}=\displaystyle\sum_{j=1}^{n}|W_{ij}|$ as the {\em  $W$-degree} of node $i$; when $W=A$, this is the standard degree counting the number of edges ending at node $i$, which we will call either the $A$-degree or simply the vertex degree, to distinguish it from the $W$-weighted versions.  Define $\dbar:=\displaystyle\frac{1}{n}\sum_{i=1}^{n}d_i=w/n$ to be the average $W$-degree of the graph; note that if $W=P$, then $d_i$ is identically 1, so $\dbar=1$. We have 
$$\I(\V; W)= \I(\X; W)=  
\frac{\frac 1w \displaystyle\sum_{i,j=1}^{n} W_{ij} x_{i}x_{j}}{\frac 1n \displaystyle\sum_{i=1}^{n}x_{i}^2}
=\frac{\displaystyle\sum_{i,j=1}^{n} W_{ij} x_{i}x_{j}}{\dbar\displaystyle\sum_{i=1}^{n}x_{i}^2}.$$

\begin{remark}[Pairs versus singletons]
\label{rmk:pairs-v-single}
From the function $\V$, the zero-centered $\X$ records deviations above and below the mean.  The score $\I(\X;W)$ measures the patterns in these deviation values.  From the 
second-to-last expression above, we find an appealing interpretation of \I as {\em comparing the product of values at related nodes ($x_ix_j$) to the  squared values at individual nodes ($x_i^2$).}
The pair average is spatially weighted by the coefficients $W_{ij}$.  
In particular, when $W=A$, the score \I is precisely one-half the ratio of the average product across an edge to the average squared value at a vertex.  When $W=P$, it is a different ratio: the sum of the products across edges versus the sum of squared node values.  These have subtly different properties, as we will see below.
 \end{remark}

\begin{table}[htbp!]
\vspace{0.1in}
\begin{center}
\begin{small}
\begin{tabular}{c|c}
\hline
\textsc{Notation} & \textsc{Definition}  \\ \hline
\hline 
$\G$ & simple, undirected graph with $n$ nodes \\
$\V=(v_{1},v_{2},\dots,v_{n})^{\T}$ & function on graph $\G$, denoted as a column vector\\
$\vbar$ & mean of  $\V$ values\\
$\|\V\|_{2}$ & Euclidean norm of $\V$\\
$\zero, \one$ & vector of all 0s and 1s, respectively\\
$\X=(x_{1},x_{2},\dots,x_{n})^{\T}$ & arbitrary mean-0 vector \\
$X=\one^{\perp}$ & space of all mean-0 vectors\\
$\Proj$ & orthogonal projection onto  $X$\\
$W$ & arbitrary weight matrix, not identically zero\\
$Q$ & arbitrary bistochastic matrix (rows and columns sum to one) \\
$A$ & graph adjacency matrix associated to $\G$ \\
$D$ & diagonal vertex degree matrix associated to $\G$\\
$P$ & row-standardized  adjacency matrix associated to $\G$  \\
$L$ & Laplacian matrix associated to $\G$  \\
$\MH$ & bistochastic Metropolis-Hastings matrix associated to $\G$ \\
$\I(\V;W)$ & Moran's $\I$ applied to vector $\V$ with respect to matrix $W$ \\
$\I(X;W)$ & Range of all possible $\I$ values for weight matrix $W$\\
$\{(\lambda_{i},\Phi_{i})\}_{i=1}^{n}$ & eigenvalues and eigenvectors of  arbitrary  $W$\\
$\{(\mu_{i},\Psi_{i})\}_{i=1}^{n}$ & Laplacian eigenvalues and eigenvectors\\
$\dd=(d_{1},d_{2},\dots,d_{n})$ & vector of $W$-degrees $d_{i}=\sum_{j=1}^{n}|W_{ij}|$\\
$\dmin,\dmax,\dbar$ & minimum, maximum, and average $W$-degree over $i=1,\dots,n$ \\
$\{\alpha_{i}\}_{i=1}^{n}$ & coefficients in some orthonormal basis\\
$\mathcal{E}$ & Dirichlet energy functional on the graph\\
$\Tor=[0,2\pi]^m$ & $m$-dimensional torus with periodic boundaries
\end{tabular}
\end{small}
\end{center}
\vskip -0.1in
\caption{Notation used throughout the paper.}\label{tab:notation} 
\end{table}

\FloatBarrier
\subsection{Moran Scatter Plots and Linear Regression}\label{sec:scatter}
Sociologist Luc Anselin is credited with the fundamental observation that when $W$ is row-stochastic, such as for the row-standardized adjacency matrix $P$,  the definition of Moran's \I can be rearranged so that it contains a regression coefficient \cite{Anselin1996_Moran}.  Let $\U$ be defined coordinatewise as $u_i := \sum_{j=1}^n W_{ij} v_{j}$, and denote the mean of $\U$ by $\bar{u}$.
In the $W=P$ case, this is the average of the function values at the neighbors of $i$; if $W$ encodes proximity along some dimension, then this is a weighted average.  For this reason, $u$ is often called the \emph{(spatially) lagged variable}, by analogy with autocorrelation for time series. Moran's \I then reduces to the slope of a regression of the lagged variable ($u_{i}$) on the original variable ($v_{i}$), as follows:
\begin{align*}
    \I(\V;W)  =  \frac{\displaystyle\sum_{i=1}^{n} (v_i - \vbar) \displaystyle\sum_{j=1}^{n} W_{ij} (v_j - \vbar)} { \displaystyle\sum_{i=1}^{n}(v_i-\vbar)^2} 
     = \frac{\displaystyle\sum_{i=1}^n (v_{i}-\vbar) \cdot (u_i-\vbar)}{\displaystyle\sum_{i=1}^n (v_{i}-\vbar)^2} = \frac{\displaystyle\sum_{i=1}^n (v_{i}-\vbar) \cdot (u_i-\bar{u})}{\displaystyle\sum_{i=1}^n (v_{i}-\vbar)^2},
     \end{align*}
where we use the fact that $\sum_{i,j=1}^n W_{ij} = 1$ and $\sum_{i=1}^n (v_i - \vbar)\vbar = \sum_{i=1}^n (v_i - \vbar)\bar{u}   = 0$. We recognize the final expression as the slope of a regression of $\U$ on $\V$.  

\begin{figure}[ht] \centering
\begin{tikzpicture}
\node at (0,6) {\includegraphics[width=1.9in]{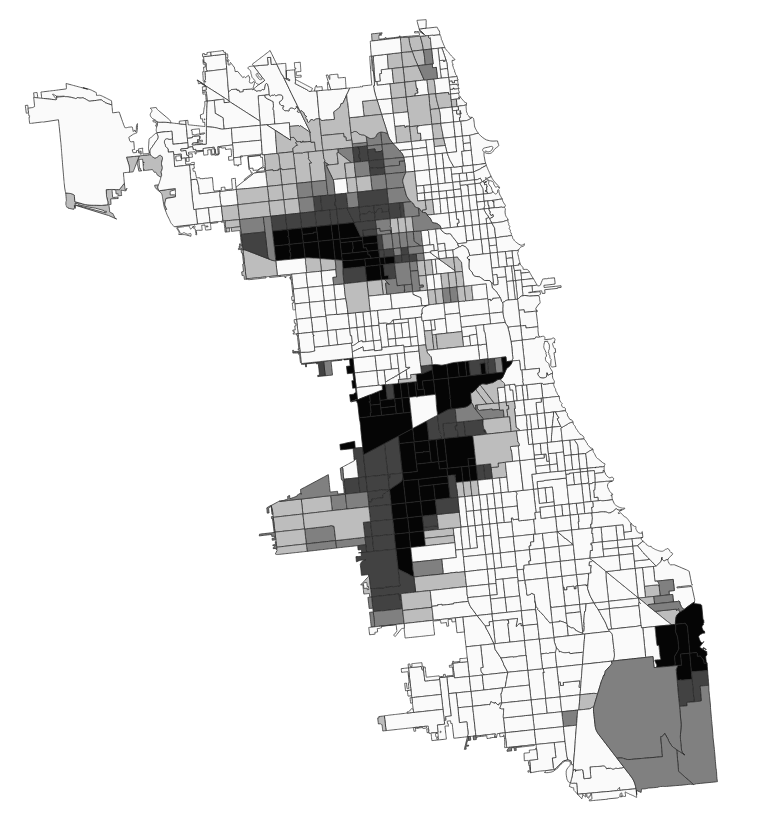}};
\node at (-2.2,6) [rotate=90] {Hispanic};
\node at (5,6) {\includegraphics[width=1.9in]{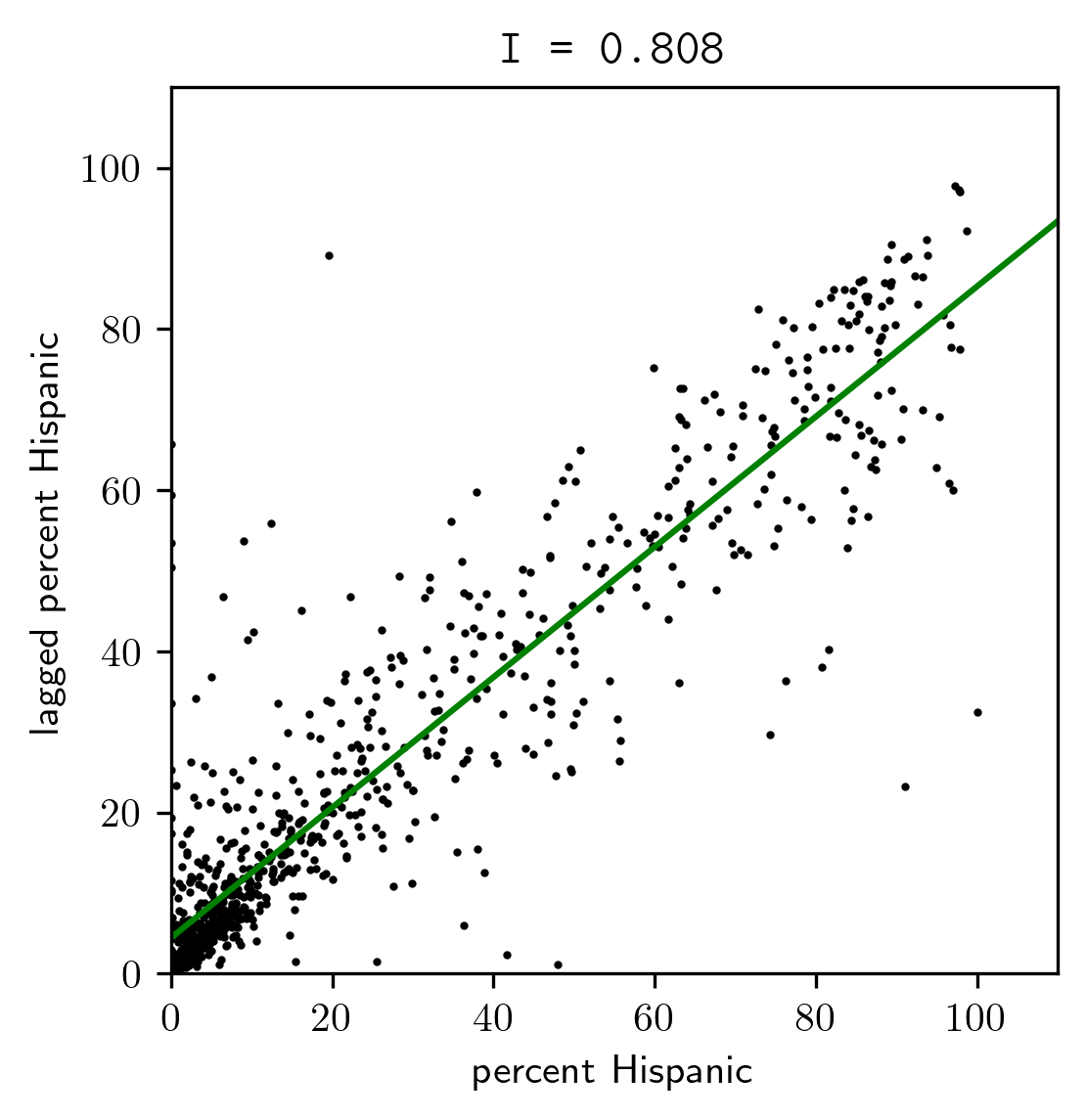}};
\node at (5.3,9.25) {By share};
\node at (10.5,6) {\includegraphics[width=1.9in]{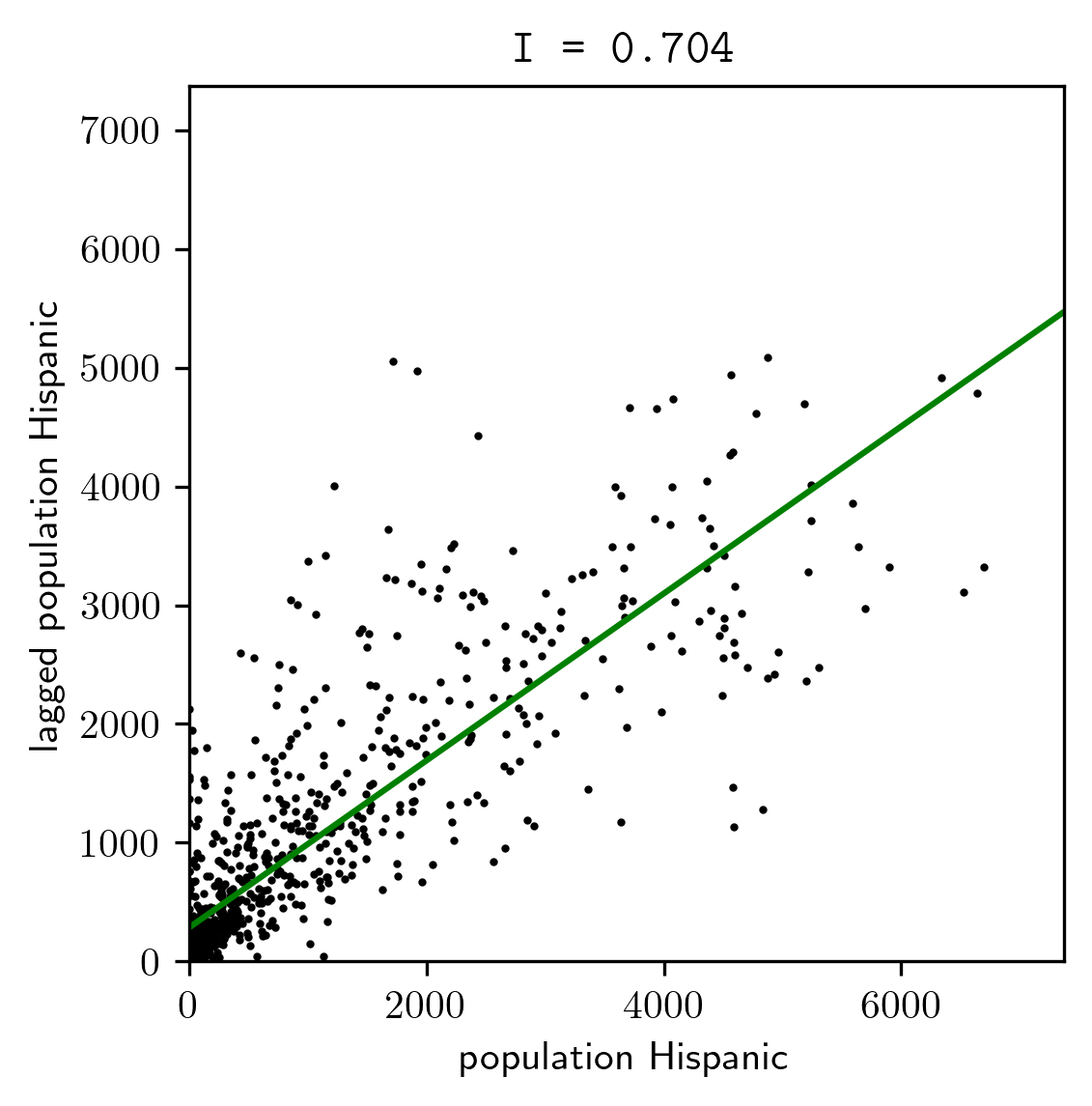}};
\node at (10.8,9.25) {By count};

\node at (0,0) {\includegraphics[width=1.9in]{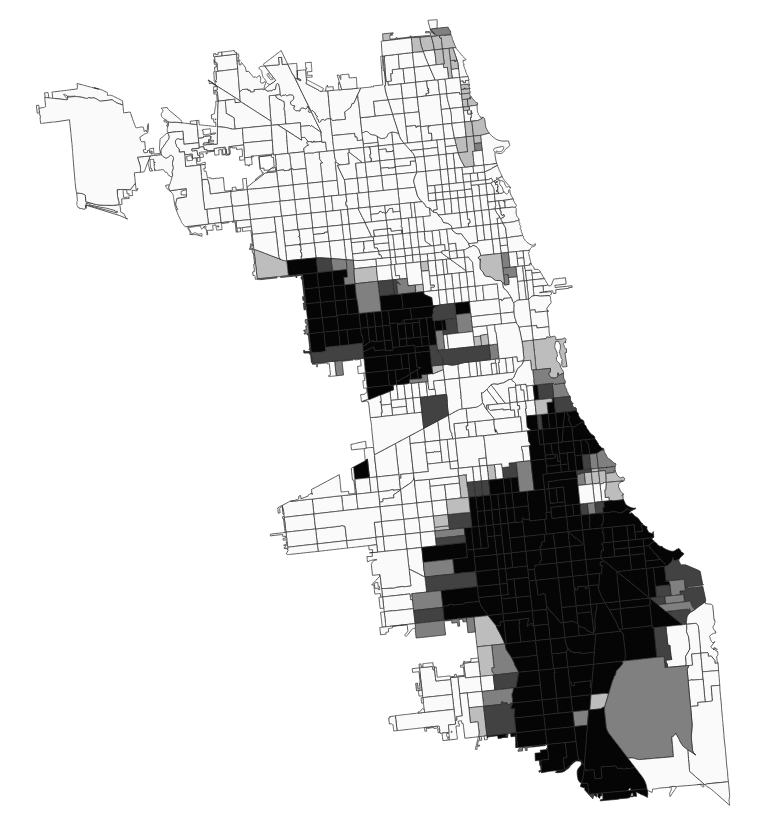}};
\node at (-2.2,0) [rotate=90] {Black};
\node at (5,0) {\includegraphics[width=1.9in]{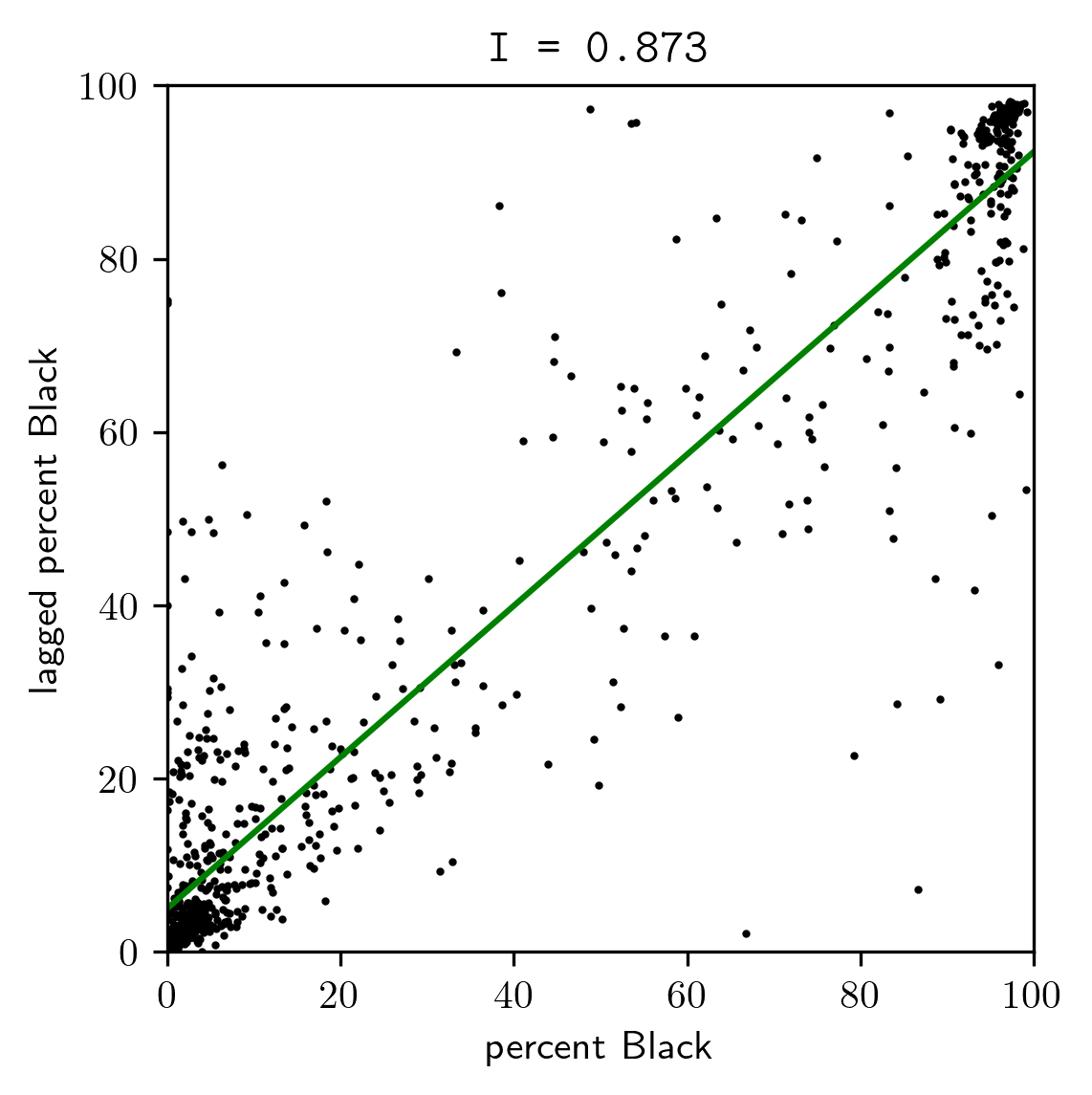}};
\node at (10.5,0) {\includegraphics[width=1.9in]{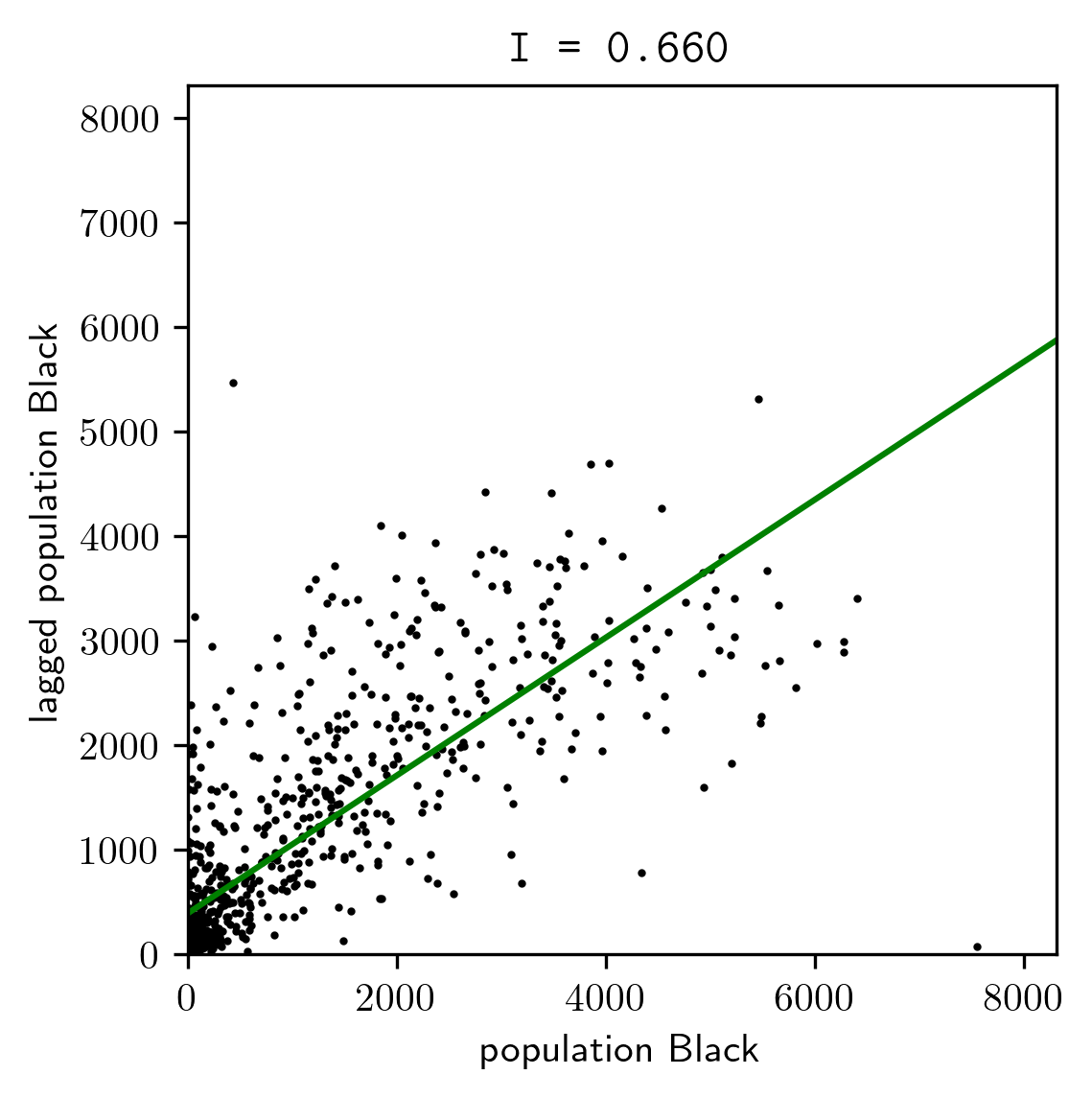}};

\node at (0,-3.5)
{\includegraphics[width=1.8in]{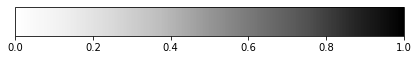}};
\end{tikzpicture}

\caption{Hispanic and Black population data in the 2010 census tracts of Chicago, colored by proportion with white corresponding to 0\% and black corresponding to 100\%. In the scatterplots, each dot is positioned according to the share (proportion) or count (total number) of the subgroup in that tract ($x$-axis) and its neighboring tracts ($y$-axis).  Here, Moran's \I is calculated with respect to the row-standardized adjacency matrix $P$.}
    \label{fig:Chicago}
\end{figure}

A scatterplot of $\U$ (spatially lagged variable) versus $\V$ (original variable) has been called a \emph{Moran scatterplot} \cite{Anselin1996_Moran}. Figure \ref{fig:Chicago} shows such a plot for the Hispanic and Black populations of Chicago by census tract.  Moran's \I is just the slope of the best fit line---shown in green in each plot. The positive correlation, and thus positive value of Moran's \I, is easily observable in these cases. Indeed, as the figure shows, the Hispanic and Black population in Chicago are both very clustered.  The connection between segregation, clustering, and graph geometry is developed in \S \ref{sec:GraphFourierAnalysis}.

\FloatBarrier
\subsection{Brief Summary of Prior Work}
\label{subsec:LitReview}

We consider several related methods of quantifying segregation; see the broad surveys \cite{Massey1988_Dimensions, Dawkins2004_Measuring} for further details and the book chapter \cite{Duchin2021_Explainer} for an accessible introduction to $\I$ and related measures.

Moran's $\I$, initially introduced by P.A.P. Moran, was brought into geography during the rise of spatial analysis in the late 1940s. In 1968, Cliff and Ord presented an influential conference paper \cite{Cliff1969_Problem} which introduced and argued for the use of alternative spatial weight matrices for Moran's $\I$, in particular $W$ matrices based on more than just contiguity (see \cite{getis2009spatial} for a further discussion on spatial weight matrices). A common use for Moran's $\I$ is as part of a significance test to see if data are or are not spatially autocorrelated; in such cases, one compares an observed Moran's $\I$ to its distribution under a random function $\V$ on the nodes \cite{agnew2011sage, Anselin1996_Moran, Berry1968_Spatial, deJong1984_Extreme}.  One can also test for spatial autocorrelation in the residuals of ordinary least squares models~\cite{anselin2013spatial}. Direct comparison of Moran's $\I$ values has sometimes been used in comparing segregation levels between regions \cite{roberts2009urban}; \S \ref{subsubsec:I_near0} will suggest that this is a dangerous practice. Anselin introduced the Moran scatterplot in \cite{Anselin1996_Moran} as a way to visualize Moran's $\I$ (see \S \ref{sec:scatter}). Moran scatterplots can also be useful in determining the contribution of particular sub-regions to the overall value of Moran's $\I$. This idea was developed further as part of Anselin's \emph{local indicators of spatial association (LISA)}, a class of methods which forms calculations based on the neighborhood of one node in a network, including local Moran's $\I$ and local Geary's $c$ \cite{anselin1995local}. Outside of geography Moran's $\I$ has been applied in fields as diverse as epidemiology, urban planning, and environmental studies~\cite{getis2008history}. 

Geographers have observed that \I is sensitive to the {\em modifiable areal unit problem}, or MAUP \cite{Berry1968_Spatial}.  Indeed, \I depends quite heavily on the choice of geographic units used (e.g., finer-scale census blocks or coarser-scale census tracts) to construct the associated graph. As a prototype example of this phenomenon, a checkerboard pattern has a Moran's \I of roughly $-1$ (total anticorrelation) as in Figure~\ref{fig:Intro}.  But if a simple checkerboard is aggregated so that its $2\times 2$ regions become the new units of analysis, then the distribution becomes roughly uniform and Moran's \I is roughly 0 (no correlation).   
When choosing units for analysis, the scale and placement of the units will impact all the measurable properties of the region, including the value of Moran's \I.  
On the other hand, there has been considerable development of multiscale graph signal processing tools \cite{Shuman2013_Emerging, Ortega2018_Graph} that allow for multiscale partitions of data-driven graphs using diffusion processes, wavelets, and neural networks; these are potentially interesting tools to capture notions of segregation across spatial scales.  We will address the choice of units in real examples throughout the present paper.

The spatial statistics literature contains numerous examples interpreting Moran's $\I$ in linear algebraic terms, as we do here.  This has been used to provide a framework for regression modeling \cite{griffith2006spatial, tiefelsdorf2007semiparametric} and other statistical analysis \cite{deJong1984_Extreme, Tiefelsdorf1995_Exact}.  Linear algebra is particularly relevant in the context of statistical testing \cite{Tiefelsdorf1995_Exact} where the goal is to understand, for a given weight matrix $W$ and function $\V$ on the nodes, whether $\V$ is more segregated in a statistically significant sense than would be expected under a null model. That is, does $\I(\V;W)$ deviate significantly from $\Ex(\I(\ \cdot \ ; W))$, where the expectation is taken over a suitable null model of vectors $\V$? For example, under the model of no spatial correlation in the graph, where node values are randomly sampled independently from each other, Moran himself observed that $\Ex(\I(\ \cdot \ ; W))=-1/(n-1)$ \cite{Moran1950_Notes}. Distributions around the mean are given in some cases in terms of spectral properties of the graph \cite{Tiefelsdorf1995_Exact}.  

In contrast to the static characterization of segregation captured by $\I$---namely that $\I\gg 0$ indicates segregation---the \emph{Schelling model} provides a dynamical perspective for segregation on graphs \cite{Schelling1971_Dynamic}.  In this model, every node on a network has a label (e.g., the membership in a demographic group), and the network evolves randomly in time as nodes change their label with a propensity towards being similar to their neighbors.  The degree of \emph{homophily} is a model parameter and quantifies how much nodes want to have the same labels as their neighbors.  One expects a network with high homophily to converge in the limit to a more segregated pattern than one with low homophily.  The Schelling model---which bears a family resemblance to models of ferromagnetism in statistical physics such as the Ising model \cite{Ising1925_Beitrag}---has been generalized to characterize complex segregation dynamics on extremely regular networks (e.g., hexagonal lattices) \cite{Bhakta2014_Clustering, Zhao2021_Socioeconomic}.  The perspective taken by this literature is to determine the basic properties of the steady-state distribution of the dynamics (e.g., whether large homogeneous regions emerge, depending on the homophily or related parameters) rather than how the underlying network geometry and population distribution impact the dynamics.

\emph{Network assortativity} \cite{newman2002assortative, newman2003mixing} was developed in network science to measure the propensity of like nodes to connect to one another, by counting the proportion of edges that link similarly labeled nodes and comparing that to the number expected under a null hypothesis of no special preference. 
As with $\I$, assortativity scores are influenced by the degree distribution of the network and the underlying sizes of the populations being measured \cite{newman2003structure}.
Previous work by Alvarez et al. \cite{Alvarez2018_Clustering} has generalized this to the setting that node properties can be real-valued rather than discrete, and the authors construct generalized assortativity scores called clustering propensity (or {\em capy}) scores that have linear algebra definitions similar to the ones that will be discussed in the present paper.  
Based on the observation (Remark \ref{rem:proj} below) that $\I$ is invariant under translation and rescaling, Alvarez et al. also note that interpretations of \I become risky when comparing datasets with different variances, and that the interpretation is particularly noisy when a population is near uniform.  To give a stylized example, consider a city where the east side is 100\% Hispanic and the west side is 0\% Hispanic and another city where every east side tract is 51\% Hispanic while every west side tract is 49\% Hispanic.  To an observer, it would be obvious that the first city is far more segregated than the second, but Moran's \I sees no difference at all.  
Indeed, \I can take any value at all when node values are all between $\vbar-\epsilon$ and $\vbar+\epsilon$, even for very small $\epsilon$.  
We will return to the question of scale-sensitivity in the discussion of future directions presented in the conclusion.

\section{Spectral Graph Interpretation}
\label{sec:SpectralGraphTheoryAndI}

In interpreting the values taken by $\I$, it is essential to understand how the graph itself determines the range of achievable values.  In this section we will show that when $W$ is symmetric,  $\I(\ \cdot \ ; W)$ achieves maximum and minimum values at generalized eigenvectors for the pair $(\Proj W \Proj,\Proj)$, i.e., solutions to the equation
$\Proj W \Proj\V=\lambda \Proj\V$, where $\Proj$ is orthogonal projection onto $X$.  When the weight matrix is symmetric and has constant rowsum, this reduces to a standard eigenvalue problem.

\begin{definition}[Rayleigh quotient]
For $W\in\R^{n\times n}$ and $\V\in\R^{n\times 1}$, the \emph{Rayleigh quotient} is  $R(\V; W):=\displaystyle\frac{\V^\T W \V}{\V^\T \V}$.
\end{definition}

It is a standard linear algebra fact that when  $W$ is symmetric, the functions $\V$ that realize  extreme values of $R(\V; W)$ are the eigenvectors corresponding to the smallest and largest eigenvalues of $W$ \cite{Horn2012_Matrix}.  That is, if $W$ has  eigenvalues 
$\lambda_{1}\ge \dots \ge \lambda_{n}$ and corresponding eigenvectors $\Phi_1,\dots, \Phi_n$, then 
 $$\min_{\V\neq 0}\frac{\V^{\T} W\V}{\V^{\T}\V}=\lambda_{n}, \quad \max_{\V\neq 0}\frac{\V^{\T} W\V }{\V^{\T}\V}=\lambda_{1},$$
and those extreme values are realized in the eigenspaces corresponding to $\lambda_n$ and $\lambda_1$, respectively.  This means that the $\V$ realizing extreme values are determined up to scaling when the extreme eigenvalues are simple (multiplicity one).  As we will see below, $\I$ is in general not quite a Rayleigh quotient, but close enough to allow us to analyze it in terms of generalized eigenvalues.

\subsection{Bounds with Adjacency Weights}

When $W=A$, the standard adjacency matrix of an undirected, simple (no self-loops) graph, a range of properties of the graph can be inferred from the the spectrum of $A$.  Recall that a graph is \emph{$d$-regular} if $d_i=d$ for all $i$.  Equivalently, a graph $\G$ is $d$-regular iff $\one$ is an eigenvector of $A$ with eigenvalue $d$.  To emphasize that eigenvalues of $A$ (not necessarily regular) encode connectivity facts about the graph or network $\G$, we record some standard facts from \cite{Chung1997_Spectral}.  In these statements we refer to the $A$-degrees of vertices, which are the standard vertex degrees of the graph.
Recall that a graph is called {\em bipartite} if there are two disjoint sets $A,B\subset V(\G)$
such that all edges of $\G$ have one endpoint in each set.
\begin{itemize}
    \item $\displaystyle\sum_{i=1}^{n} \lambda_{i}=0$, $\displaystyle\sum_{i=1}^{n} {\lambda_i}^2=n\dbar$, and $\displaystyle\sum_{i=1}^{n}{\lambda_i}^3=6t$, for $t$ the number of triangles in $\G$;
    \item $\lambda_1\le \dmax:=\max_i d_i$, the largest degree of any vertex, with equality iff $\G$ is regular.  
    \item $|\lambda_i|\le \lambda_1$ for all $i$, and  $\G$ connected iff $\lambda_2<\lambda_1$.
    \item if $\G$ is bipartite, then the eigenvalues are symmetric:  $\lambda_i=-\lambda_{n+1-i}$ for all $i$.
    \item if $\G$ is not bipartite, then $|\lambda_n|<\lambda_1$.
\end{itemize} 

These fundamental facts from spectral graph theory immediately yield bounds on $\I(\cdot \ ; A)$ and $\I(\cdot \ ; P)$.

\begin{theorem}[$\I$ bounds for general graphs with adjacency weights]\label{lem:BoundingI}  Let $A$ be the adjacency matrix of an undirected graph $\G$, with eigenvalues 
$\lambda_1\ge \dots \ge \lambda_n$.
Let $\dmin,\dbar, \dmax$ be the minimum, average, and maximum vertex degree (i.e., $A$-degree) of $\G$, and let $P$ be the row-standardized adjacency matrix described above.

\begin{enumerate}[(a)]
    \item  The range of possible $\I$ values satisfies $\I(X;A)\subseteq \left[\frac{\lambda_{n}}{\dbar},\frac{\lambda_{1}}{\dbar}\right]
    \subseteq \left[\frac{-\dmax}{\dbar},\frac{\dmax}{\dbar}\right]
    $
    and $\I(X;P)\subseteq
    \left[\frac{\lambda_{n}}{\dmin},\frac{\lambda_{1}}{\dmin}\right] \subseteq \left[\frac{-\dmax}{\dmin},\frac{\dmax}{\dmin}\right]$.  Note that the bounds for $P$ are still in terms of the eigenvalues and degrees of $A$.
    
    \item If $\G$ is irregular and not bipartite, $\I(X;A)\subsetneq \left( -\frac{\dmax}{\dbar},
\frac{\dmax}{\dbar}\right).$
\end{enumerate}
\end{theorem}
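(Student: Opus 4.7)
The plan is to recognize $\I(\,\cdot\,;W)$ as a normalization of a Rayleigh (or Rayleigh-like) quotient and apply classical spectral inequalities. By Remark~\ref{rem:proj} I may restrict attention to mean-zero $\X$, and the identity just below that remark gives
\[\I(\X;W)=\frac{1}{\dbar}\cdot\frac{\X^\T W\X}{\X^\T\X},\]
where $\dbar$ is the average $W$-degree. For $W=A$ this is $R(\X;A)/\dbar$; for $W=P$ the row-standardization forces $\dbar=1$, so $\I(\V;P)=\X^\T P\X/\X^\T\X$.

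For the adjacency case, since $A$ is symmetric I would apply Rayleigh--Ritz to conclude $R(\X;A)\in[\lambda_n,\lambda_1]$ for every nonzero $\X$; restricting to the subspace $X=\one^\perp$ can only shrink this range, so $\I(X;A)\subseteq[\lambda_n/\dbar,\lambda_1/\dbar]$. The listed spectral facts give $\lambda_1\le\dmax$ and $|\lambda_n|\le\lambda_1$, so in particular $\lambda_n\ge-\dmax$, yielding the outer containment in $[-\dmax/\dbar,\dmax/\dbar]$.

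The row-standardized case is more delicate because $P$ is not symmetric; $\X^\T P\X/\X^\T\X$ is not a standard Rayleigh quotient but does agree with the Rayleigh quotient of the symmetric matrix $S=(P+P^\T)/2$. I would rewrite $\X^\T P\X=(A\X)^\T D^{-1}\X$ and apply Cauchy--Schwarz with $\|A\X\|\le\lambda_1\|\X\|$ (valid because $\|A\|_{\mathrm{op}}=\lambda_1$, using $|\lambda_n|\le\lambda_1$) and $\|D^{-1}\X\|\le\|\X\|/\dmin$, obtaining $|\X^\T P\X|\le(\lambda_1/\dmin)\|\X\|^2$. Combined with $\lambda_1\le\dmax$ this already delivers the outer containment in $[-\dmax/\dmin,\dmax/\dmin]$. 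To sharpen the lower endpoint from the Cauchy--Schwarz output $-\lambda_1/\dmin$ to the claimed $\lambda_n/\dmin$, I would pass to the Loewner inequalities $\lambda_n D^{-1}\preceq D^{-1/2}AD^{-1/2}\preceq\lambda_1 D^{-1}$ (obtained by congruence from $\lambda_n I\preceq A\preceq\lambda_1 I$ via $D^{-1/2}$) and use them inside a polarization identity applied to the factorization $\X^\T P\X=(D^{-1/2}\X)^\T(D^{-1/2}AD^{-1/2})(D^{1/2}\X)$. This sharpening is the main obstacle I anticipate: a symmetric Cauchy--Schwarz bound cannot distinguish the bipartite from the non-bipartite case, so the tighter lower bound must genuinely exploit the fact that eigenvalues of $A$ need not be symmetric about zero.

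For part (b), I would track the strict inequalities in the spectral characterizations from the bullet list preceding the theorem. If $\G$ is irregular, then $\lambda_1<\dmax$ strictly, so $\lambda_1/\dbar<\dmax/\dbar$; if $\G$ is not bipartite, then $|\lambda_n|<\lambda_1\le\dmax$, so $\lambda_n>-\dmax$ and $\lambda_n/\dbar>-\dmax/\dbar$. Combining these strict inequalities with part (a) gives $\I(X;A)\subseteq[\lambda_n/\dbar,\lambda_1/\dbar]\subsetneq(-\dmax/\dbar,\dmax/\dbar)$, as claimed.
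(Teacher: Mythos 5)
Your treatment of the $A$-bounds and of part (b) is correct and is essentially the paper's argument: write $\I(\X;A)=R(\X;A)/\dbar$, apply Rayleigh--Ritz, note that restricting to $\one^\perp$ only shrinks the range, and then invoke $\lambda_1\le\dmax$ (with equality iff regular) and $|\lambda_n|\le\lambda_1$ (strict iff non-bipartite). No issues there.

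The genuine gap is exactly where you flagged it: the lower endpoint $\lambda_n/\dmin$ for $\I(X;P)$. Your Cauchy--Schwarz step correctly delivers $|\X^\T P\X|\le(\lambda_1/\dmin)\|\X\|^2$, hence the upper inner endpoint and the entire outer containment $[-\dmax/\dmin,\dmax/\dmin]$, but only $-\lambda_1/\dmin$ below. The proposed repair does not close this. The Loewner inequality $\lambda_n D^{-1}\preceq D^{-1/2}AD^{-1/2}$ is true, but it controls the quadratic form $\U^\T(D^{-1/2}AD^{-1/2})\U$, whereas $\X^\T P\X=(D^{-1/2}\X)^\T(D^{-1/2}AD^{-1/2})(D^{1/2}\X)$ is a bilinear form in two \emph{different} vectors. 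Polarizing writes it as $\tfrac14\bigl[(\mathbf{a}+\mathbf{b})^\T N(\mathbf{a}+\mathbf{b})-(\mathbf{a}-\mathbf{b})^\T N(\mathbf{a}-\mathbf{b})\bigr]$ with $\mathbf{a}=D^{-1/2}\X$, $\mathbf{b}=D^{1/2}\X$; lower-bounding this requires a lower bound on the first term and an \emph{upper} bound on the second, and the resulting expression does not reduce to $(\lambda_n/\dmin)\X^\T\X$. Equivalently, the needed matrix inequality $\tfrac12(D^{-1}A+AD^{-1})\succeq\lambda_n D^{-1}$ is an instance of ``the Jordan product of a PSD matrix with a positive diagonal matrix is PSD,'' which is false in general. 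So this endpoint is genuinely not proved in your write-up. You should know, however, that the paper's own proof of this step is also defective: after the substitution $\U=D^{-1/2}\V$ it asserts that $D^{-1/2}AD^{1/2}$ is symmetric, which fails whenever the graph is irregular (its transpose is $D^{1/2}AD^{-1/2}$), and the eigenvalue bound on the quadratic form is drawn from that false premise. In short, you have correctly isolated the one step of the theorem that is actually hard, you honestly declined to claim it, and the paper's argument for that same step does not withstand scrutiny either; the surrounding claims (the $\lambda_1/\dmin$ upper bound, both $\pm\dmax/\dmin$ outer bounds, and everything about $A$) are fine in both versions.
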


\begin{proof}To see (a), note that \[\frac{\lambda_{n}}{\dbar}=\min_{\V\neq \zero} \frac{1}{\dbar}\frac{\V^\T A\V}{\V^\T \V} \le 
\min_{\X\neq \zero,\X\perp\one}\frac{1}{\dbar}\frac{\X^{\T} A\X}{\X^\T \X}
=\min \I(X;A).\]
The upper bound is similar.  To see the inclusion for $\I(X;P)$, note that

\begin{align*}
    \min_{\V\neq \zero}\frac{\V^{\T}P\V}{\V^{\T}\V}=&\min_{\V\neq \zero}\frac{\V^{\T}D^{-1}A\V}{\V^{\T}\V}
    =\min_{\V\neq \zero}\frac{\V^{\T}D^{-1/2}(D^{-1/2}AD^{1/2})D^{-1/2}\V}{\V^{\T}\V}\\
    =&\min_{\V\neq \zero}\frac{(D^{-1/2}\V)^{\T}(D^{-1/2}AD^{1/2})(D^{-1/2}\V)}{\V^{\T}\V}=\min_{\U\neq \zero}\frac{\U^{\T}(D^{-1/2}AD^{1/2})\U}{\U^{\T}D\U},
\end{align*}
where in the last equality we simply make the change of variables $\U=D^{-1/2}\V$.  Now, because $D$ is diagonal with smallest entry $\dmin$, we have $0\le\dmin\U^{\T}\U\le\U^{\T}D\U$.  Moreover, $D^{-1/2}AD^{1/2}$ is symmetric and similar to $A$, hence has the same eigenvalues as $A$ which lie in the range $[-\dmax,\dmax]$.  We conclude that
\[\frac{-\dmax}{\dmin}\le \frac{\lambda_{n}}{\dmin}\le\min_{\V\neq \zero}\frac{\V^{\T}P\V}{\V^{\T}\V}\le \min_{\X\neq \zero, \X\perp\one}\frac{\X^{\T}P\X}{\X^{\T}\X}=\min \I(X; P).\]  The upper bound is similar.

To see (b), note that the upper bound follows from (a) and observing that $\lambda_1\le \dmax$, with equality iff $\G$ is regular (in which case $\dmax=\dbar$).  The lower bound follows similarly, since non-bipartiteness gives $\lambda_{n}>-\lambda_{1}>-\dmax$.
\end{proof}

We note that non-bipartiteness is easily observed visually on the graphs of interest in geographic applications by the presence of at least one triangle, and that the real-world graphs are never exactly regular.

While the proof of Theorem \ref{lem:BoundingI} suggests the eigenvectors $\Phi_{1}$ and $\Phi_{n}$ as candidate maximizers and minimizers of $\I$, respectively, we shall see that the projection onto $X$ complicates matters when the graph is not regular, and requires us to pass to generalized eigenvectors (Theorem~\ref{thm:GEV}).  

\begin{table}[ht] \centering 
\resizebox{\textwidth}{!}{
\begin{tabular}{|c|cc|cc|cc|}
\hline 
& \multicolumn{2}{c|}{blocks} & \multicolumn{2}{c|}{block groups} & \multicolumn{2}{c|}{counties} \\
\hline 
\small{\# nodes} & \multicolumn{2}{c|}{291,086}  & \multicolumn{2}{c|}{5,533} &\multicolumn{2}{c|}{159}  \\
\small{\# edges} & \multicolumn{2}{c|}{1,393,216}  & \multicolumn{2}{c|}{15,344} &  \multicolumn{2}{c|}{418} \\
\scriptsize{$\dmin<\dbar<\dmax$} & \multicolumn{2}{c|}{$1< 9.5725<92$}  & \multicolumn{2}{c|}{$1<5.5464<16$} & 
\multicolumn{2}{c|}{$1<5.2579<10$} \\   
\hline
&  $A$ & $P$ &$A$ & $P$ & $A$ & $P$\\
\hline 
\small{deg.~bounds} & 
\scriptsize{($-9.6108, 9.6108)$} & 
\scriptsize{$(-92,92)$} & \scriptsize{$ (-2.8848, 2.8848)$} & \scriptsize{$(-16,16)$}
& \scriptsize{$(-1.9019, 1.9019)$} & \scriptsize{$(-10,10)$}
  \\
\small{eig.~bounds} 
&\scriptsize{$(-1.0957,1.4486)$} & 
\scriptsize{$(-10.4886,13.8669)$}&
\scriptsize{$(-0.6978,1.1554)$} &
\scriptsize{$(-3.8702,6.4079)$} 
& \scriptsize{$( -0.5849,1.1151)$} & 
\scriptsize{$(-3.0751,5.8629)$}
\\
\small{true \I range}  &
 \scriptsize{$(-1.0957,1.4475)$} & 
 \scriptsize{$(-1.1007,1.2249)$}& 
 \scriptsize{$(-0.6978,1.1526)$} & 
 \scriptsize{$(-.7510,1.0326)$} &
\scriptsize{$(-0.5845,1.0763)$}&
 \scriptsize{$(-.7673,1.0260)$}\\
\hline 
\end{tabular}}

\caption{Values of $\I(\cdot \ ;A)$ and $\I(\cdot \ ;P)$ achievable on Georgia dual graphs from Figure~\ref{fig:GA-duals}, to four decimal places.  We see that the degree bounds from Theorem~\ref{lem:BoundingI} can be far from the eigenvalue bounds when $\dmin\ll \dbar\ll \dmax$.  Note that there are very high-degree nodes  the block graph; these typically occur when large blocks within bodies of water are adjacent to high numbers of coastal blocks.
The true range of achievable $\I$ values is obtained by numerically solving the generalized eigenvalue problem for $A$, as in Theorem~\ref{thm:GEV}, or by the Lagrange multiplier method for $P$ described in Remark~\ref{rmk:lagrange}.
The eigenvalue bounds for $P$ are not tight because the estimate that divides by $\dmin$ is far from sharp, and these graphs all have leaves ($\dmin=1$).}\label{tab:GA}
\end{table}

To test the sharpness of the bounds, we can look at three graphs based on real data from Georgia: the graphs dual to blocks, block groups, and counties that are depicted in Figure \ref{fig:GA-duals}.  Results are summarized in Table \ref{tab:GA}.

\FloatBarrier

This shows that it is possible for \I to fall outside of the $[-1,1]$ range on a realistic graph; later, we will see that in fact the values of $\I$ with respect to $A$ and $P$ can get arbitrarily large (positive or negative) when the degree disparity is exaggerated.
Note that these examples contradict what seems to be the prevailing understanding of the behavior of $\I( \cdot \ ; P)$ in the geography community, where it is said that $|\I(\V ; P)|\le 1$ for all vectors $\V$.\footnote{For instance, this is a stated reason to use $W=P$ in the user guide material for ArcGIS  \cite{ArcGIS2}, the dominant spatial statistics software package, which states that ``In general, the Global Moran's Index is bounded by $-1.0$ and $1.0$. This is always the case when your weights are row standardized."  And later: ``Row standardized weighting is often used with fixed distance neighborhoods and almost always used for neighborhoods based on polygon contiguity. This is to mitigate bias due to features having different numbers of neighbors."}  Presumably this belief can be traced to the fact that, for arbitrary graphs $\G$, the vertices have $P$-degree one and the eigenvalues of $P$ fall in $[-1,1]$, ``fixing" the fact that the $A$-degrees may be non-constant and the eigenvalues of $A$ may be large.  However, we know of no way to bound $\I(\cdot \ ;P)$ in terms of the eigenvalues of $P$; our spectral analysis for $P$ depends on the eigenvalues of $A$.

In contrast, the extremal behavior of $\I$ over $X$ is straightforward when the underlying graph is regular, and a statement similar to the previous theorem becomes sharp.

\begin{theorem}[$\I$ bounds for regular graphs with adjacency weights]\label{thm:ConnectedBipartite} Let $A$ be the adjacency matrix of an undirected, $d$-regular graph $\G$, with row-standardization $P=\frac 1d A$.  Let $d=\lambda_{1}\ge\lambda_{2}\ge\dots\ge\lambda_{n}$ be the eigenvalues of $A$ with associated eigenvectors $\Phi_{1},\Phi_{2},\dots,\Phi_{n}$.  Then
\begin{enumerate}[(a)]
\item $\I(X;A)=\left[\I(\Phi_n; A),\I(\Phi_2; A) \right]\subseteq [-1,1]$.
\item $\I(\Phi_2 ; A)=1$ iff $\G$ is disconnected.
\item  $\I(\Phi_n; A)=-1$ iff $\G$ is bipartite.
\end{enumerate}
Since $A$ is just a scalar multiple of $P$ in the regular case, the same bounds and equalities hold for $\I(\cdot \ ; P)$.
\end{theorem}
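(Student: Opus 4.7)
The plan is to exploit the fact that in the regular case, $\one$ is itself an eigenvector of $A$: since $A\one = d\one$ and the fundamental bound $|\lambda_i|\le \lambda_1$ holds, we may take $\Phi_1 = \one/\sqrt{n}$ as the top unit eigenvector. Completing this to an orthonormal basis $\Phi_1, \Phi_2, \dots, \Phi_n$ of eigenvectors of $A$ forces $\{\Phi_2, \dots, \Phi_n\}$ to be an orthonormal basis of $X = \one^\perp$. Because $\dbar = d$ when $\G$ is $d$-regular, this collapses the analysis of $\I(\,\cdot\,;A)$ on $X$ into a genuine Rayleigh quotient computation on the span of the non-top eigenvectors (so that the generalized-eigenvalue machinery needed in the irregular case becomes unnecessary).

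For part (a), I would expand any nonzero $\X \in X$ as $\X = \sum_{i=2}^n \alpha_i \Phi_i$ and compute
\[
 \I(\X; A) \;=\; \frac{1}{d}\cdot \frac{\X^\T A \X}{\X^\T \X} \;=\; \frac{\sum_{i=2}^n \lambda_i \alpha_i^2}{d \sum_{i=2}^n \alpha_i^2}.
\]
This expresses $\I(\X; A)$ as $1/d$ times a convex combination of $\lambda_2, \dots, \lambda_n$, so $\I(X;A) \subseteq [\lambda_n/d, \lambda_2/d]$, with the endpoints attained at $\X = \Phi_n$ and $\X = \Phi_2$. To see the full interval is covered, take $\X = \cos\theta\, \Phi_n + \sin\theta\, \Phi_2$ and let $\theta$ vary continuously over $[0,\pi/2]$; the Rayleigh quotient then sweeps out every intermediate value. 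Containment in $[-1,1]$ is immediate from $|\lambda_i|\le \lambda_1 = d$, which appears in the bullet list preceding the theorem.

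Parts (b) and (c) reduce to the equality cases of the same spectral facts. For (b), $\I(\Phi_2; A) = \lambda_2/d = 1$ iff $\lambda_2 = \lambda_1$, and the listed criterion says this holds iff $\G$ is disconnected. For (c), $\I(\Phi_n; A) = \lambda_n/d = -1$ iff $\lambda_n = -\lambda_1$; the bipartite case achieves this by the spectral symmetry $\lambda_i = -\lambda_{n+1-i}$, while non-bipartite forces $|\lambda_n| < \lambda_1$ and rules it out, giving the biconditional. The $P$ statements are automatic from $P = A/d$ together with the scaling invariance $\I(\V; \alpha W) = \I(\V; W)$ from Remark~\ref{rem:proj}.

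The one delicate point I anticipate is the disconnected case of (b), where $\lambda_2 = d$: there we need $\Phi_2$ to be a genuine element of $X$ rather than merely of the $d$-eigenspace (which contains $\one$ itself). This is exactly why I would start by fixing $\Phi_1 = \one/\sqrt{n}$ and then orthonormally complete --- any orthonormal completion automatically places every other $\Phi_i$ in $\one^\perp = X$, so in the disconnected case $\Phi_2 \in X$ is still a well-defined nonzero vector with $\I(\Phi_2; A) = 1$. Beyond this bookkeeping, the proof is a direct application of the Courant--Fischer principle once regularity identifies $\one$ as an eigenvector, and I do not expect any real obstacle.
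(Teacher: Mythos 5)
Your proof is correct and follows essentially the same route as the paper's: both arguments hinge on regularity making $\one$ the top eigenvector of $A$, reduce $\I$ on $X=\one^\perp$ to a Rayleigh quotient over the remaining eigenvectors (the paper invokes Courant--Fischer where you expand explicitly in the eigenbasis), and settle the equality cases in (b) and (c) via the same Perron--Frobenius facts about $\lambda_2=\lambda_1$ and $\lambda_n=-\lambda_1$. Your explicit convex-combination expansion even makes the claim that $\I(X;A)$ is the \emph{full} interval slightly more transparent than the paper's argmax/argmin statement.
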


\begin{proof} Theorem \ref{lem:BoundingI} gives 
$\I(X;A)\subseteq[-1,1]$
after noting that $\lambda_{1}\le \dmax$ and $\lambda_{n}\ge -\lambda_{1}$.  
To finish (a), first note that if $A$ is the adjacency matrix of a $d$-regular graph, then $\Phi_{1}=\one$.  By the Courant–Fischer–Weyl min-max principle \cite{Horn2012_Matrix}, \[\Phi_{2}\in\argmax_{\V\neq\zero, \V\perp\Phi_{1}}\frac{\V^{\T} A\V}{{\|\V\|_{2}}^{2}}=\argmax_{\V\neq\zero, \V\perp\one}\frac{\V^{\T} A\V}{{\|\V\|_{2}}^{2}}=\argmax_{\V\neq 0}\I(\V; A).\] The other bound follows from noting that  $\Phi_{n}\perp \Phi_{1}$, giving \[\argmin_{\V\neq\zero, \V\perp\one}\frac{\V^{\T} A\V}{{\|\V\|_{2}}^{2}}=\argmin_{\V\neq\zero}\frac{\V^{\T} A\V}{{\|\V\|_{2}}^{2}}\ni\Phi_{n}.\]

For (b), note that $\displaystyle\max_{\V\neq 0}\I(\V; A)=\frac{\lambda_{2}}{d}\le\frac{\lambda_{1}}{d}$.  The Perron-Frobenius Theorem gives $\lambda_{2}=\lambda_{1}=d$ iff $\G$ is disconnected \cite{Horn2012_Matrix}.

To see (c), we use  $\displaystyle\min_{\V\neq 0}\I(\V; A)=\frac{\lambda_{n}}{d}$.   Perron-Frobenius  gives $-\lambda_{n}\le\lambda_{1}=d$, with equality iff $A$ is bipartite.
\end{proof}

\subsection{Analysis for Symmetric Weight Matrices} \label{subsec:IrregularGraphs}
In preparation for proposing alternatives for the spatial weight matrix $W$, we now provide an analysis for arbitrary symmetric matrices.
When the graph is regular, its adjacency matrix $A$ has $\one$ as the eigenvector with largest eigenvalue.  
This makes projection onto $X=\one^\perp$ interact nicely with spectral analysis.   For general graphs, we will handle the projection more carefully.

Let $\Proj=I-\frac{1}{n}\one\one^{\T}$ denote the orthogonal projection onto $X$, i.e., $\Proj\V=\V-\vbar\one=\X$.  Noting that $\Proj \Proj ^{\T}=\Proj ^{2}=\Proj $, we have 
\[\I(\V;W)=\frac{\V^{\T} \Proj W\Proj \V}{\V^{\T}\Proj \V}.\]  
This is no longer a Rayleigh quotient, but rather a \emph{generalized Rayleigh quotient}.  Since $\Proj $ is singular, having $\one$ in its kernel, $\I(\V;W)$ cannot be reduced to a standard Rayleigh quotient via $\Proj ^{-\frac{1}{2}}$.  Instead, one can use the theory of generalized eigenvalues for the matrix pair $(A,B)$, i.e., solutions to $A\V=\lambda B \V$ \cite{van1996matrix, parlett1998symmetric}.
In our case we will consider the generalized spectrum $\{(\lambda_{i},\Phi_{i})\}_{i=1}^{n-1}$ of non-constant $\Phi_{i}$ satisfying $\Proj W\Proj \Phi_{i}=\lambda_{i}\Proj \Phi_{i}$.  
Since $W$ and $\Pi$ and thus $A=\Pi W \Pi$ are symmetric and $B=\Pi$ is positive semi-definite, the generalized eigenvalues are real, and the eigenvectors can be chosen to satisfy $(\Proj \Phi_{i})^{\T}\Proj \Phi_{j}=\Phi_{i}^{\T}\Proj \Phi_{j}=0, \ i\neq j$ and $(\Proj \Phi_{i})^{\T}\Proj \Phi_{i}=\Phi_{i}^{\T}\Proj \Phi_{i}=1$. This means that the vectors are orthonormal after projection, so we will say such generalized eigenvectors are {\em $\Proj $-orthonormal}.  Using the fact that $\one$ is orthogonal to each $\Proj \Phi_{i}$, we get the following diagonalization-style statement
 for \I.  

\begin{theorem}[Spectral interpretation of $\I$ for symmetric weight matrices]\label{thm:GEV}  Let $W$ be a symmetric $n\times n$ weight matrix, and let $\{(\lambda_{i},\Phi_{i})\}_{i=1}^{n-1}$
be   $\Proj $-orthonormal generalized eigenvectors for
the pair $(\Proj W\Proj ,\Proj )$.
Then for all non-zero $\V\in\R^{n\times 1}$,

\begin{enumerate}[(a)]
    \item  $\V=\left(\displaystyle\sum_{i=1}^{n-1}\alpha_{i}\Proj \Phi_{i}\right)+\vbar\one$, for some coefficients $\{\alpha_{i}\}_{i=1}^{n-1}$.
    
    \item  $\I(\V;W)=\displaystyle\sum_{i=1}^{n-1}\alpha_{i}^{2}\lambda_{i}\bigg/\displaystyle\sum_{i=1}^{n-1}\alpha_{i}^{2}$.
\end{enumerate}

\end{theorem}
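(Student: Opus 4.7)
The plan is to reduce everything to the subspace $X = \one^{\perp}$, on which the projected eigenvectors $\{\Proj \Phi_i\}_{i=1}^{n-1}$ form an orthonormal basis, and then to diagonalize the quadratic forms in the numerator and denominator of $\I$ against that basis.

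First, for (a), I would observe that $\Proj $-orthonormality gives $(\Proj \Phi_i)^{\T}(\Proj \Phi_j) = \Phi_i^{\T}\Proj ^{\T}\Proj \Phi_j = \Phi_i^{\T}\Proj ^2\Phi_j = \Phi_i^{\T}\Proj \Phi_j = \delta_{ij}$, using the idempotence $\Proj ^2 = \Proj $ and symmetry of $\Proj $. Hence $\{\Proj \Phi_i\}_{i=1}^{n-1}$ is an honest orthonormal set in the usual Euclidean sense, all of its members lie in the $(n-1)$-dimensional subspace $X = \mathrm{range}(\Proj )$, so by a dimension count they constitute an orthonormal basis of $X$. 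Since $\V = \Proj \V + \vbar\one$ and $\Proj \V \in X$, expanding $\Proj \V$ in this basis gives $\V = \sum_{i=1}^{n-1}\alpha_i \Proj \Phi_i + \vbar\one$, with the explicit formula $\alpha_i = (\Proj \Phi_i)^{\T}\V$ available but unneeded.

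For (b), I would rewrite $\I(\V;W) = \V^{\T}\Proj W\Proj \V \,/\, \V^{\T}\Proj \V$ (valid because $\Proj $ is symmetric and idempotent so $\V^{\T}\Proj \V = \|\Proj \V\|_2^2$, and $\V^{\T}\Proj W\Proj \V$ equals the original bilinear form in Definition \ref{defn:I} after subtracting the mean). Plugging in the expansion from (a) and using $\Proj \one = \zero$ kills the $\vbar\one$ term in both the numerator and denominator. The denominator becomes
\[
\V^{\T}\Proj \V \,=\, \Bigl(\sum_i \alpha_i \Proj \Phi_i\Bigr)^{\T}\Bigl(\sum_j \alpha_j \Proj \Phi_j\Bigr) \,=\, \sum_{i,j}\alpha_i\alpha_j \,\delta_{ij} \,=\, \sum_i \alpha_i^2,
\]
by the orthonormality verified in the previous paragraph.

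The numerator requires the key manipulation $(\Proj \Phi_i)^{\T} W (\Proj \Phi_j) = \Phi_i^{\T}\Proj W\Proj \Phi_j = \lambda_j \,\Phi_i^{\T}\Proj \Phi_j = \lambda_j \delta_{ij}$, where the first equality uses $\Proj ^{\T}=\Proj $ twice, the second invokes the generalized eigenvalue equation $\Proj W\Proj \Phi_j = \lambda_j \Proj \Phi_j$, and the third is $\Proj $-orthonormality. Thus $\V^{\T}\Proj W\Proj \V = \sum_i \alpha_i^2 \lambda_i$, and dividing yields (b). The only place any care is needed is in navigating the fact that $\Proj $ is singular and cannot be inverted: rather than trying to reduce to a standard Rayleigh quotient, one leans throughout on the identity $\Proj ^2 = \Proj $, which converts inner products of the form $(\Proj \Phi_i)^{\T}(\Proj \Phi_j)$ into the $\Proj $-inner products $\Phi_i^{\T}\Proj \Phi_j$ for which orthonormality was assumed.
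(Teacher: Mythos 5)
Your proof is correct and follows essentially the same route as the paper's: both expand $\V$ in the $\Proj$-orthonormal basis $\{\Proj\Phi_i\}$ of $X$, exploit $\Proj^2=\Proj$ and $\Proj\one=\zero$ to kill the mean term, and collapse the numerator via the generalized eigenvalue equation and $\Proj$-orthonormality. Your explicit verification that $(\Proj\Phi_i)^{\T}(\Proj\Phi_j)=\Phi_i^{\T}\Proj\Phi_j=\delta_{ij}$, so that the projected vectors form a genuine Euclidean orthonormal basis of $X$, is a slightly more careful articulation of what the paper leaves implicit, but the argument is the same.
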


\begin{proof}The result in (a) follows immediately from the $\Proj $-orthogonality of $\{\Phi_{i}\}_{i=1}^{n-1}$ and the fact that $\one$ generates the kernel of $\Proj $.   Note that this could be done either on the left or on the right in the non-symmetric case, but is unambiguous since $W$ symmetric.

To see (b), we compute
\begin{align*}\I(\V;W)=&\frac{\V^{\T} \Proj W\Proj \V}{\V^{\T} \Proj \V}=\frac{\left(\displaystyle\sum_{i=1}^{n-1}\alpha_{i}\Proj \Phi_{i}+\vbar\one\right)^{\T}\Proj W\Proj \left(\displaystyle\sum_{i=1}^{n-1}\alpha_{i}\Proj \Phi_{i}+\vbar\one\right)}{\left(\displaystyle\sum_{i=1}^{n-1}\alpha_{i}\Proj \Phi_{i}+\vbar\one\right)^{\T}\Proj \left(\displaystyle\sum_{i=1}^{n-1}\alpha_{i}\Proj \Phi_{i}+\vbar\one\right)}\\
=&\frac{\left(\displaystyle\sum_{i=1}^{n-1}\alpha_{i}\Proj ^{2}\Phi_{i}+\vbar\Proj \one\right)^{\T}W\left(\displaystyle\sum_{i=1}^{n-1}\alpha_{i}\Proj ^{2}\Phi_{i}+\vbar\Proj \one\right)}{\left(\displaystyle\sum_{i=1}^{n-1}\alpha_{i}\Proj \Phi_{i}+\vbar\one\right)^{\T}\left(\displaystyle\sum_{i=1}^{n-1}\alpha_{i}\Proj ^{2}\Phi_{i}+\vbar\Proj \one\right)}\\
=&\frac{\left(\displaystyle\sum_{i=1}^{n-1}\alpha_{i}\Proj \Phi_{i}\right)^{\T}W\left(\displaystyle\sum_{i=1}^{n-1}\alpha_{i}\Proj \Phi_{i}\right)}{\left(\displaystyle\sum_{i=1}^{n-1}\alpha_{i}\Proj \Phi_{i}+\vbar\one\right)^{\T}\left(\displaystyle\sum_{i=1}^{n-1}\alpha_{i}\Proj \Phi_{i}\right)}
=\frac{\left(\displaystyle\sum_{i=1}^{n-1}\alpha_{i}\Phi_{i}\right)^{\T}\left(\displaystyle\sum_{i=1}^{n-1}\alpha_{i}\Proj W\Proj \Phi_{i}\right)}{\left(\displaystyle\sum_{i=1}^{n-1}\alpha_{i}\Phi_{i}\right)^{\T}\left(\displaystyle\sum_{i=1}^{n-1}\alpha_{i}\Proj \Phi_{i}\right)}\\
=&\frac{\left(\displaystyle\sum_{i=1}^{n-1}\alpha_{i}\Phi_{i}\right)^{\T}\left(\displaystyle\sum_{i=1}^{n-1}\alpha_{i}\lambda_{i}\Proj \Phi_{i}\right)}{\displaystyle\sum_{i,j=1}^{n-1}\alpha_{i}\alpha_{j}\Phi_{i}^{\T}\Proj \Phi_{j}}
=\frac{\displaystyle\sum_{i,j=1}^{n-1}\alpha_{i}\alpha_{j}\lambda_{i}\Phi_{i}^{\T}\Proj \Phi_{j}}{\displaystyle\sum_{i,j=1}^{n-1}\alpha_{i}\alpha_{j}\Phi_{i}^{\T}\Proj \Phi_{j}}
=\frac{\displaystyle\sum_{i=1}^{n-1}\alpha_{i}^{2}\lambda_{i}}{\displaystyle\sum_{i=1}^{n-1}\alpha_{i}^{2}}.
\end{align*}

\end{proof}

By the usual scale-invariance, 
Theorem \ref{thm:GEV} allows us to understand the behavior of $\I$ just considering $\displaystyle\sum_{i=1}^{n-1}\alpha_{i}^{2}\lambda_{i}$ when $\displaystyle\sum_{i=1}^{n-1}\alpha_{i}^{2}=1$.  
Theorem \ref{thm:GEV} says, in other words, that $\I$ values are precisely the convex combinations of the generalized eigenvalues.  Since $\{\Proj \Phi_{i}\}_{i=1}^{n-1}$ is an orthonormal basis for $X$, this is analogous to the classical spectral analysis of the Rayleigh quotient. 

We note that the generalized eigenpairs of $(\Proj W\Proj,\Proj)$ can be put in correspondence with those of $\Proj W\Proj$ as follows:

\begin{lemma}Let $W$ be a symmetric matrix.

\begin{enumerate}[(a)]
\item  If $(\lambda, \Phi)$ is an eigenpair of $(\Proj W \Proj, \Proj)$, then $(\lambda, \Proj \Phi)$ is an eigenpair of $\Proj W \Proj$.  

\item If $(\lambda, \Phi)$ is an eigenpair of $\Proj W \Proj$, then $(\lambda, \Phi)$ is an eigenpair of $(\Proj W \Proj, \Proj)$.

\end{enumerate}
\begin{proof}To see (a), note that $\Proj^{2}=\Proj$ and thus $\Proj W \Proj (\Proj \Phi) = \Proj W \Proj \Phi = \lambda \Proj \Phi$.  To see (b), notice that $\Proj W \Proj \Phi = \lambda\Phi$ implies $\Proj^{2} W \Proj \Phi = \lambda\Proj\Phi$.  Again, $\Proj^{2}=\Proj$ and the result follows.
\end{proof}
\end{lemma}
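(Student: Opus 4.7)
The plan is to exploit only the defining property that $\Proj$ is an orthogonal projection, namely $\Proj^{2}=\Proj$. Both directions reduce to a one-line manipulation.

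For part (a), I would start from the generalized eigenpair equation $\Proj W \Proj \Phi = \lambda \Proj \Phi$ and simply substitute $\Proj \Phi$ for the vector on which $\Proj W \Proj$ acts. The idempotency $\Proj^{2}=\Proj$ collapses the inner $\Proj\Proj$ to a single $\Proj$, so the left-hand side becomes $\Proj W \Proj \Phi$, which by hypothesis equals $\lambda \Proj \Phi$. That is exactly the eigenpair equation $(\Proj W \Proj)(\Proj \Phi) = \lambda (\Proj \Phi)$, as required.

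For part (b), I would start from $\Proj W \Proj \Phi = \lambda \Phi$ and apply $\Proj$ on the left of both sides. On the left, the extra $\Proj$ is absorbed by $\Proj^{2}=\Proj$, yielding $\Proj W \Proj \Phi$ again; on the right, it becomes $\lambda \Proj \Phi$. Equating gives $\Proj W \Proj \Phi = \lambda \Proj \Phi$, which is precisely the definition of $(\lambda,\Phi)$ being a generalized eigenpair of the pair $(\Proj W \Proj,\Proj)$.

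There is no real obstacle here; the statement is a direct consequence of idempotency, and I would expect the proof to be two or three lines total. The only subtlety to flag is the trivial case $\Proj \Phi = 0$ (i.e., $\Phi$ a multiple of $\one$), which would force $\lambda = 0$ in part (b) and produce a degenerate pair in part (a); since the paper explicitly restricts attention to non-constant generalized eigenvectors in Theorem~\ref{thm:GEV}, this case is naturally excluded and need not complicate the argument.
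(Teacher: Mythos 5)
Your argument is correct and is essentially identical to the paper's proof: both directions use only the idempotency $\Proj^{2}=\Proj$, applying $\Proj W\Proj$ to $\Proj\Phi$ for (a) and left-multiplying by $\Proj$ for (b). Your added remark about the degenerate case $\Proj\Phi=\zero$ is a reasonable caution but, as you note, is excluded by the paper's restriction to non-constant generalized eigenvectors.
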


\begin{corollary}[Extreme values of $\I$]
\label{cor:extremeI}
When $W$ is symmetric, the minimum and maximum values of $\I(\cdot \ ; W)$ are the smallest and largest generalized eigenvalues of $(\Proj W\Proj \, , \, \Proj )$, respectively, and are achieved at the corresponding generalized eigenvectors.

Suppose additionally that $W$ has 
 constant rowsum $k$, i.e., the graph is $k$-regular with respect to $W$-degree.
Then the eigenvectors of $W$ are equal to the generalized eigenvectors of $(\Pi W \Pi,\Pi)$. The eigenvalues agree except possibly for the eigenvalue associated to $\one$, which is $k$ for $W$ and zero for the generalized problem.
\end{corollary}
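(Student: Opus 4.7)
The first statement is essentially a direct consequence of Theorem \ref{thm:GEV}(b). The plan is to fix any $\V \neq \zero$, write $\V = \sum_{i=1}^{n-1}\alpha_i \Proj \Phi_i + \vbar\one$ as in part (a) of that theorem, and then observe that Theorem \ref{thm:GEV}(b) expresses $\I(\V;W)$ as a convex combination of the generalized eigenvalues with weights $\alpha_i^2/\sum_j \alpha_j^2$. A convex combination of real numbers lies between the smallest and largest, with the extremes attained exactly when all the weight sits on one index. So the minimum (resp.~maximum) equals $\lambda_{\min}$ (resp.~$\lambda_{\max}$), achieved by the corresponding generalized eigenvector. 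The only thing to check is that the extremes are actually attainable, i.e.~that we may choose a single $\alpha_i = 1$ and the rest zero, giving $\V = \Proj \Phi_{i}$; this $\V$ is non-zero because $\Proj \Phi_i$ was chosen to satisfy $\Phi_i^\T \Proj \Phi_i = 1$.

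For the second statement, assume $W$ is symmetric with $W\one = k\one$. Since $W$ is symmetric, we may choose an orthonormal eigenbasis $\{\Phi_1, \Phi_2, \dots, \Phi_n\}$ with $\Phi_1 = \one/\sqrt{n}$ and $W\Phi_i = \mu_i \Phi_i$. The plan is to show that for $i \ge 2$, each $\Phi_i$ is a generalized eigenvector of $(\Proj W \Proj, \Proj)$ with the same eigenvalue $\mu_i$. Since $\Phi_i \perp \one$ for $i \ge 2$, we have $\Proj \Phi_i = \Phi_i$, and so
\[
\Proj W \Proj \Phi_i \;=\; \Proj W \Phi_i \;=\; \Proj (\mu_i \Phi_i) \;=\; \mu_i \Phi_i \;=\; \mu_i \Proj \Phi_i,
\]
confirming that $(\mu_i, \Phi_i)$ solves the generalized eigenvalue problem. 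These $n-1$ vectors are $\Proj$-orthonormal (in fact ordinarily orthonormal since $\Proj \Phi_i = \Phi_i$), so by dimension count they exhaust the generalized spectrum considered in Theorem \ref{thm:GEV}.

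Finally, for the eigenvector $\one$ itself: as an eigenvector of $W$ it carries eigenvalue $k$, but in the generalized problem $\Proj W \Proj \one = \Proj W \cdot \zero = \zero = 0 \cdot \Proj \one$, so the associated generalized eigenvalue is $0$ (and indeed $\one$ lies in the kernel of $\Proj$ and so drops out of the normalized spectrum used in Theorem \ref{thm:GEV}). This reconciles the two spectra and finishes the proof.

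The main obstacle I anticipate is purely bookkeeping: one must be careful that $\{\Phi_i\}_{i\ge 2}$ really furnishes a full set of $\Proj$-orthonormal generalized eigenvectors, as the generalized problem has a non-trivial kernel coming from $\one$. Once the identification $\Proj \Phi_i = \Phi_i$ is made, the verification is routine. No new estimates or inequalities are needed beyond what Theorem \ref{thm:GEV} provides.
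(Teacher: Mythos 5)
Your proposal is correct and follows essentially the same route as the paper: the first claim is read off from the convex-combination formula of Theorem \ref{thm:GEV}(b), and the second claim rests on the observations that $\one$ is a $W$-eigenvector with $\Proj W\Proj\one=\zero$ and that $\Proj$ acts as the identity on $X=\one^\perp$, so the restriction of $W$ to $X$ supplies the remaining $n-1$ generalized eigenpairs. The only cosmetic difference is that you verify one direction on an explicit orthonormal eigenbasis and finish by dimension count, whereas the paper runs the equivalence $W\X=\lambda\X \iff \Proj W\Proj\X=\lambda\Proj\X$ directly for $\X\in X$.
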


\begin{proof}
The observation that $\I$ values are convex combinations  establishes that the extremes are realized at the largest and smallest generalized eigenvalues.
Next, we note that $W\one=k\one$ while 
$\Pi W \Pi \one = \zero$, which establishes the last statement.

Now consider $\X\in X$.  From 
Theorem~\ref{thm:GEV}, we can express $\X$ in the eigenbasis $\{(\lambda_i,\Phi_i)\}_{i=1}^{n-1}$ that spans $X$, and we note that $\Pi$ is the identity on $X$, so it preserves $\X$ and $W\X$.
This gives us
$$W\X=\lambda\X \iff W \Pi \X=\lambda\Pi\X
\iff \Pi W \Pi \X = \lambda \Pi \X,
$$
identifying the eigenvectors with the generalized eigenvectors, as needed.
\end{proof}

\begin{remark}[Extension to $\I(\cdot\ ;P)$]
\label{rmk:lagrange}
The matrix $P$ is not symmetric, so the above orthogonal decomposition does not apply (in particular, the left and right eigenvectors of $P$ are different).  However, the variational problem of minimizing or maximizing $\frac{\V^{\T}\Proj P\Proj\V}{\V^{\T}\Proj\V}$ over the space of non-zero vectors reduces, by scale-invariance, to the constrained optimization of  $\V^{\T}\Proj P\Proj\V$ subject to the constraint that $\V^{\T}\Proj\V=1$. This has associated Lagrangian function $\V^{\T}\Proj P\Proj\V+\zeta(\V^{\T}\Proj\V-1)$ for some scalar $\zeta$.  The $\V$-derivative of the Lagrangian is $\V^{\T}\Proj(P+P^{\T})\Proj+\zeta(2\V^{\T}\Proj)$.  Setting equal to 0, we see $\frac{\V^{\T}\Proj P\Proj\V}{\V^{\T}\Proj\V}$ is maximized and minimized at the generalized eigenvectors of $(\frac{1}{2}\Proj(P+P^{\T})\Proj,\Proj)$ corresponding to the extreme eigenvalues.
\end{remark}

\section{Comparing \I Within and Across Graphs}
\label{sec:ComparingI}

Theorem \ref{thm:GEV}  gives us tools to study the question of what kinds of vectors $\V_1,\V_2$ have $\I(\V_{1};W)\approx \I(\V_{2};W)$.  We consider two cases: $|\I|\gg 0$ and $|\I|\approx 0$.  We will continue to suppose that vectors are scaled so that $\X=\V-\vbar\one$ has $\ell^{2}$ norm 1, i.e., $\displaystyle\sum_{i=1}^{n-1}\alpha_{i}^{2}=1$, and we let the generalized eigenvalues of $(\Proj W\Proj ,\Proj )$ be $\lambda_{1}\ge \lambda_{2}\ge\dots\ge\lambda_{n-1}$.

\begin{figure}[ht]
    \centering

\begin{tikzpicture}[yscale=.9]
\node at (0,7) {\includegraphics[width=2.9in]{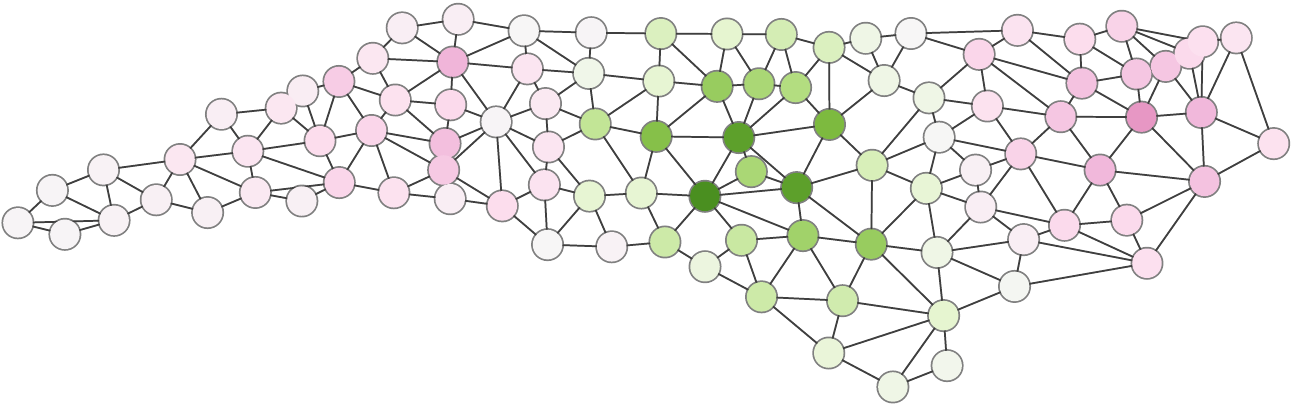}};
\node at (0,8.5) {$\I=1$};
\node at (8,7) {\includegraphics[width=.45\textwidth]{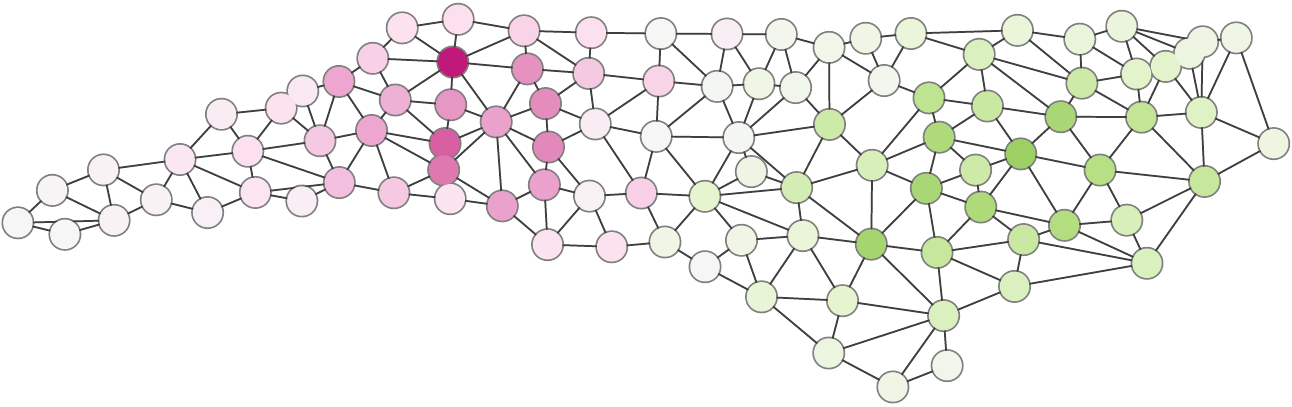}};
\node at (8,8.5) {$\I=1$};

\node at (0,3.5) {\includegraphics[width=2.9in]{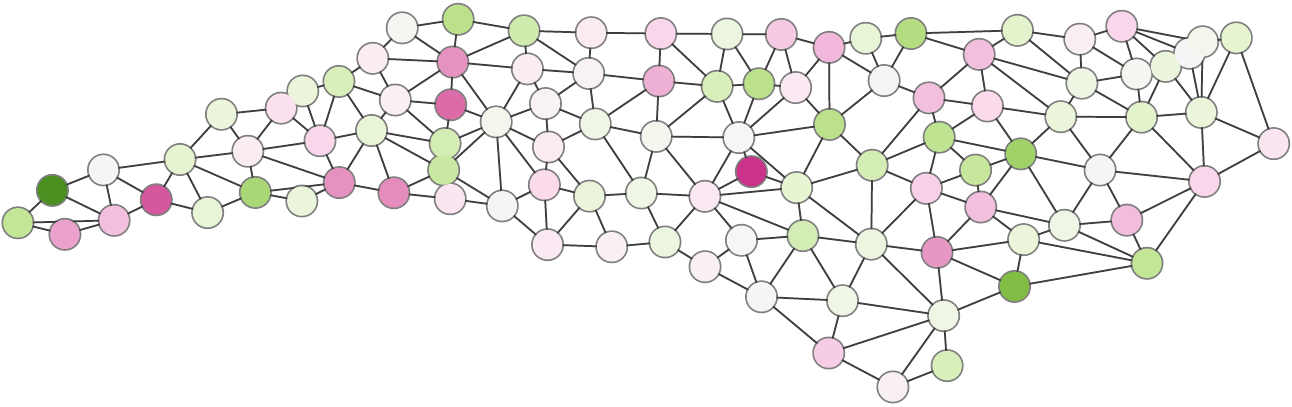}};
\node at (8,3.5) {\includegraphics[width=2.9in]{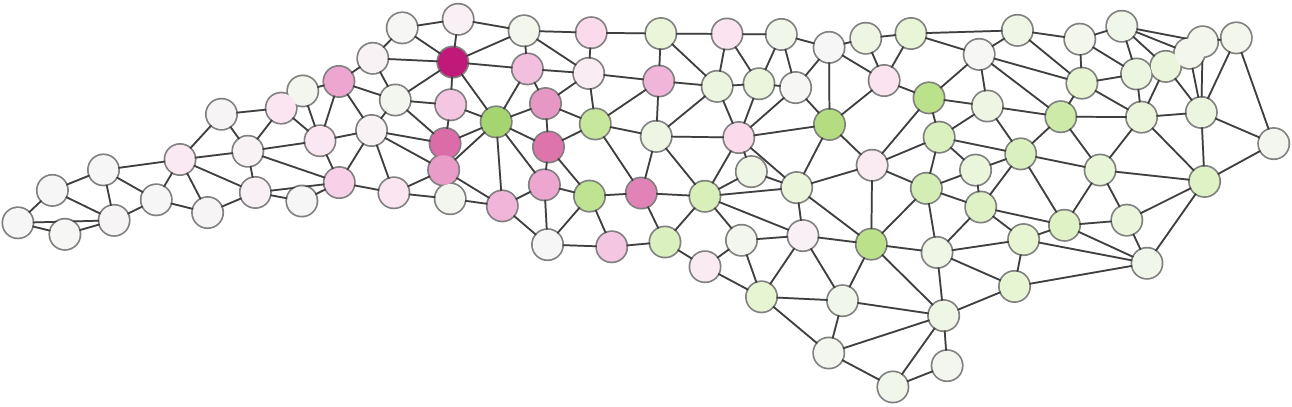}};
\node at (0,5) {$\I=0$};
\node at (8,5) {$\I=0$};

\node at (0,0) {\includegraphics[width=2.9in]{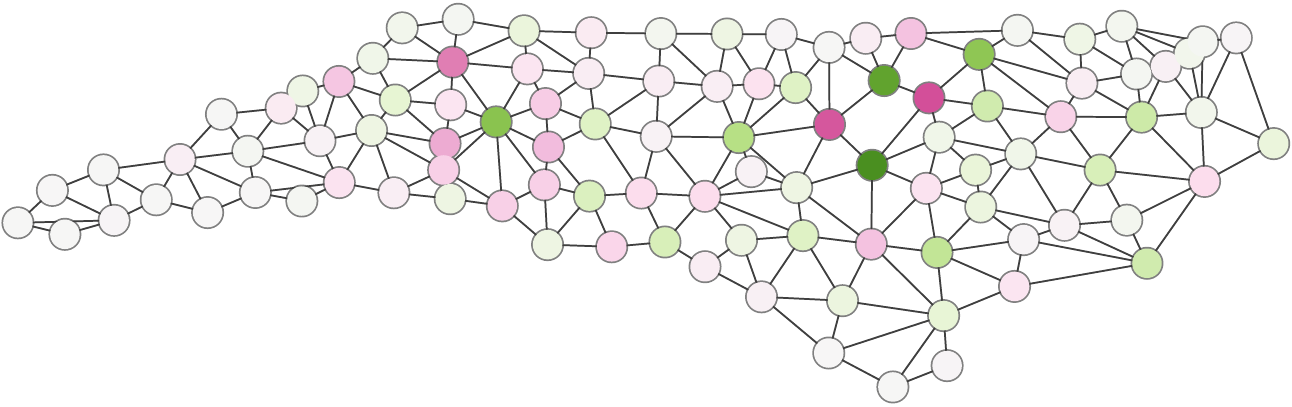}};
\node at (8,0) {\includegraphics[width=2.9in]{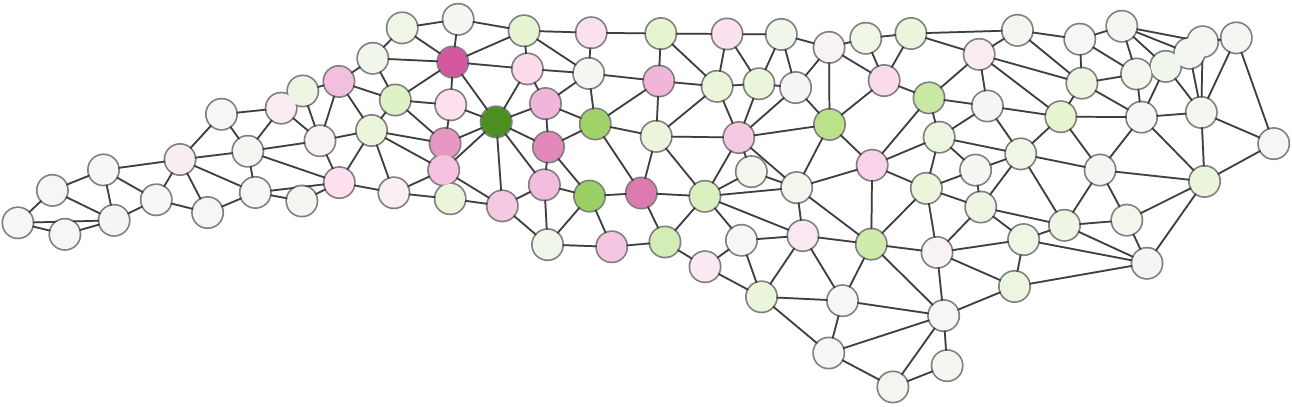}};
\node at (0,1.5) {$\I=-1/2$};
\node at (8,1.5) {$\I=-1/2$};
\end{tikzpicture}    
    
    \caption{The maximum $\I(\V;A)$ value on this graph is $\approx 1.1034$ and there are several eigenvectors yielding $\I>1$ with respect to $A$; in particular there is not a large spectral gap between $\lambda_{1}$ and $\lambda_{2}$. This means that visibly different cluster patterns can realize the same \I value, even when it is quite close to the maximum. 
    The two functions in the middle row both realize $\I(\V;A)=0$, but they are qualitatively rather different: the one on the left is close to spatially uncorrelated, while the one on the right is a linear combination of a cluster pattern with a localized checkerboard.
    The bottom row shows two functions with $\I(\V;A)=-.5$, reasonably close to the minimum of $-.5983$,  both exhibiting something like a ``localized checkerboard" pattern. }
    \label{fig:InterpretationOfI_NC}
\end{figure}

\subsection{Analysis When \texorpdfstring{$|\I|\gg 0$}{|I| Much Greater Than 0}}  We first consider $\I(\V;W)\gg 0$.  Since we can assume $\sum_{i=1}^{n-1}\alpha_{i}^{2}=1$, we have $\I(\V;W)=\sum_{i=1}^{n-1}\alpha_{i}^{2}\lambda_{i}.$  So, $\I(\V;W)$ is large iff most of the coefficient energy in $(\alpha_{1},\dots,\alpha_{n-1})$ is localized towards the lowest-indexed values (corresponding to largest eigenvalues). 

To study large $\I$, let us write $\beta_i=\alpha_i^2$; then we can express the zero-centered vectors with $\I\ge \lambda$ as 
$$X^{+}(\lambda)=\left\{\X=\sum_{i=1}^{n-1}\sqrt{\beta}_{i}\Proj \Phi_{i} \ \bigg| \ \sum_{i=1}^{n-1}\beta_{i}=1,\quad \beta_{i}\ge 0, \quad \sum_{i=1}^{n-1}\beta_{i}\lambda_{i}\ge\lambda\right\}.$$
In the generic case that $\lambda_1$ is simple, we have $X^{+}(\lambda_{1})=\{\Proj \Phi_{1}\}$.  
Clearly $X^{+}(\lambda)\subseteq X^{+}(\lambda')$ when $\lambda\ge \lambda'$ and $X^{+}(\lambda)=\emptyset$ for $\lambda>\lambda_{1}$.  

The expression for $X^{+}(\lambda)$ can be understood geometrically as a portion of the standard simplex $\sum\beta_i=1$, $\beta_i>0$ to one side of the hyperplane $\sum\beta_i\lambda_i=\lambda$.  In high dimensions, most of the mass of the simplex concentrates away from the vertices \cite{Paouris2006_Concentration}, which implies that the volume of $X^{+}(\lambda)$ is small for $\lambda\approx\lambda_{1}$.  In particular, if $\lambda$ is large, any $\X\in X^{+}(\lambda)$ will need to have a large portion of its coefficient energy coming from the $\Proj \Phi_{i}$ with largest eigenvalues.  In the case that there is a significant spectral gap ($\lambda_{1}\gg \lambda_{2})$, then the coefficient energy needs to localize on $\Proj \Phi_{1}$.

A consequence of the concentration of $X^{+}(\lambda)$ for $\lambda$ close to $\lambda_{1}$ is that distinct $\X,\X'\in X^{+}(\lambda)$ share certain \emph{qualitative} properties when $\lambda$ is close to $\lambda_{1}$.  Indeed, under mild assumptions (see \S \ref{sec:GraphFourierAnalysis}), the largest eigenvectors correspond to \emph{clustered} patterns in the data.  If there is a spectral gap, then a single clustered pattern must dominate for $\lambda$ large enough.  See Figure \ref{fig:InterpretationOfI_NC} for a visualization, showing conversely that with a small spectral gap, many different clustered patterns can achieve the same high $\I$ scores.

The same arguments apply to sublevel sets $X^{-}(\lambda)$ when $\lambda\approx\lambda_{n-1}$.  In the case of irregular graphs, the interpretation of the eigenvectors of $(\Proj W\Proj ,\Proj )$ with smallest eigenvalues is more difficult \cite{Desai1994_Characterization}, but one could qualitatively comment that they capture (localized) checkerboard patterns, as in Figure \ref{fig:InterpretationOfI_NC}.

\subsection{Analysis When \texorpdfstring{$|\I|\approx 0$}{I Near 0}}
\label{subsubsec:I_near0}

On the other hand, the $\X$ with $\I(W;\X)\in (-\epsilon,\epsilon)$  need not have any meaningful qualitative properties in common, even as $\epsilon\to 0^+$.  The constraint that $\I$ lie in $(-\epsilon,\epsilon)$ does not imply that the energy of the coefficients $(\alpha_{1},\alpha_{2},\dots,\alpha_{n-1})$ must localize on specific indices.  If $\I(\X; W)\approx 0$ then there must be coefficients that place energy on both positive and negative eigenvalues, so that $\X$ could show some cluster structure, some localized checkerboarding, or appear spatially uncorrelated (recall that the expected value of $\I(\cdot \ ; A)$ is $-\frac{1}{n-1}$ under a spatially uncorrelated random model, giving values near zero for large graphs).  See Figure \ref{fig:InterpretationOfI_NC}  for two qualitatively very different functions, both with $\I=0$.

\FloatBarrier
\section{Overview of Choices of \texorpdfstring{$W$}{W}}\label{sec:Empirical}

The choice of spatial weight matrix $W$ can have a major impact on the interpretability of $\I$.  We now overview and compare four choices of $W$ for use in Moran's $\I$.

\newpage

\begin{definition}[Alternative spatial weight matrices]
Given a graph $\G$, let $D$ be the diagonal matrix given by $D_{ii}$ as the $i$th vertex degree.
Then we will consider the following matrices.
\begin{itemize}
    \item $W=A$, the standard adjacency matrix;
    \item $W=P$, the row-standardization $P=D^{-1}A$;
    \item $W=L$, the unnormalized graph Laplacian $L=D-A$; \quad and
    \item $W=\MH$, the doubly-stochastic matrix defined by 
    $$\MH_{ij}=\begin{cases} A_{ij}/\max(D_{ii},D_{jj}) & i\neq j \\ 1-\sum_{k\neq i} M_{ik} & i=j \end{cases}
    $$
\end{itemize}
\end{definition}
The Laplacian is a ubiquitous choice of matrix associated to a graph that encodes its geometry and topology, so it is a natural choice here.  
In \S\ref{sec:WalksandI} below, we will motivate $\MH$ as a doubly-stochastic approximation to $A$, which will provide both nice numerical properties and an appealing random-walk interpretation.
To see that $\MH$ is indeed doubly stochastic, i.e., has rows and columns summing to one, note that the off-diagonals are defined symmetrically in $i$ and $j$. Since the diagonal entries are defined to make the rows sum to one, the columns must sum to one as well.

While $A$, $L$, and $\MH$ are symmetric, $P$ is not, but it can often be handled by similar techniques, as we saw in Remark~\ref{rmk:lagrange}.
Extremizing $\I$ amounts to solving a standard eigenvalue problem on $L$ and $\MH$, and a generalized eigenvalue problem on $A$ and $P$.  (See Corollary~\ref{cor:extremeI}.)

\begin{remark}[Vertex-regular case]\label{rem:AvL}
These spectra, and the \I scores, will differ in general across the choice of weight matrix.  However, in the case of $d$-regular graphs, $P$ is symmetric, and $P=\MH=\frac 1d A$.
The vertex degree matrix is $D=d\sdot I$, so that $L=d\sdot I-A$ and the spectra are related by $\mu_{i}=d-\lambda_{i}$ for each $i$.  In particular,  \[\I(\X;A)=\frac{1}{d}\cdot \frac{\X^{\T}A\X}{\X^{\T}\X}=\frac{1}{d}\cdot\frac{\X^{\T}(dI+A-dI)\X}{\X^{\T}\X}=1-\frac{1}{d}\cdot\frac{\X^{\T}L\X}{\X^{\T}\X}=1-2\cdot\I(\X;L),\] 
so the \I scores for $A,P,\MH$ are equal, and compare to $L$ by a precise affine relationship.
\end{remark} 

Though this relationship will not be exact for general graphs, it helps translate the conventional wisdom of anti-correlation, non-correlation, and clustering to $\I(\cdot \ ;L)$ values of roughly $1$, $1/2$, and $0$, respectively---in particular, lower values of $\I$ are more segregated when the spatial weight matrix is the Laplacian.

\subsection{Comparison on Families of Graphs}

In Sections \ref{sec:GraphFourierAnalysis} and \ref{sec:WalksandI} below, we explore the different mathematical connections and interpretations made possible by using $L$ or $\MH$ instead of the more traditional $A$ or $P$.  First, we empirically compare $\I(\cdot \ ; W)$ for the different spatial weight matrices.
We compare our four spatial weight matrices using a common nearly-regular graph: a hexagon with 16 vertices on each side, drawn in the hex lattice.   The empirical maximizers and minimizers are shown in Figure~\ref{fig:HexagonalLattice}.

\begin{figure}[ht]
    \centering
\begin{tikzpicture}[yscale=.9]

\node at (0,6) {\footnotesize $\I(\Phi_1^A; A)=.9396$}; 
\node at (3,6) {\footnotesize $\I(\V_{\text{max}}^A; A)=1.0265$}; 
\node at (6,6) {\footnotesize $\I(\V_{\text{max}}^P; P)=.9986$}; 
\node at (9,6) {\footnotesize $\I(\V_{\text{min}}^L; L)=.0022$}; 
\node at (12,6) {\footnotesize $\I(\V_{\text{max}}^M; M)=.9958$}; 

\node at (0,2) {\footnotesize $\I(\Phi_n^A; A)=-.5188$}; 
\node at (3,2) {\footnotesize $\I(\V_{\text{min}}^A; A)=-.5188$}; 
\node at (6,2) {\footnotesize $\I(\V_{\text{min}}^P; P)=-.5414$}; 
\node at (9,2) {\footnotesize $\I(\V_{\text{max}}^L; L)=.7817$}; 
\node at (12,2) {\footnotesize $\I(\V_{\text{min}}^M; M)=-.4964$}; 

\node at (0,4) {\includegraphics[width=1.1in]{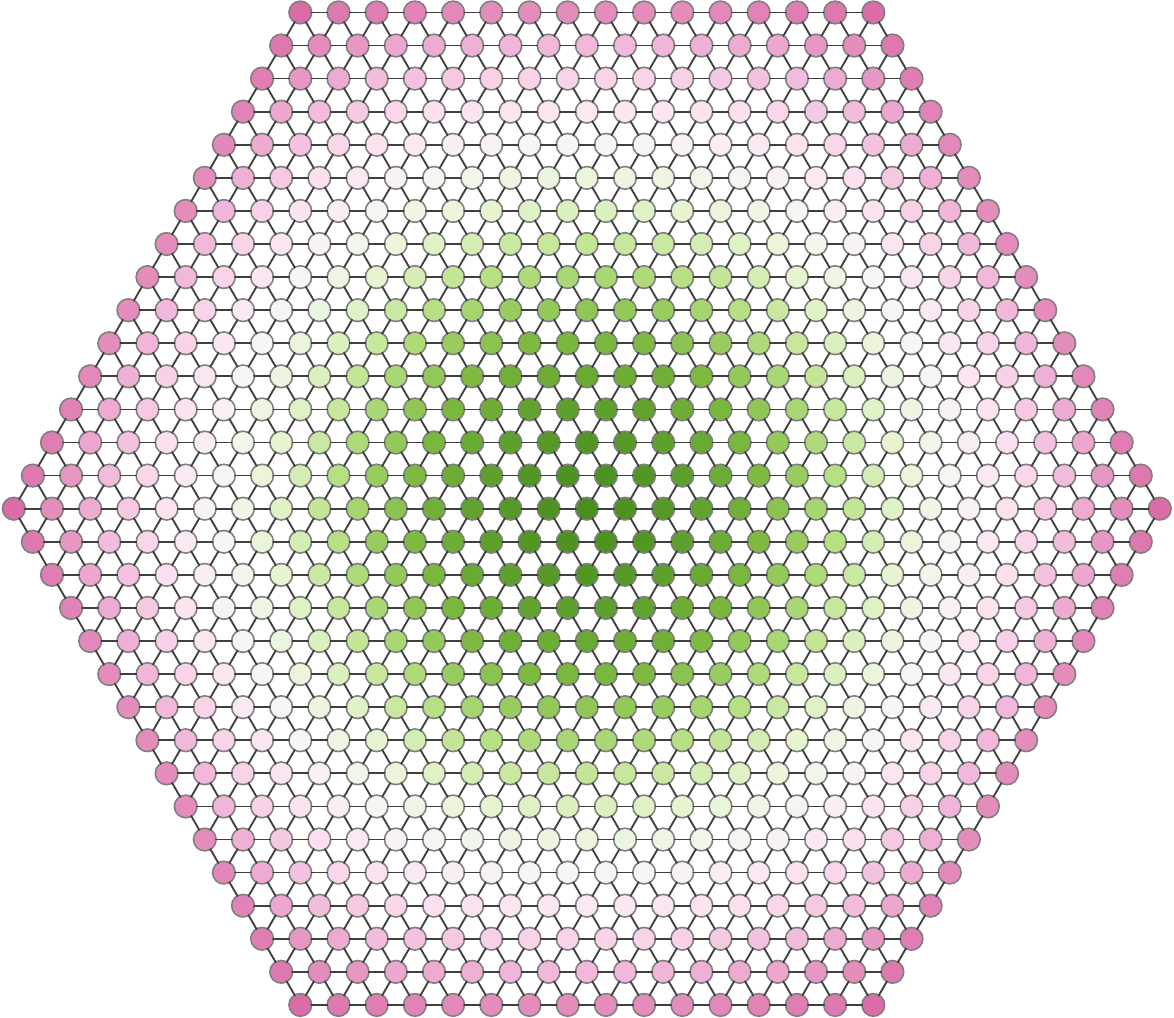}};
\node at (3,4) {\includegraphics[width=1.1in]{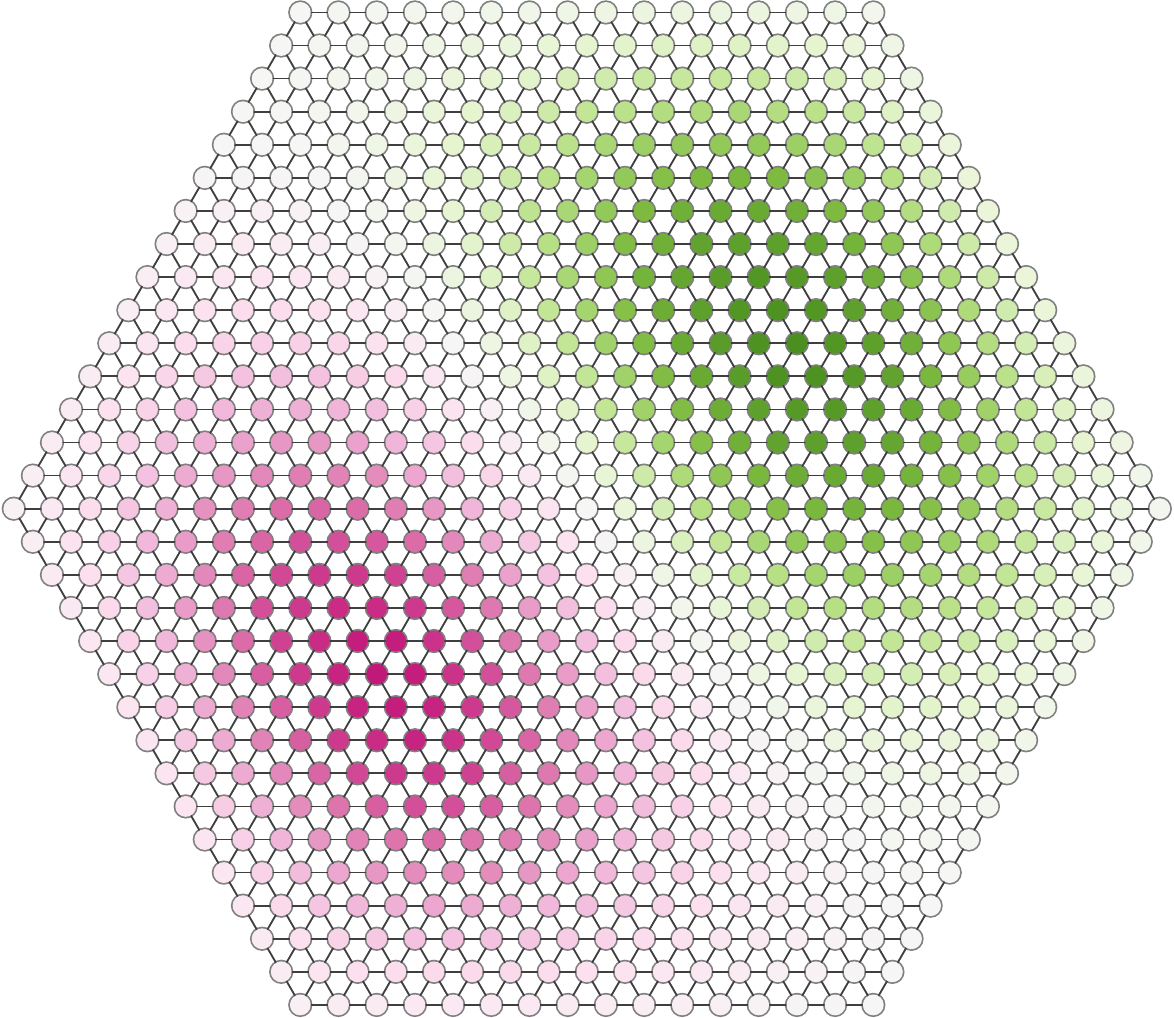}};
\node at (6,4) {\includegraphics[width=1.1in]{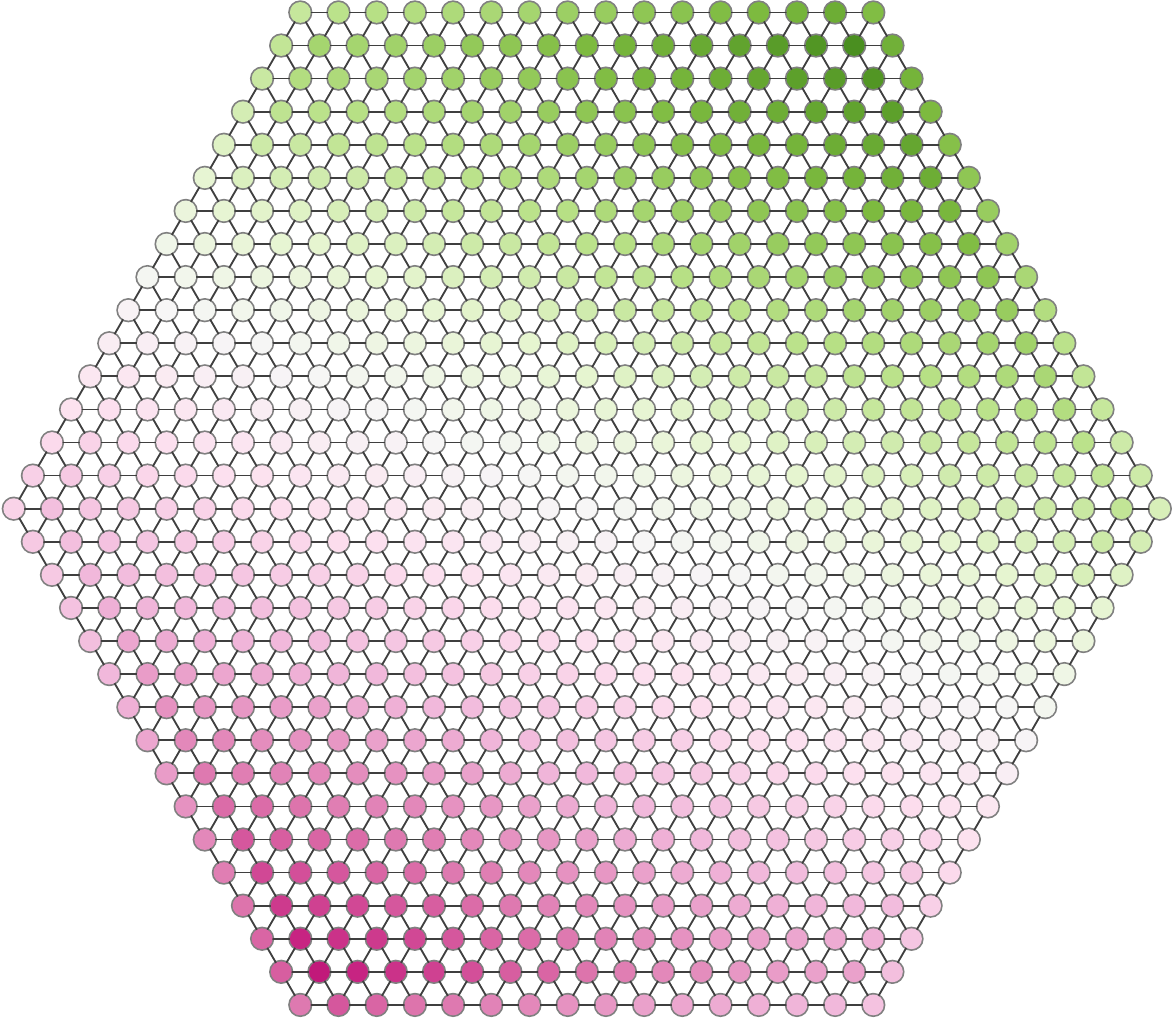}};
\node at (9,4) {\includegraphics[width=1.1in]{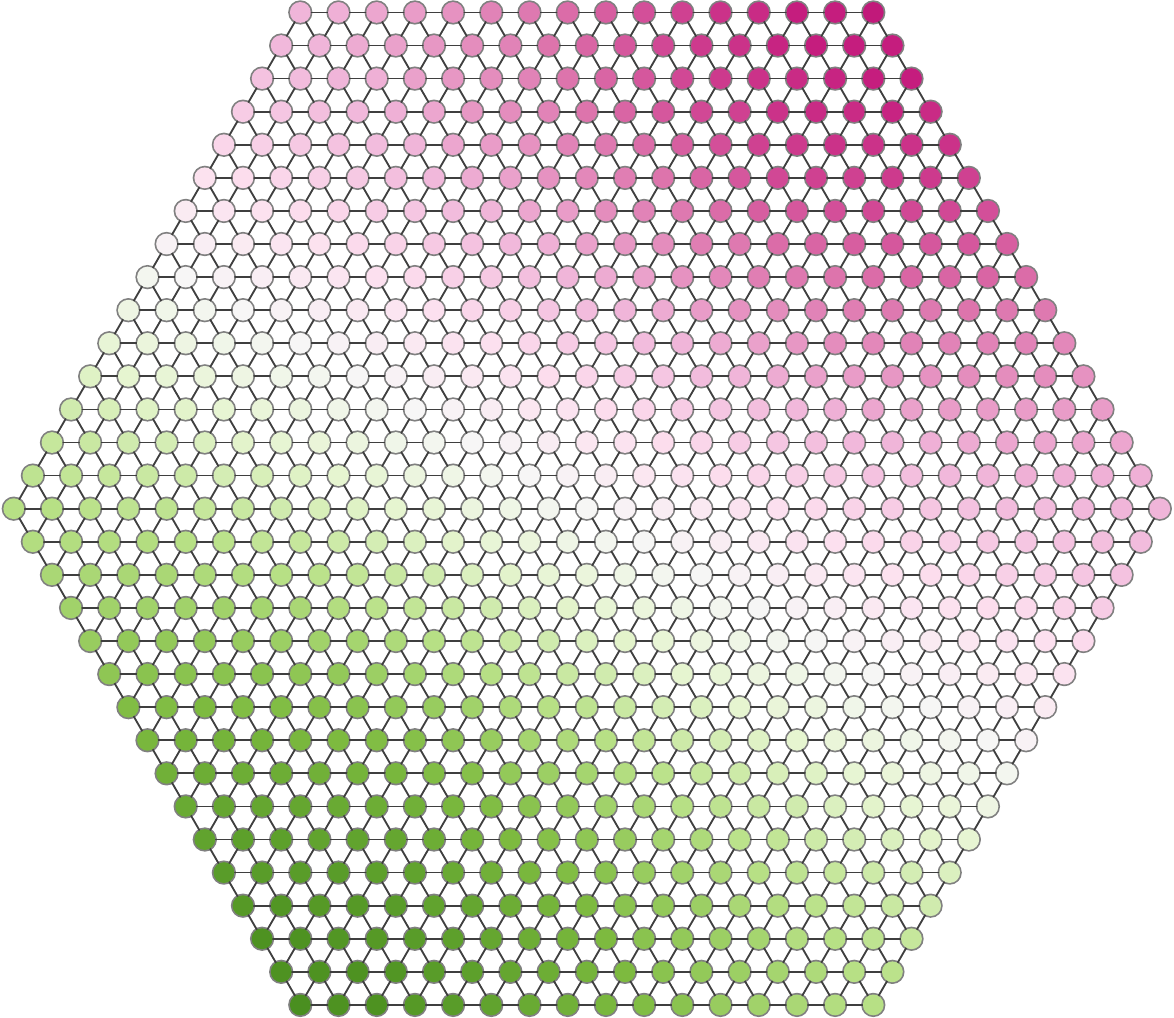}};
\node at (12,4) {\includegraphics[width=1.1in]{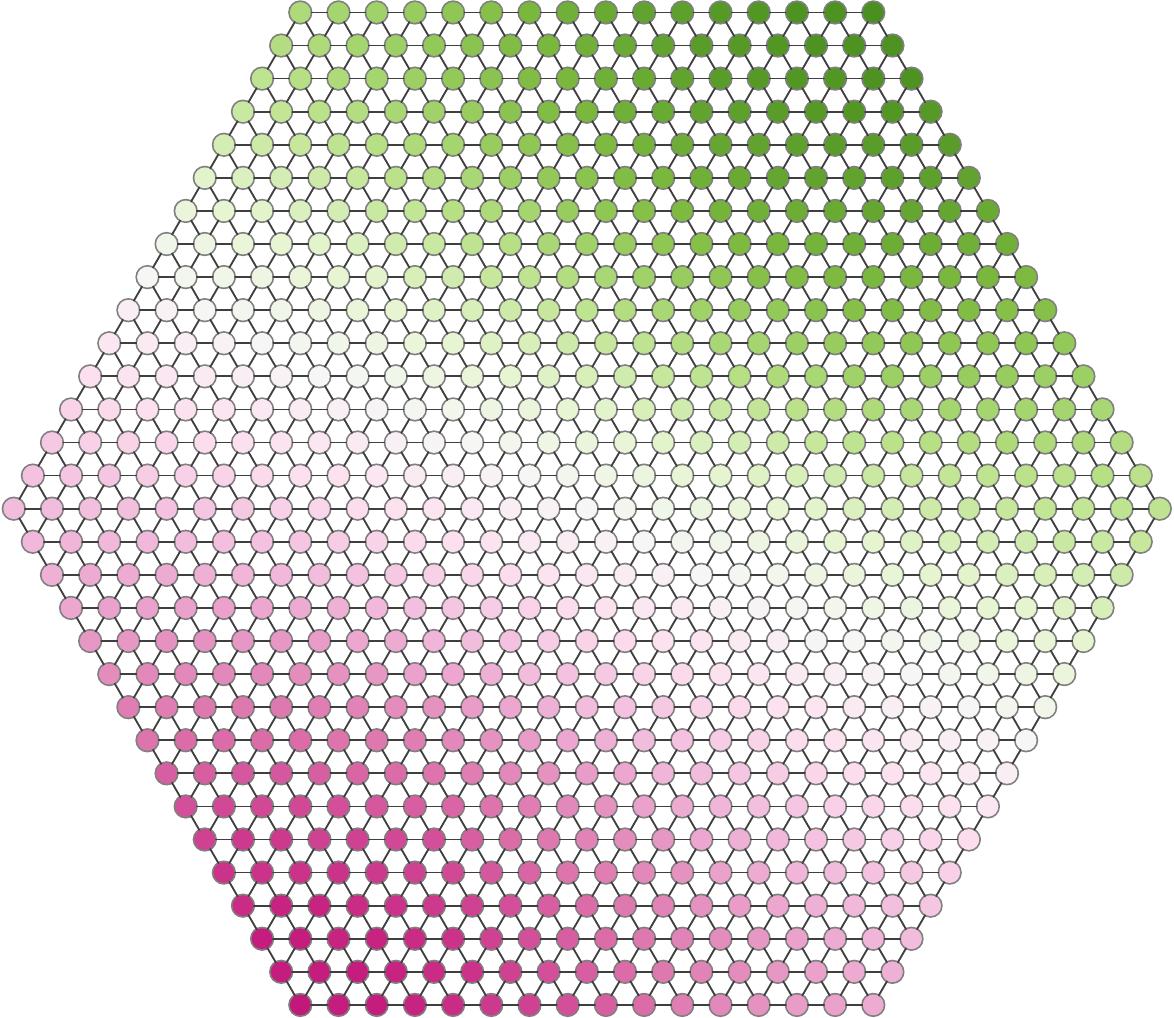}};

\node at (0,0) {\includegraphics[width=1.1in]{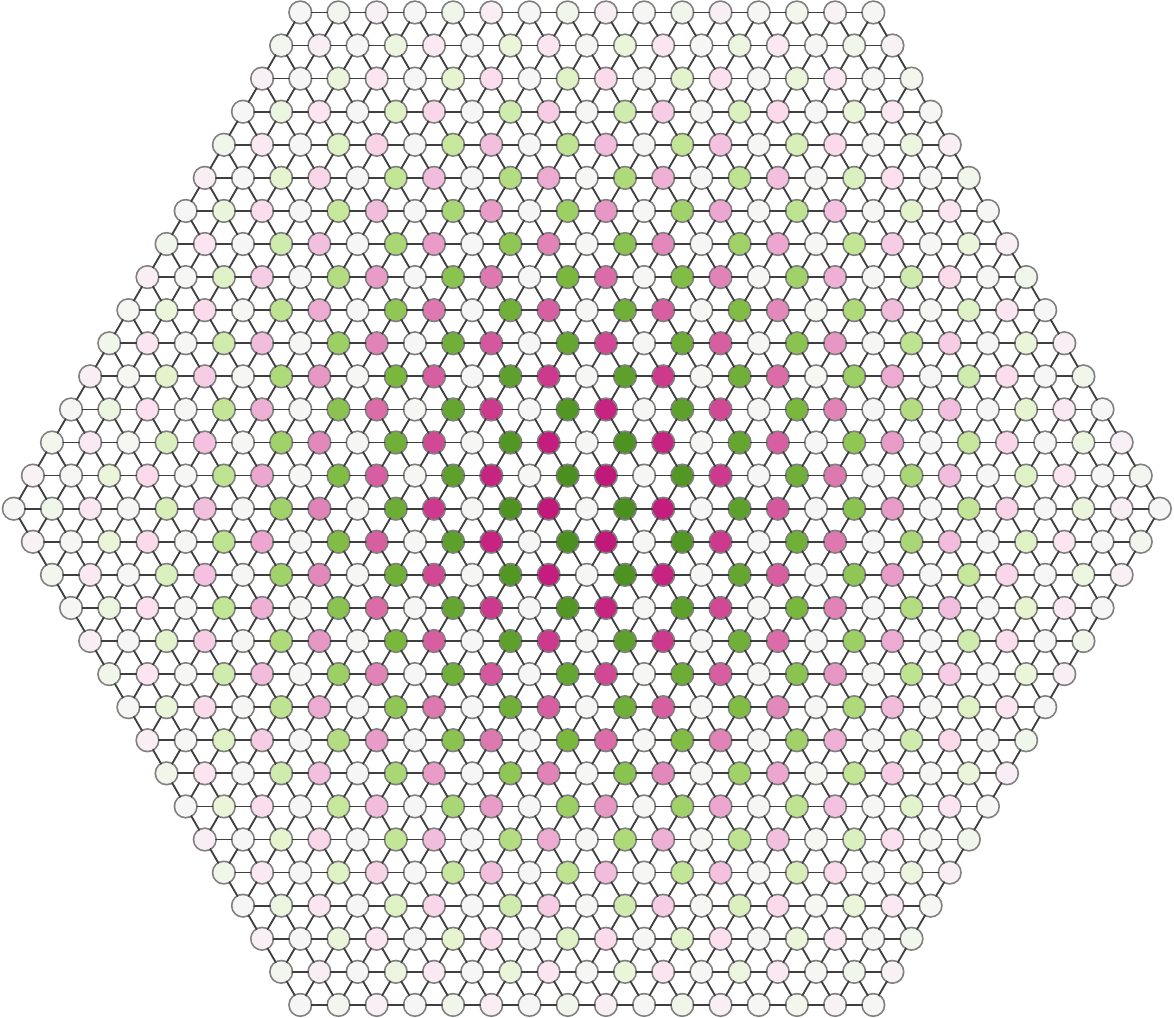}};
\node at (3,0) {\includegraphics[width=1.1in]{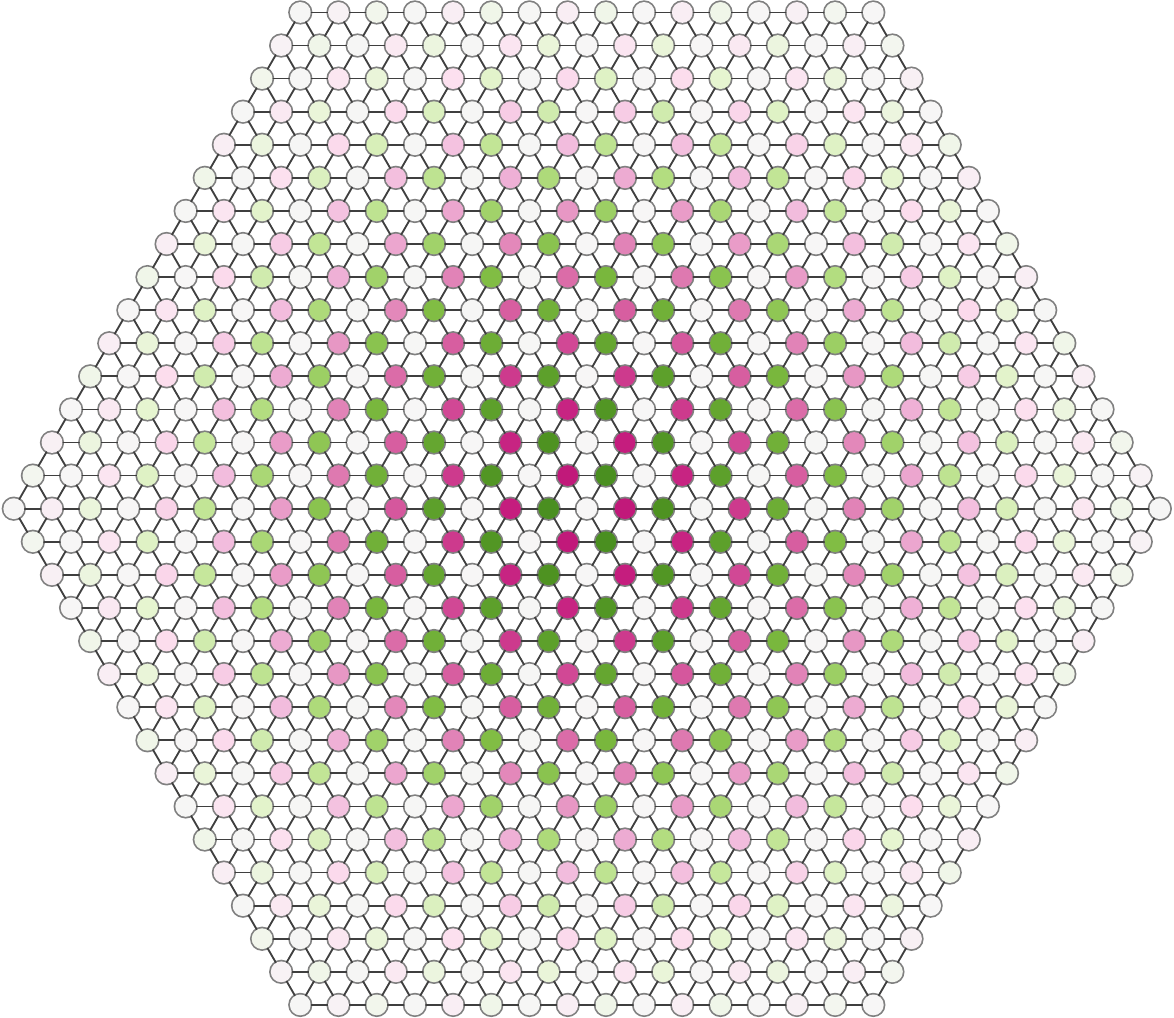}};
\node at (6,0) {\includegraphics[width=1.1in]{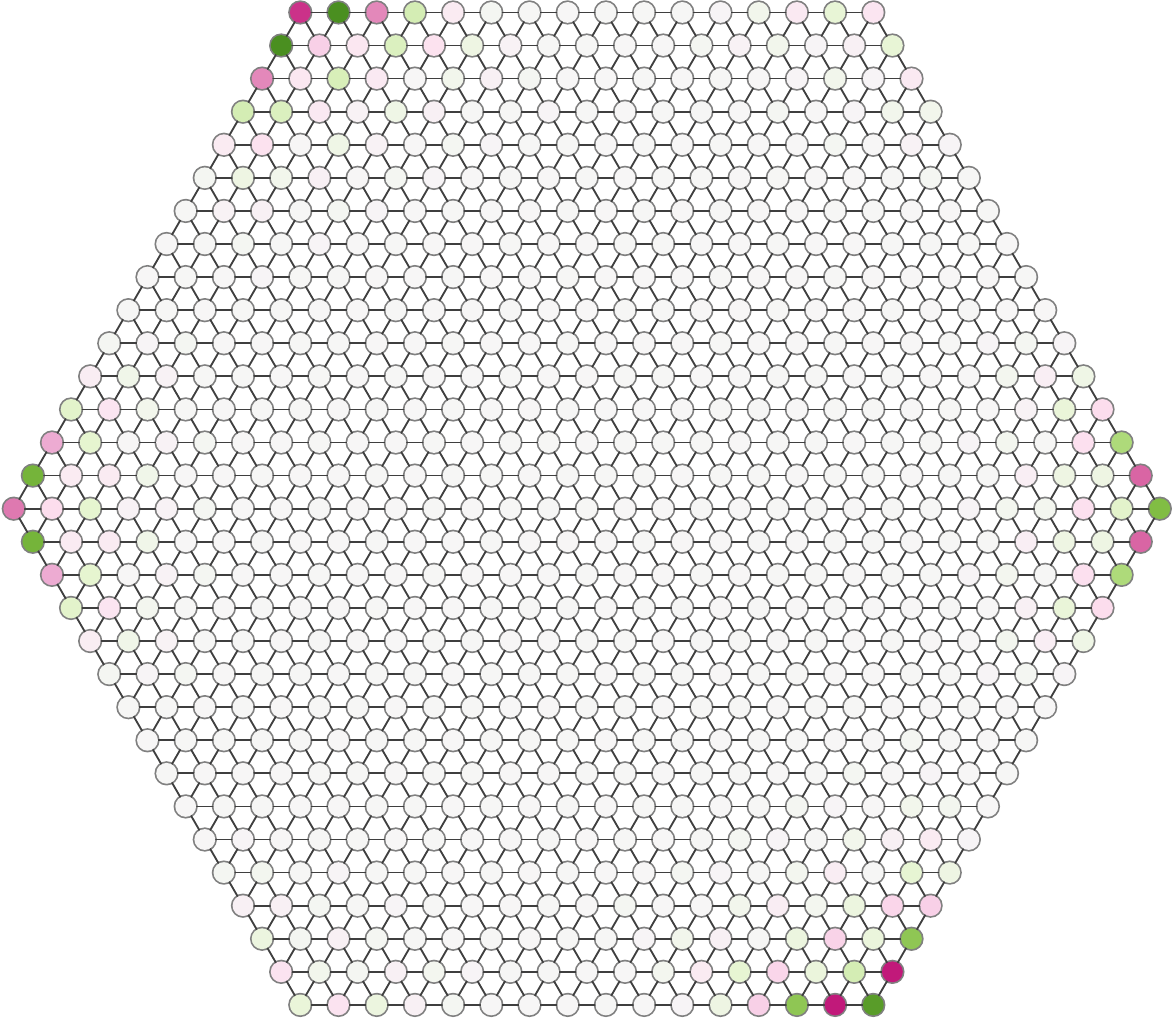}};
\node at (9,0) {\includegraphics[width=1.1in]{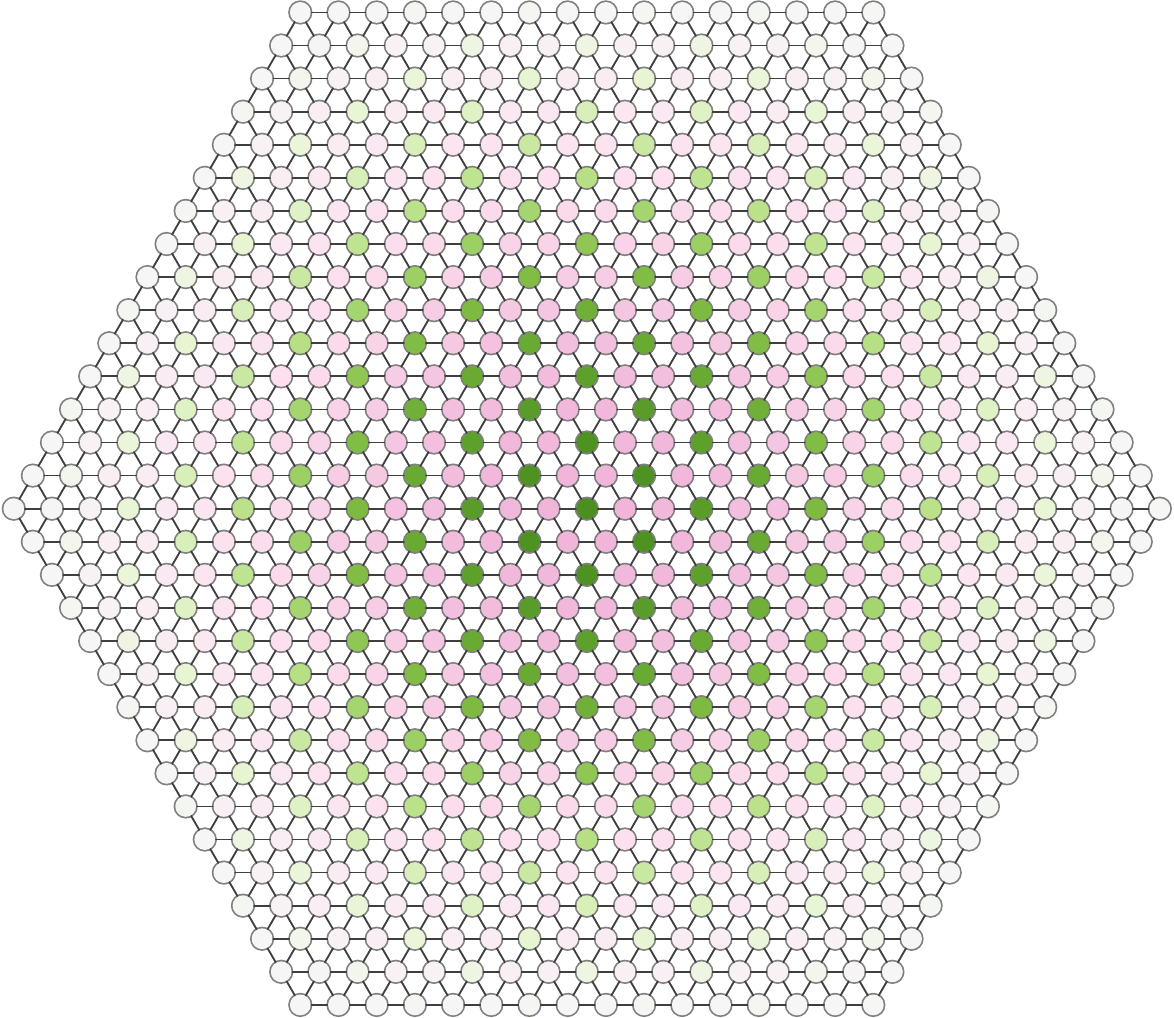}};
\node at (12,0) {\includegraphics[width=1.1in]{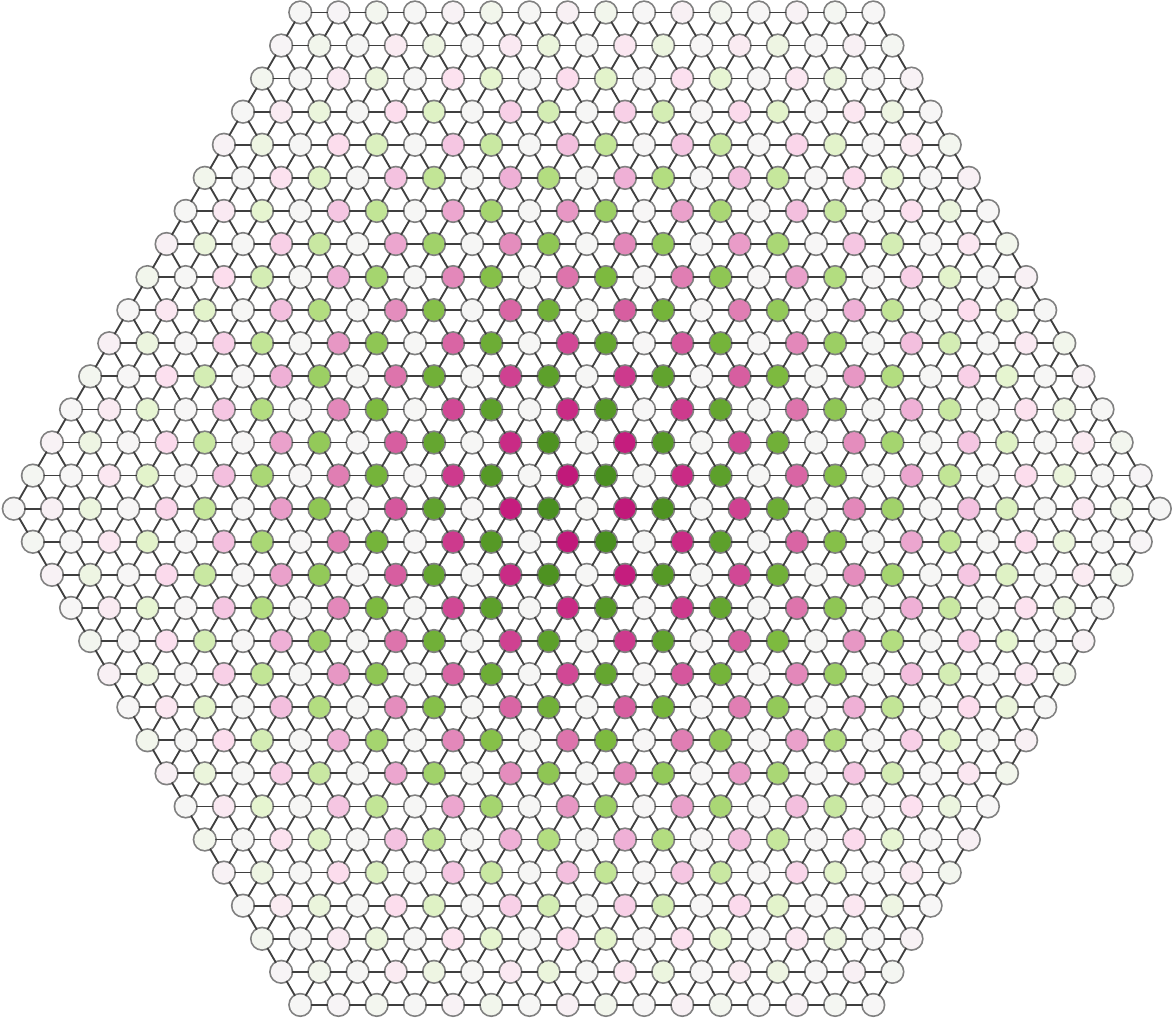}};

\end{tikzpicture}

    \caption{These hexagonal graphs are 6-regular except on the boundary.  Depending on the spatial weight matrix, the extremizers can differ, particularly in the extent to which the lower-degree vertices along the boundary are reflected in the pattern. (The classical eigenvectors of $A$ are shown to the left for comparison.)  In particular, the minimizer of $\I(\cdot \ ; L)$ and maximizer of $\I(\cdot \ ; \MH)$ are less impacted by low degree nodes than the maximizer of $\I(\cdot \ ; A)$.}
    \label{fig:HexagonalLattice}
\end{figure}

The leftmost pair of plots in Figure~\ref{fig:HexagonalLattice} shows the extreme eigenvectors of $A$, and we see that the lower degrees on the boundary have a visible effect on the pattern.  Passing to the true $A$ extremizer (via the generalized eigenvector) or row-normalizing to obtain $P$ might be thought to fix the degree effects, but this particularly fails with the minimizer of $P$.  It is also interesting to note that the maximizer for $P$ takes its strongest values on the second rung---adjacent to the vertices of lowest degree. 
Note that $M$ and $L$ both give the expected clustered configuration on one end of the \I range, but give two interestingly different approximations to checkerboards on the other.

\begin{figure}[ht] \centering
\begin{tikzpicture}[yscale=.9]
\begin{scope}[xscale=.5,yscale=2.5]
\draw (-.1,-1.2) rectangle (4.1,1.2);
\node at (2,1.4) {Hexagonal lattice};
\draw [rounded corners,denim,fill=denim,opacity=.7] (0.2, -0.53353) rectangle (.8,1.0211);
\draw [rounded corners,denim,fill=amber,opacity=.9] (1.2,-0.5423) rectangle (1.8,0.98617);
\draw [rounded corners,denim,fill=cherryblossompink,opacity=.9] (2.2,0.00096) rectangle (2.8,0.81403);
\draw [rounded corners,denim,fill=lust,opacity=.9] (3.2,-0.48496) rectangle (3.8,0.98181);
\foreach \x/\y in {-1/-1,-.5/,0/0,.5/,1/1}
{\draw [gray!50,thin] (-1,\x)--(23,\x);
\node at (-1,\x) [left] {$\y$};
}
\end{scope}

\begin{scope}[xscale=.5,yscale=2.5,xshift=6cm]
\draw (-.1,-1.2) rectangle (4.1,1.2);
\node at (2,1.4) {10\% deletions};
\draw [rounded corners,denim,fill=denim,opacity=.7] (0.2,-0.62855) rectangle (.8,1.0689);
\draw [rounded corners,denim,fill=amber,opacity=.9] (1.2,-0.73753) rectangle (1.8,0.9952);
\draw [rounded corners,denim,fill=cherryblossompink,opacity=.9] (2.2,0.0087107) rectangle (2.8,0.88445);
\draw [rounded corners,denim,fill=lust,opacity=.9] (3.2,-0.51705) rectangle (3.8,0.98469);
\end{scope}

\begin{scope}[xscale=.5,yscale=2.5,xshift=12cm]
\draw (-.1,-1.2) rectangle (4.1,1.2);
\node at (2,1.4) {20\% deletions};
\draw [rounded corners,denim,fill=denim,opacity=.7] (0.2,-0.71499) rectangle (.8,1.1186);
\draw [rounded corners,denim,fill=amber,opacity=.9] (1.2,-0.83636) rectangle (1.8,1.0068);
\draw [rounded corners,denim,fill=cherryblossompink,opacity=.9] (2.2,0.0076962) rectangle (2.8,0.96597);
\draw [rounded corners,denim,fill=lust,opacity=.9] (3.2,-0.58247) rectangle (3.8,0.98715);
\end{scope}

\begin{scope}[xscale=.5,yscale=2.5,xshift=18cm]
\draw (-.1,-1.2) rectangle (4.1,1.2);
\node at (2,1.4) {Square lattice};
\draw [rounded corners,denim,fill=denim,opacity=.7] (0.2,-1.0562) rectangle (.8,1.0161);
\draw [rounded corners,denim,fill=amber,opacity=.9] (1.2,-1.0021) rectangle (1.8,0.98672);
\draw [rounded corners,denim,fill=cherryblossompink,opacity=.9] (2.2,0.0078699) rectangle (2.8,1.0676);
\draw [rounded corners,denim,fill=lust,opacity=.9] (3.2,-0.9778) rectangle (3.8,0.98475);
\end{scope}

\begin{scope}[xshift=2cm,yshift=-5cm]
\draw [denim,fill=denim, opacity=.7] 
(0,0) rectangle  node [black,opacity=1,above=.6cm] {$A$} (1,1);

\draw [denim,fill=amber, opacity=.9] 
(2,0) rectangle node [black,opacity=1,above=.6cm] {$P$}(3,1);

\draw [denim,fill=cherryblossompink, opacity=.9] 
(4,0) rectangle node [black,opacity=1,above=.6cm] {$L$} (5,1);

\draw [denim,fill=lust, opacity=.9] 
(6,0) rectangle  node [black,opacity=1,above=.6cm] {$\MH$} (7,1);
\end{scope}
\end{tikzpicture}
\caption{Ranges of possible $\I(\cdot \ ; W)$ values for $W=A,P,L,\MH$.  We compare graphs on 169 nodes formed within the hexagonal lattice and the square lattice, and we use random edge deletions to interpolate between them and introduce degree variation, as explained in the text.  We can observe that $L$ gives non-negative $\I$ values, and that only $\MH$ gives values always between $-1$ and $1$.}\label{fig:ranges}
\end{figure}
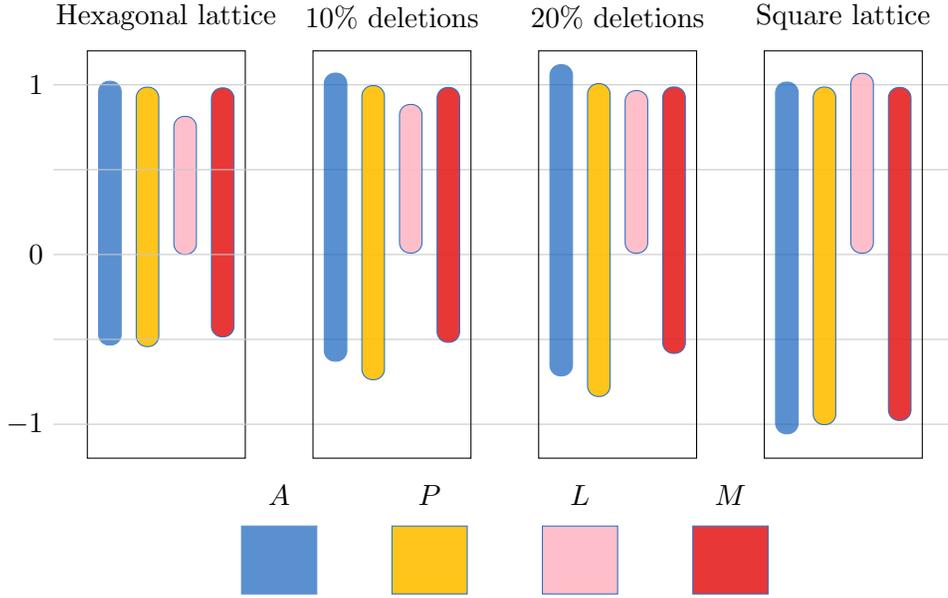

Next, Figure~\ref{fig:ranges} shows the numerically computed ranges of achievable $\I$ for $W=A,P,L,\MH$ on various planar graphs.  First, we take a subset of the hexagonal (triangular) lattice, formed as a hexagon with 8 vertices along each side (a smaller version of the graph in Figure~\ref{fig:HexagonalLattice}). From the square lattice, we use a square with 13 vertices along a side, so that it agrees with the hexagon in having 169 vertices.  Note that the hexagonal lattice can also be viewed, combinatorially, as the square lattice with diagonals added.  Therefore we can produce graphs that in a sense interpolate between the two lattices by deleting edges from the hexagonal lattice.  We choose to do so at random in order to also introduce more variation in vertex degree.\footnote{To be precise, we order the edges, then randomly select 10\% (or 20\%, respectively) for deletion, rejecting the final product if it is disconnected.
Over 1000 successful trials, we then report the average of the numerical minimizer and of the numerical maximizer of \I.  See supplementary materials for more information.}

This example makes it clear that even quite ``reasonable" planar graphs, when they are irregular, can realize $\I$ values outside of $[-1,1]$.  In the case when the graph is nearly regular bipartite (such as a large square grid graph), maximal and minimal values of Moran's \I can be seen to converge to $\pm 1$, respectively, as the number of nodes in the graph increases and the variance in degree converges to 0. On the other hand, one can construct graphs that realize arbitrarily large and small  \I as the degree disparity gets large.

To see this, consider a double-star graph (Figure~\ref{fig:TwinStarGraph}) where each hub is connected to $n$ leaves.  Consider a function $\V_{a}$ that takes the values $\pm 1$ on the hubs and $\pm 1/a$ on the leaves; note $\V_{a}$ has average value 0. For arbitrary fixed $a\neq 0$, as $n$ gets large, the average product across an edge is nearly $1/a$ while the average squared value at a vertex is nearly $1/a^2$.  This means that $\I(\V_{a};A)\to a$ as $n\to \infty$ (by Remark~\ref{rmk:pairs-v-single}).  This construction works for both positive and negative values of $a$.  Interestingly, however, putting $0$ on the leaves (denoted by $\V_{\infty}$) gives different limiting behavior, with $\I(\V_{\infty};A)\to -1/4$ as $n\rightarrow\infty$.  Normalizing $A$ to be row-stochastic does not solve this problem; if we use $P$, we get $\I(\V_{a};P)\to a$ as $n\rightarrow\infty$, while putting 0 on the leaves gives $\I(\V_{\infty};P)\to 0$.

\begin{figure}[ht]    \centering
\begin{tikzpicture}[scale=2]
\draw (0,0)--(1,0);
\draw [fill=black] (0,0) circle (0.03) node [above right] {$-1$};
\draw [fill=black] (1,0) circle (0.03) node [above left] {$+1$};
\node at (1.8,1) {$1/a$};
\node at (-.8,1) {$-1/a$};

 \foreach \x in {90,100,...,270}
{\draw [fill=black] (\x:1) circle (0.03);
\draw (\x:1)--(0,0);
}
\begin{scope}[xshift=1cm]
 \foreach \x in {90,80,...,-90}
{\draw [fill=black] (\x:1) circle (0.03);
\draw (\x:1)--(0,0);
}
\end{scope}

\node at (4.3,.9) {limit as $n\to\infty$};
\node at (4.3,0) 
{
$\begin{array}{cccc} \I(\V_a; A)& \I(\V_a;P)&\I(\V_a; L)&\I(\V_a; \MH)\\
a&a&\frac 14 (a-1)^2&1\\
&&&\\
 \I(\V_\infty; A)& \I(\V_\infty; P)&\I(\V_\infty; L)&\I(\V_\infty;\MH)\\
-\frac{1}{2} &0&\infty &0\\
\end{array}$
};
\end{tikzpicture}
\caption{The family of double-star graphs is depicted here to emphasize the major impact of having very uneven vertex degrees in a graph.  
Letting $\V_a$ take the value $\pm 1/a$ on leaves and $\pm 1$ on hubs, as indicated on the graph above, gives $\I(\V_a; A)$ values approaching any arbitrary $a$ as the number of leaves $n\to\infty$. 
In particular, this illustrates that all real values of $\I(\V;A)$ are achievable for $W=A,P$ and all non-negative values are achievable for $W=L$.
That is, passing to row-normalized $P$ does not mitigate the degree effect, and the problem is even more pronounced for the Laplacian (see \S\ref{sec:GraphFourierAnalysis}). Only the use of the doubly-stochastic approximation $\MH$ keeps $\I$ bounded (see \S\ref{sec:WalksandI}).}
    \label{fig:TwinStarGraph}
\end{figure}
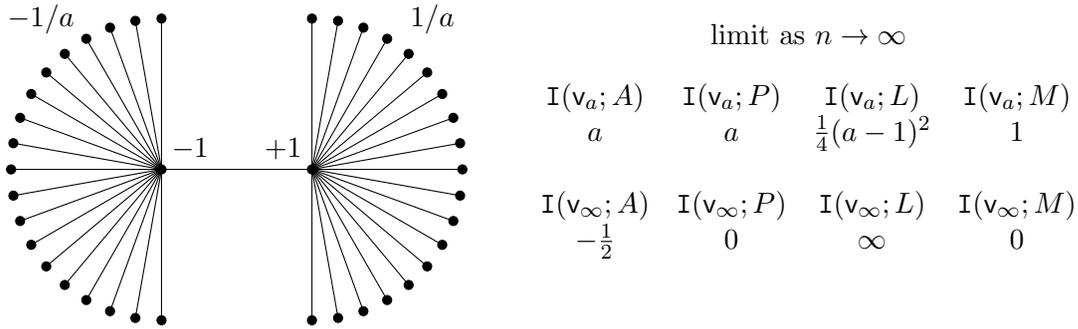

This double-star example is designed to exaggerate the phenomenon that causes \I to explode, but degree effects of this kind are reflected in the real-world examples below: when there are adjacent nodes of relatively high  vertex degree, extreme values of $\I$ can be obtained by placing positive and negative values on those nodes, and near-zero values everywhere else (see Figure~\ref{fig:LowAndHighFrequency}, lower right).

\FloatBarrier

\subsection{Correlation on Realistic Examples}

Next, we confirm that, despite significant differences observable in theory, the choices of $W$ give  outputs that are fairly tightly correlated on real-world 
dual graphs $\G$ and population functions $\V$.
In particular, we consider the spatial weight matrices $A$, $L$ and $\MH$ applied to Black and Hispanic population in the census tracts of all fifty states. Since the underlying graphs are not regular, we know of no theoretical relationship between the \I values when we change the matrix $W$. Despite this, Figure~\ref{fig:IvsI} shows strong correlations. The supplemental material contains a more extensive pairwise comparison among all four choices of weight matrix.\footnote{Census tract graphs for all states (based on 2010 census geography) were obtained from \cite{daryldatabase}. For tracts with zero population, we define $\V_i$ using the average population values (Hispanic, Black and total) of the neighboring tracts.}

\begin{figure}[ht]
\centering
\begin{tikzpicture}[scale=.9]
\begin{scope}[xshift=0cm,yshift=6.5cm]
\node at (.5,-3.25) {$\I(Hisp ; A)$};
\node at (-3.25,.5) [rotate=90] {$\I(Hisp ; \MH)$};
\node at (0,0) {\includegraphics[width=5.5cm]{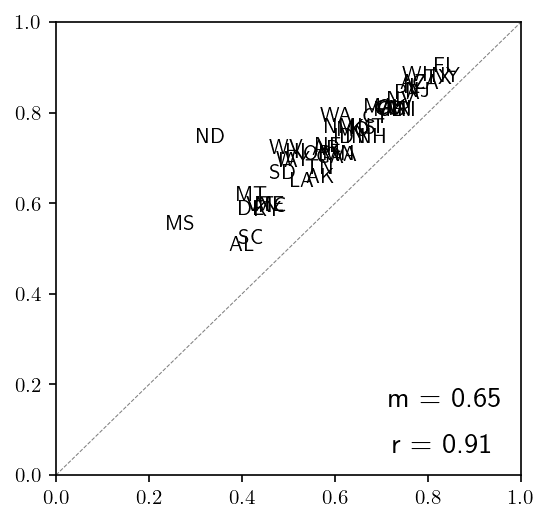}};
\end{scope}
\begin{scope}[xshift=7cm,yshift=6.5cm]
\node at (.5,-3.25) {$\I(Hisp ; A)$};
\node at (-3.25,.5) [rotate=90] {$\I(Hisp ; L)$};
\node at (0,0) {\includegraphics[width=5.5cm]{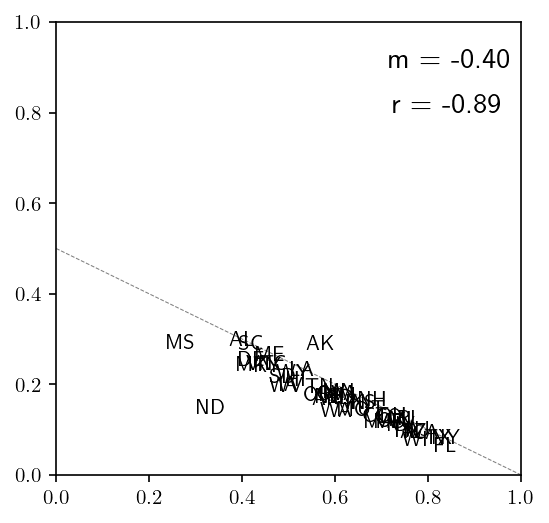}};
\end{scope}

\begin{scope}[xshift=0cm,yshift=0cm]
\node at (.5,-3.25) {$\I(Black ; A)$};
\node at (-3.25,.5) [rotate=90] {$\I(Black ; \MH)$};
\node at (0,0) {\includegraphics[width=5.5cm]{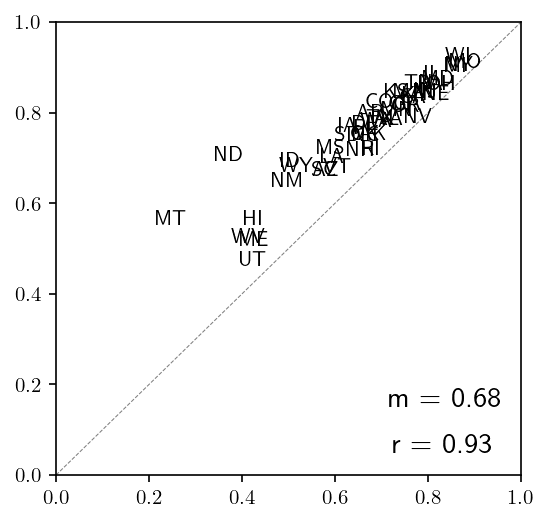}};
\end{scope}
\begin{scope}[xshift=7cm,yshift=0cm]
\node at (.5,-3.25) {$\I(Black ; A)$};
\node at (-3.25,.5) [rotate=90] {$\I(Black ; L)$};
\node at (0,0) {\includegraphics[width=5.5cm]{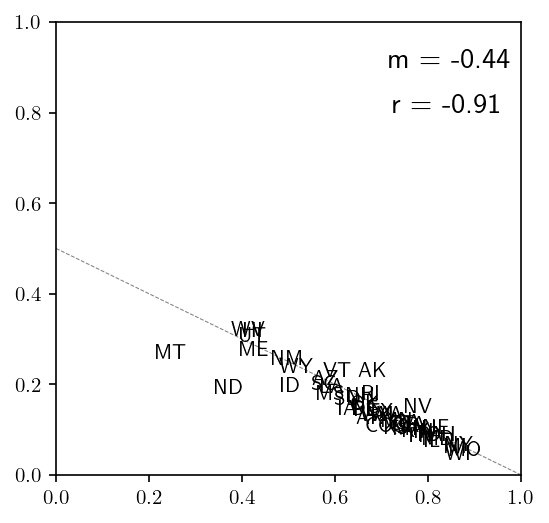}};
\end{scope}

\end{tikzpicture}
    \caption{Comparing $\I$ for the matrices $A,L,\MH$ on 50-state census tract data for Hispanic and Black population shares. The line $y=x$ is shown for the $A$ versus $\MH$ comparison, and the line $y=(1-x)/2$ is shown for comparing $A$ to $L$, because the data points would have to fall on these lines if the graphs were regular. The correlation $r$ and slope of the best fit line $m$ are reported for each plot, but those fit lines are not plotted; their slopes are not equal to $m=1$ and $m=-1/2$, indicating different ways of handling degree disparity.}
    \label{fig:IvsI}
\end{figure}

It is interesting to look at states for which $\I(\V;A)$ and $\I(\V;\MH)$ differ substantially. In order to localize the source of the disparity, we employ Anselin's definition of local Moran's \I at the $i$th vertex \cite{anselin1995local}, which looks just like the standard definition except the numerator only considers neighbors of $i$.
\[
\I_i(\V; W) := \left(n \displaystyle\sum_{j=1}^{n} W_{ij} (v_i - \vbar)(v_j -\vbar)\right) \bigg /\left(w \displaystyle\sum_{j=1}^{n}(v_j-\vbar)^2\right).
\]
We can then study the contribution of tract $i$ towards the difference by defining
$D_i(\V; A, \MH) := | \I_i(\V; A)  - \I_i(\V; \MH)|$,
motivated by the fact that $\I(\V; A)$ and $\I(\V; \MH)$ agree for regular graphs.
Nodes with much higher $\V$ values than their neighbors will tend to have high $D_i(\V; A, \MH)$ values since $A$ does not have diagonal entries, while $\MH$ does. Also, nodes with low $A$-degrees tend to have higher diagonal entries in $\MH$, thus higher $D_i$.  In Figure~\ref{fig:IvsI} a handful of states---North Dakota, Montana and Mississippi, especially---stand out as having a large discrepancy between $\I(\V; A)$ and $\I(\V; \MH)$ for one or both $\V$. When we localize to the tracts that contribute most to the disparity (Table~\ref{tab:military}), we find, as expected, nodes with low $A$-degree (typically 1 or 2), with concentrations of the minority group that are typically 5-10 times greater than the share in the state overall, and than the share in the neighboring nodes.  

\begin{table}[ht!]
    \centering
\resizebox{\textwidth}{!}{
    \begin{tabular}{|c|c|c|c|l|c|c|c|c|}
    \hline
        State & $\V$ & $\vbar$ & avg $D_i$ & Tracts with highest $D_i$ & $\V_i$ & $(P\V)_i$ & $d_i$ & $D_i$  \\
        \hline
        ND & Hisp & 0.020 & 0.0004 & Grafton & 0.214 & 0.100 & 2 & 0.024 \\
        & & & & Minot Air Force Base & 0.099 & 0.012 & 1 & 0.009 \\
        & & & & Grand Forks Air Force Base & 0.100 & 0.024 & 2 & 0.006 \\
        \hline
        MS & Hisp & 0.028 & 0.0006 & Morton & 0.269 & 0.049 & 1 & 0.061 \\
        & & & & Forest & 0.268 & 0.044 & 2 & 0.049 \\
        & & & & Key Field Air National Guard Base & 0.238 & 0.025 & 2 & 0.038 \\
         \hline
        ND & Black & 0.009 & 0.0001 & Minot Air Force Base & 0.096 & 0.004 & 1 & 0.008 \\
        & & & & Grand Forks Air Force Base & 0.095 & 0.006 & 2 & 0.006 \\
        & & & & Fargo & 0.053 & 0.0013 & 9 & 0.001 \\
                \hline
        MT & Black & 0.004 & 1.85e-5 & Malmstrom Air Force Base & 0.086 & 0.015 & 4 & 0.0025 \\
        & & & & Crossroads Correctional Center & 0.026 & 0.003 & 1 & 0.0006 \\
        & & & & Yellowstone National Park & 0.030 & 0.002 & 2 & 0.0005 \\
        \hline
    \end{tabular}
    }
    \caption{Examining the tracts that contribute most to differences between $\I(\V; A)$ and $\I(\V; M)$. We identify the top three $D_i(\V;A,M)$ values and for those tracts we report the  share of Hispanic or Black population ($\V_i$), the average share in neighboring tracts ($(P\V)_i$) and the vertex degree $d_i$. For each state, we also report the average $\V$ value and the average value of $D_i$.}\label{tab:military}
    \label{tab:outliers}
\end{table}

The definition of $D_i(\V; A, \MH)$ was motivated by the fact that $\I(\V; A)$ and $\I(\V; \MH)$ agree for regular graphs. We could similarly compare $\I(\V; A)$ and $\I(\V; L)$ using the theoretical relationship for regular graphs given in Remark \ref{rem:AvL}. We note that ND, MS and MT remain noticeable outliers in the $A$ vs. $L$ comparisons in Figure~\ref{fig:IvsI}.

\FloatBarrier

\section{Laplacian Weights}
\label{sec:GraphFourierAnalysis}
As noted above, the Laplacian $L$ is a very natural choice for matrix-based analysis of a network. The Laplacian has been closely connected with the topic of {\em community detection} in networks, especially when potential communities are of different sizes \cite{newman2018networks,Von2007_Tutorial}.  The Laplacian also has a rich theoretical interpretation in terms of relaxed graph cuts---which lends itself well to measurements of clustering---as well as to notions of smoothness on graphs via Dirichlet energy functionals. 
Let $\Tor$ be the $m$-dimensional torus ($[0,2\pi]^{m}$ with opposite boundary faces identified) and recall the $L^2$ inner product of $f,g\in L^{2}(\Tor)$, defined as $\langle f,g\rangle:=\displaystyle\frac{1}{(2\pi)^m}\int_{\Tor}f(x)\overline{g(x)}dx$. 

\subsection{High- and Low-Frequency Eigenvectors}

By construction, $L$ is symmetric and positive semidefinite, and therefore has a basis of orthonormal eigenvectors $\{\Psi_{i}\}_{i=1}^{n}$ with associated real eigenvalues $0=\mu_1\le \dots \le \mu_n$.  Since $L\one=\zero$, we have $\Psi_{1}=\frac{1}{\sqrt{n}}\one, \ \mu_{1}=0$.  These may be interpreted in the framework of Fourier analysis, in which the eigenvectors $\{\Psi_{i}\}_{i=1}^{n}$ of $L$ are the Fourier modes on the graph  $\G$ with frequencies $\{\mu_{i}\}_{i=1}^{n}\subsetneq [0,\infty)$.  In classical Fourier analysis on $\Torone$, the Laplacian operator $\Lap: f\mapsto-\Delta f= -\nabla \cdot \nabla f = -\frac{d^{2}}{d x^{2}}f$ has eigenfunctions $\{\exp(-ik x)\}_{k=-\infty}^{\infty}$ with corresponding eigenvalues $\{k^{2}\}_{k=-\infty}^{\infty}$, which can be organized from \emph{low frequency} ($|k|$ small) to \emph{high frequency} ($|k|$ large).  There is a well-developed literature interpreting the graph Laplacian $L$ as a discretization of the continuum differential operator $\Lap$ \cite{Belkin2006_Convergence, Trillos2018_Variational, Trillos2020_Error}.  Using analogous language, we can say that $\Psi_{2}$ is the lowest-frequency non-constant eigenvector, while $\Psi_{n}$  is the highest-frequency eigenvector.  (There is such a large literature on $\Psi_2$ that it has its own name: the {\em Fiedler vector}.)
As noted in Corollary~\ref{cor:extremeI}, these are the extremizers of $\I(\cdot \ ;L)$.

\begin{figure}[ht]
    \centering
\begin{tikzpicture}
\node at (0,4) {\includegraphics[width=1.3in]{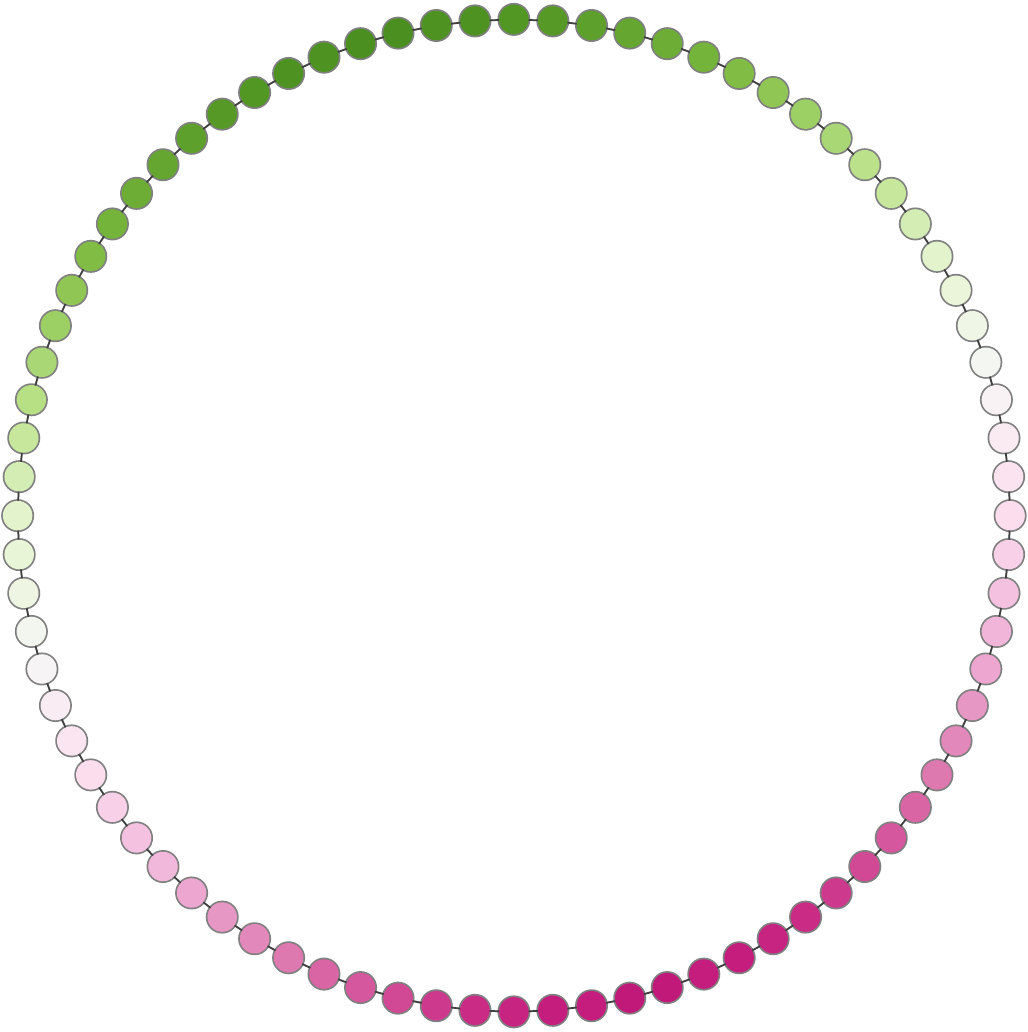}};

\node at (4.4,4) {\includegraphics[width=1.9in]{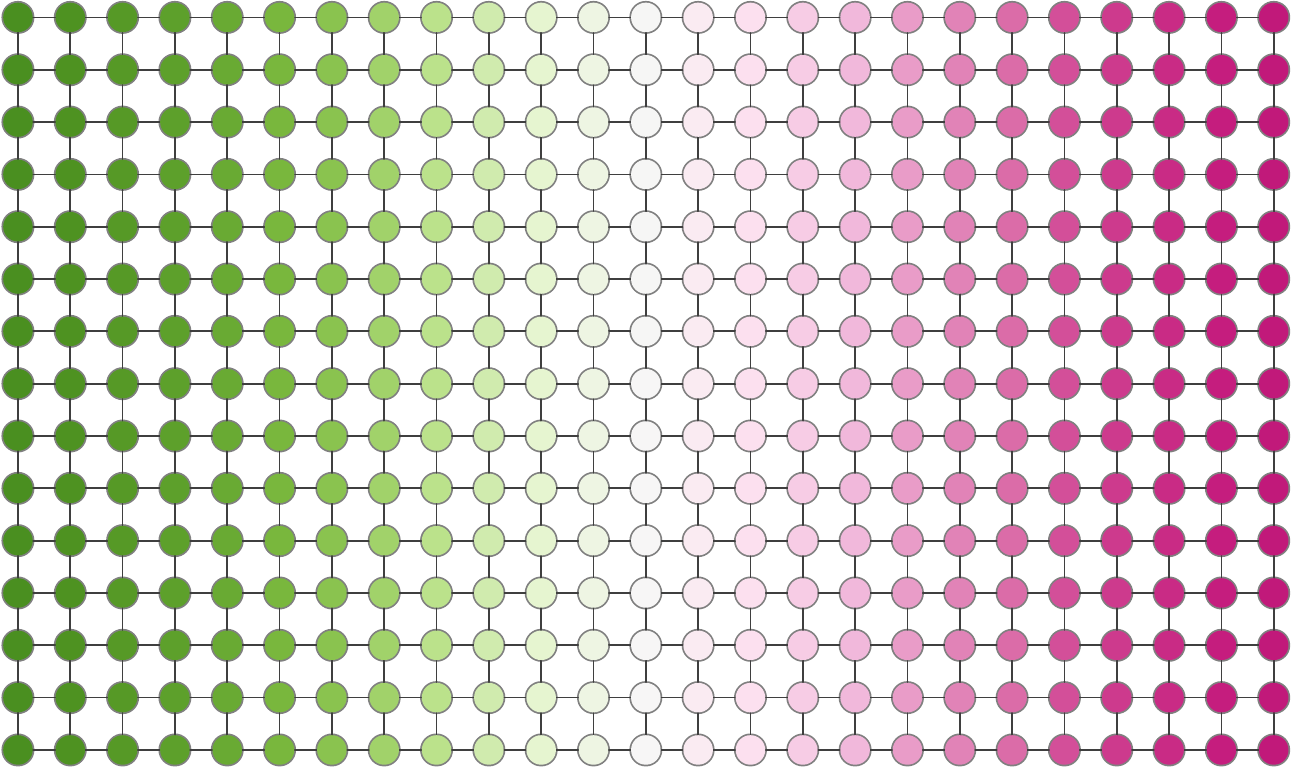}};

\node at (10.2,4) {\includegraphics[width=2.4in]{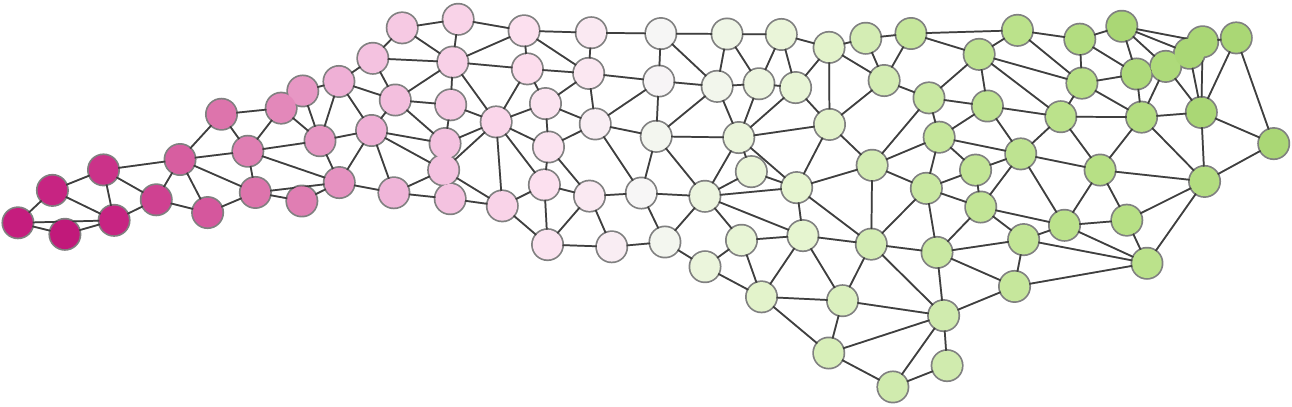}};

\node at (0,0) {\includegraphics[width=1.3in]{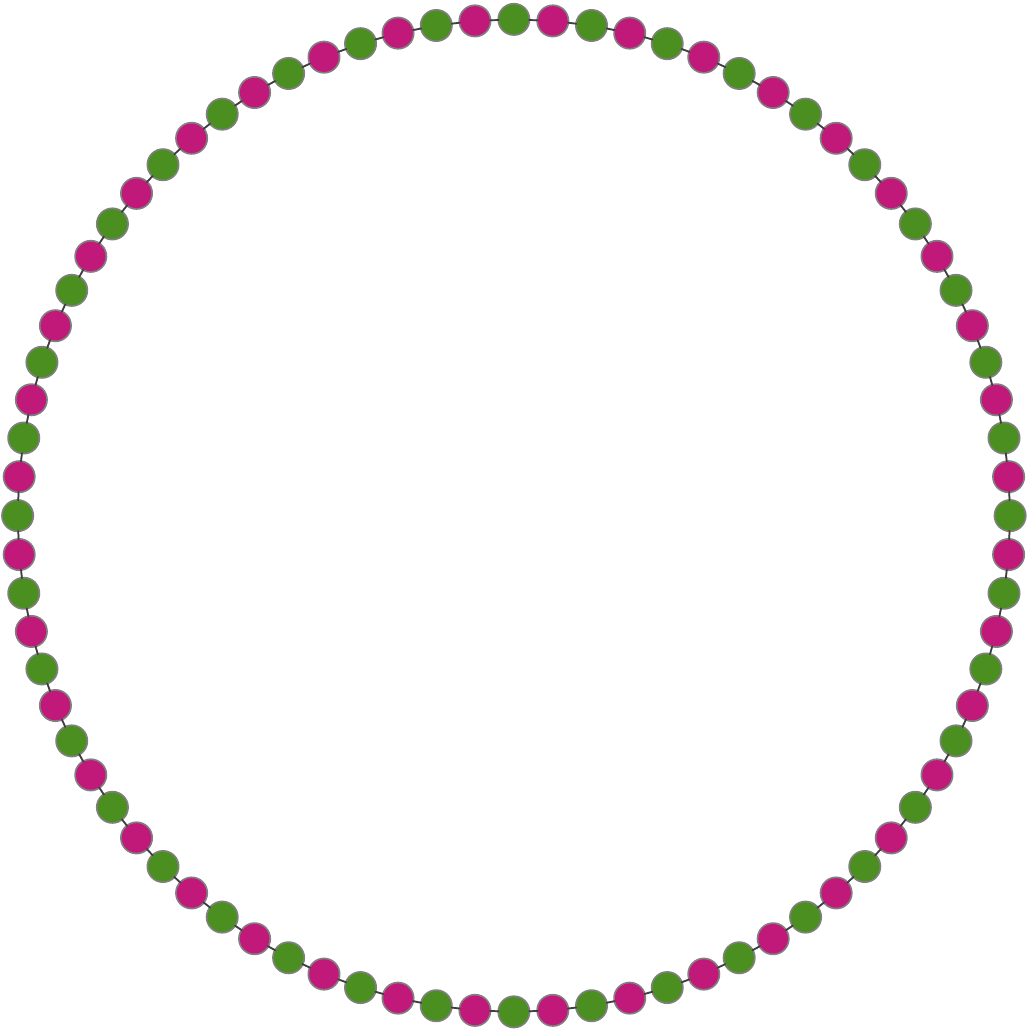}};

\node at (4.4,0) {\includegraphics[width=1.9in]{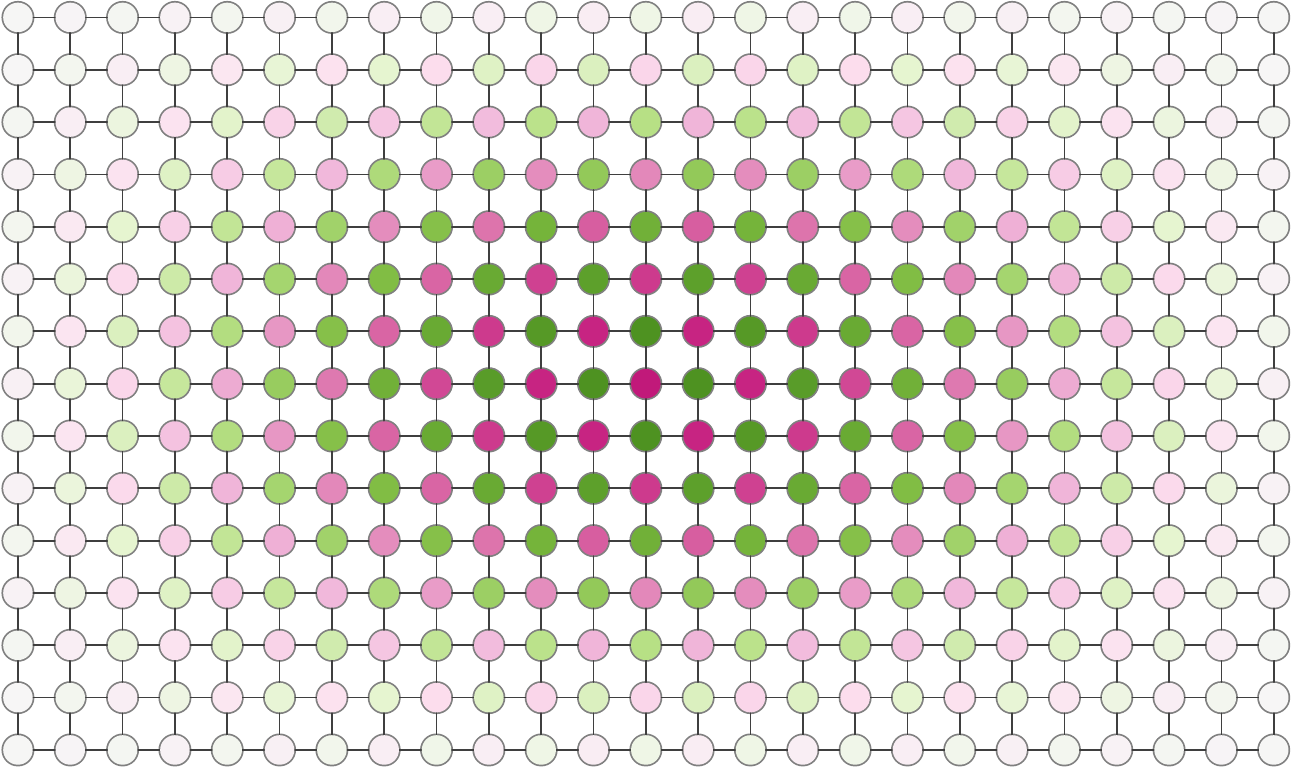}};

\node at (10.2,0) {\includegraphics[width=2.4in]{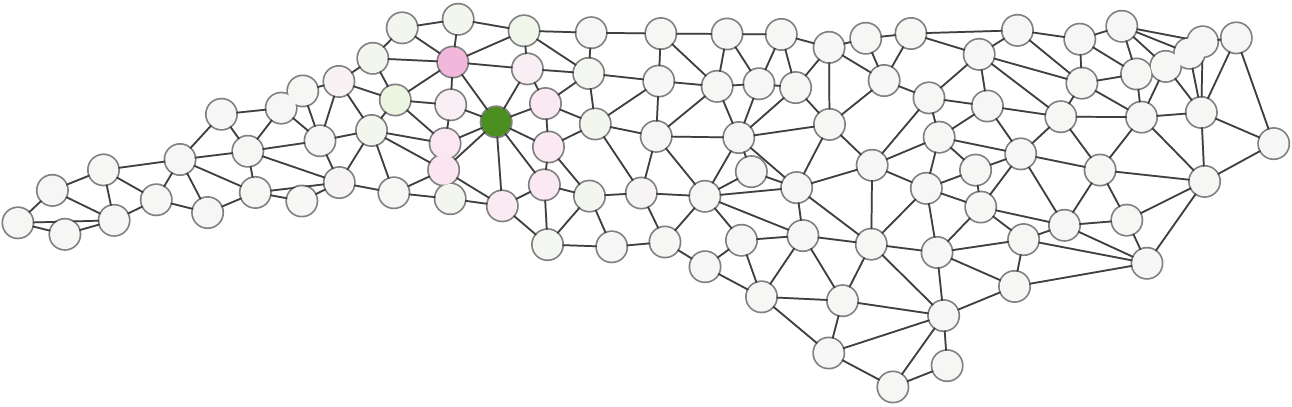}};

\end{tikzpicture}
\caption{For the (even-length) cycle graph, the lowest-frequency non-trivial eigenvector $\Psi_2$ of $L$---called the {\em Fiedler vector}---oscillates slowly and realizes a large $\I$ value, while the highest-frequency eigenvector $\Psi_{n}$ is a perfect alternating pattern and yields the extremal $\I=-1$. For the grid graph, $\Psi_{2}$ oscillates slowly.  On the other hand, $\Psi_{n}$ oscillates very rapidly.  Note that the low-frequency eigenvector is highly clustered, while the high-frequency eigenvector is not.  The right column shows the county dual graph of North Carolina with its  lowest- and highest-frequency eigenvectors.  One captures cluster structure, while the other is highly localized at a high-degree vertex.}
    \label{fig:LowAndHighFrequency}
\end{figure}

We can leverage well known facts about the Laplacian to rephrase some of the empirical observations in this paper.  
In particular, if the underlying graph has $k$ internally well-connected components that are weakly connected to each other, then  $L$ will have $k$ eigenvalues close to 0 and $\mu_{k+1}\gg 0$ \cite{Ng2001_Spectral}.  This gives many qualitatively different functions on the graph that all have a low $\I(\cdot \ ;L)$ indicative of clustering.

As we have seen, the case of vertex-regular graphs is one where the eigenvectors of $L$ provide exact solutions to the extremization problem for all four weight matrices.  These can be phrased in familiar spectral graph theory language as {\em nodal decompositions} of the graph into maximal regions where the eigenfunction does not change sign \cite{urschel}.  For the lowest-frequency non-constant eigenvector $\Psi_2$, this optimal partition solves a relaxation of the combinatorial normalized cuts functional \cite{Shi2000_Normalized}.  Though realistic graphs are not regular, Figure~\ref{fig:LowAndHighFrequency} gives an indication that configurations registering maximal segregation are not so far from what a nodal decomposition might predict.

In lattice graphs, which are regular except for along their boundaries, the high-frequency eigenvectors are damped checkerboard patterns, shown in Figure~\ref{fig:HexagonalLattice}.  In highly irregular graphs, such as in Figure \ref{fig:LowAndHighFrequency}, the checkerboarding may be strongly localized around vertices of high degree.  It is clear from both this example and from the double-star graphs in Figure~\ref{fig:TwinStarGraph} that $L$ is still highly sensitive to degree disparities. In general, the largest eigenvalue may be characterized as a measure of how close the graph is to being bipartite \cite{Desai1994_Characterization}, but a corresponding characterization of the highest-frequency eigenvectors is elusive.

  \FloatBarrier

\subsection{Dirichlet Energy Functionals}  With the graph Laplacian $L$, we can quantify a kind of smoothness via a notion of \emph{Dirichlet energy} on a graph.

Suppose that a vector is written as  $\V=\displaystyle\sum_{i=1}^{n}\alpha_{i}\Psi_{i}$ 
in the orthonormal basis of eigenvectors of $L$, so 
that $\X=\displaystyle\sum_{i=2}^{n}\alpha_{i}\Psi_{i}$.  Then $\V^{\T} L \V=\displaystyle\sum_{i=1}^{n}{\alpha_{i}}^{2}\mu_{i},$ where the right-hand side is large when a large portion of the coefficient energy localizes on the highest-frequency eigenvectors.  In analogy with classical Dirichlet energy functionals \cite{Evans1998_Partial}, we may define a \emph{graph Dirichlet energy} for general functions $\V$ on the graph, which are not necessarily zero-centered.  

\begin{definition}[Dirichlet energy]\label{defn:E}Let $\V=\displaystyle\sum_{i=1}^{n}\alpha_{i}\Psi_{i}$ be a function on a graph with Laplacian $L$ and associated orthonormal eigenvectors $\{\Psi_{i}\}_{i=1}^{n}$.  The Dirichlet energy associated to $L$ is given by $\mathcal{E}(\V)=\sqrt{\sum_{i=1}^{n}\alpha_{i}^{2}\mu_{i}}=\sqrt{\V^{\T}L\V}.$ 
\end{definition}

Note that for any scalar $\alpha>0$, $\mathcal{E}(\alpha\V)=\alpha\mathcal{E}(\V)$.  In particular, $\mathcal{E}$ is not scale-invariant, since it lacks the denominator of $\V^{T}\V$ used when computing $\I$.    

Compare this to the  classical Dirichlet energy functional on $f:\Tor\rightarrow\R$ given by
\[E(f)=\displaystyle\int_{\Tor}{\|\D f(x)\|_2}^{2}dx,\]
where $\D f$ is interpreted in a weak sense.
By Stokes' Theorem, $\displaystyle\frac{1}{(2\pi)^{m}}\int_{\Tor}{\|\D f(x)\|_2}^{2}dx=\langle -\Delta f, f\rangle$ 
so that $E(f)=\sqrt{\langle \Lap f, f\rangle}$, in direct analogy to Definition \ref{defn:E}.  To further develop the connection between  the graph definition and the classical definition, assume  $m=1$ for simplicity.  If $f$ has Fourier expansion $f(x)=\sum_{k=-\infty}^{\infty}c_{k}\exp(-i k x)$, then $\D f(x)=-\sum_{k=-\infty}^{\infty}c_{k}ik\exp(-i k x)$.  By Parseval's Theorem,
$${\|f\|_2}^{2}=\sum_{k=-\infty}^{\infty}{c_k}^{2}, \qquad {\|\D f\|_2}^{2}=\sum_{k=-\infty}^{\infty}k^{2}{c_k}^{2}.$$
Noting that $\{k^{2}\}_{k=0}^{\infty}$ are precisely the eigenvalues of $\Lap$ defined on $L^{2}(\Torone)$, we see that the classical and graph definitions agree.

If $\displaystyle\int_{\Tor}{\|\D f(x)\|_2}^{2}dx$ is small, then $f$ is locally smooth in the sense that $\nabla f$ has small magnitude in most areas.  Noting again that \[E(f)=\displaystyle\int_{\Tor}{\|\D f(x)\|_2}^{2}dx=-\displaystyle\int_{\Tor}\Delta f(x) f(x)dx=\langle \Lap f, f\rangle\]
suggests that $\X^{\T} L \X$, the graph discretization of $\langle \Lap f,f\rangle$, is a measure of local smoothness of the function $\X$ on the graph. This connection is elaborated in  \cite{Chung1997_Spectral}.  Under this interpretation, $\I(\X;L)$ is small (indicating segregation) when $\X$ is mostly smooth.  Importantly, $\I(\cdot \ ;L)$ is scale-invariant, unlike $\mathcal{E}(\cdot)$; a note about developing scale-sensitive measures of segregation will be discussed in Section \ref{sec:Conclusions}.


\section{Random Walks and \I}\label{sec:WalksandI}
We now consider a class of weight matrices for which $\I$ has a random walk interpretation, namely bistochastic matrices (those with rows and columns summing to one). 

We first recall some basic facts about Markov chains and random walks.  
By definition, a {\em Markov chain} on a finite state space (with states indexed $1,\dots,n$) is a random process encoded by a stochastic $n\times n$ matrix $K$.  The associated random walk steps from the $i$th to the $j$th state with probability $K_{ij}$.  
This can be visualized as random walk on a graph with $n$ nodes, and an edge $(i,j)$ present when $K_{ij}$ or $K_{ji}>0$.  
We can encode the walk by matrix multiplication if a probability vector $\V$ is interpreted as describing a probabilistic position on the state space.  Then $\V^\T K$ is the new position after one step of the walk.

With this, the reader can verify that our spatial weight matrix $P$ discussed above has an interpretation as the simple random walk on the geography units, making all neighboring units equally likely at each stage.  As long as the graph is connected and aperiodic (for instance, if it has any triangles), this random walk converges to a unique stationary distribution in which the probability of being at any node in the long term is proportional to its degree.  When the graph encodes the tracts of a city, as in many of our examples here, this stationary distribution for $P$ is not very meaningful.  

Given an arbitrary Markov chain $K$, the classical {\em Metropolis-Hastings} construction allows us to modify the random walk so that it targets a specified stationary distribution $\pi$ (an arbitrary probability distribution on the states $1,\dots,n$).  The Metropolis-Hastings matrix $M=M(K,\pi)$ gives a reversible chain, meaning that $\pi(i)M_{ij}=\pi(j)M_{ji}$ for all $i,j$.  Note that if the Metropolis-Hastings matrix is set to target the uniform distribution, then reversibility means the matrix is symmetric and therefore bistochastic.
Next, we establish that for any bistochastic matrix $Q$, such as for 
the uniformizing Metropolis matrix $\MH=M(P,\frac 1n \one)$, Moran's \I can be interpreted in terms of variance reduction.

\begin{theorem}[Random walk interpretation of $\I$]\label{thm:randomwalks}
For a bistochastic matrix $Q$ and  a column vector $\V$,  consider $\W=\V^\T Q$, the value of $\V$ after one step of the Markov chain given by $Q$. Let $\sigma_0$ and $\sigma_1$ be the standard deviation of the values in $\V$ and $\W$ respectively, so that the ratio $\sigma_1/\sigma_0$ gives the variance reduction in one step of the walk. Let $\rho(\V, \W)$ be the correlation between the values in $\V$ and $\W$.  Let $\X=\V-\vbar\one$ and $\Y=\W-\bar{w}\one$ be the zero-centered vectors before and after applying $Q$.  Then
\begin{enumerate}[(a)]
    \item $\I(\V; QQ^\T) = \left(\frac{\sigma_1}{\sigma_0}\right)^2$.
    \item $\I(\V; Q) = \frac{\Y^\T\X}{\X^\T\X}  = \rho(\V, \W) \cdot \frac{\sigma_1}{\sigma_0}$.
\end{enumerate}
\end{theorem}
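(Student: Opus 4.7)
The plan is to exploit two immediate structural consequences of bistochasticity and then the result follows by direct computation, essentially unpacking Moran's $\I$ and matching terms.

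First I would record the preliminaries. Since $Q$ is bistochastic (non-negative entries with $Q\one = \one$ and $\one^\T Q = \one^\T$), each row-sum of $Q$ (and of $QQ^\T$) equals one, so the $W$-degrees satisfy $d_i \equiv 1$, giving $\dbar = 1$ and the denominator of Moran's \I simplifying to $\X^\T \X$. Next, bistochasticity preserves the mean: $\bar{w} = \frac{1}{n}\one^\T \W = \frac{1}{n}\one^\T Q^\T \V = \frac{1}{n}\one^\T \V = \vbar$. Therefore the zero-centered lagged vector satisfies
\[
\Y = \W - \bar{w}\one = Q^\T \V - \vbar\one = Q^\T(\V - \vbar\one) = Q^\T \X,
\]
using $Q^\T \one = \one$ in the last step. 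These three observations (constant $W$-degree, mean preservation, and $\Y = Q^\T \X$) are the only ingredients required.

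For part (b), I would first compute $\Y^\T \X = (Q^\T \X)^\T \X = \X^\T Q \X$, so by Definition~\ref{defn:I} and $\dbar=1$,
\[
\frac{\Y^\T \X}{\X^\T \X} = \frac{\X^\T Q \X}{\X^\T \X} = \I(\V; Q).
\]
For the second equality in (b), I would unpack the correlation: $\rho(\V,\W) = \frac{\mathrm{Cov}(\V,\W)}{\sigma_0 \sigma_1} = \frac{\tfrac{1}{n}\Y^\T \X}{\sigma_0 \sigma_1}$, and combine with $\sigma_0^2 = \tfrac{1}{n}\X^\T \X$ to get $\rho(\V,\W)\cdot \frac{\sigma_1}{\sigma_0} = \frac{\tfrac{1}{n}\Y^\T\X}{\sigma_0^2} = \frac{\Y^\T \X}{\X^\T \X}$.

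For part (a), since $QQ^\T$ is symmetric and bistochastic, the same simplification gives
\[
\I(\V; QQ^\T) = \frac{\X^\T Q Q^\T \X}{\X^\T \X} = \frac{\|Q^\T \X\|_2^2}{\|\X\|_2^2} = \frac{\|\Y\|_2^2}{\|\X\|_2^2} = \frac{\sigma_1^2}{\sigma_0^2},
\]
where the final equality uses $\sigma_i^2 = \tfrac{1}{n}\|\cdot\|_2^2$ on the zero-centered vectors.

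There is no real obstacle here; the only subtle step is noticing that bistochasticity preserves the mean so that $\Y = Q^\T \X$ without correction terms. Once that is in hand, both (a) and (b) are one-line algebraic consequences of the fact that $\dbar = 1$ for bistochastic weight matrices.
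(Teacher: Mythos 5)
Your proposal is correct and follows essentially the same route as the paper's proof: the paper likewise relies on mean preservation ($\overline{\V^\T Q}=\overline{\V^\T}$) and on $\one^\T Q=\one^\T$ to rewrite the centered lagged vector as $\X^\T Q$, then matches terms with Definition~\ref{defn:I}. Your explicit isolation of the three ingredients (constant $W$-degree so $w=n$, mean preservation, and $\Y=Q^\T\X$) is just a cleaner packaging of the same computation.
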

\begin{proof}
To see (a), note that because $Q$ is bistochastic, the average values satisfy $\overline{\V^\T Q} = \frac{1}{n} \cdot \V^\T Q \one = \frac{1}{n} \cdot \V^\T \one = \overline{\V^\T}$. Thus
\begin{align*} 
    \frac{{\sigma_1}^2}{{\sigma_0}^2} & =
    \frac{( \V^\T Q - \overline{\V^\T}\one^\T) (\V^\T Q - \overline{\V^\T}\one^\T)^\T}{(\V - \overline{\V}\one)^\T (\V - \overline{\V}\one)} 
     = \frac{( \V^\T Q - \overline{\V^\T}\one^\T Q) (\V^\T Q - \overline{\V^\T}\one^\T Q)^\T}{(\V - \overline{\V}\one)^\T (\V - \overline{\V}\one)}  \\
    & = \frac{( \V^\T - \overline{\V^\T}\one^\T ) Q Q^\T (\V^\T - \overline{\V^\T}\one^\T)^\T}{(\V - \overline{\V}\one)^\T (\V - \overline{\V}\one)}
     = \I(\V; QQ^\T).
\end{align*}

A similar calculation yields (b): 
\begin{align*}
    \I (\V; Q) & = \frac{( \V^\T - \overline{\V^\T}\one^\T ) Q (\V - \overline{\V}\one)}{(\V - \overline{\V}\one)^\T (\V - \overline{\V}\one)} 
     = \frac{( \V^\T Q - \overline{\V^\T Q}\one^\T ) (\V - \overline{\V}\one)}{(\V - \overline{\V}\one)^\T (\V - \overline{\V}\one)} = \frac{\Y^\T \X}{\X^\T \X}  \\
    & = \frac{( \V^\T Q - \overline{\V^\T Q}\one^\T ) (\V - \overline{\V}\one)}{||\V^\T Q - \overline{\V^\T Q}\one^\T ||_2 \cdot ||\V - \overline{\V}\one||_2}  \cdot \frac{||\V^\T Q - \overline{\V^\T Q}\one^\T ||_2}{||\V - \overline{\V}\one||_2} 
     = \rho(\V^\T, \V^\T Q) \cdot \frac{\sigma_1}{\sigma_0}.
\end{align*}
\end{proof}

Note $\rho(\V,\W)$ is just the one-step autocorrelation (i.e., time lag $1$) for the Markov chain $Q$.  Part (a) of Theorem \ref{thm:randomwalks} states that $\I(\V; QQ^\T)$ can be interpreted as the factor by which the variance is reduced in two steps of evolution under the Markov chain associated to $Q$. Note that a general weight matrix $W$ admits a decomposition $W = QQ^\T$ for such a matrix $Q$ iff $W$ is bistochastic, positive semidefinite, and symmetric. We also observe that $\I(\V; QQ^\T)$ is always non-negative. Part (b) of Theorem \ref{thm:randomwalks} states that $\I(\V; Q)$ decomposes as the product of the one-step autocorrelation for $\V$ and the reduction in standard deviation after one step. To see how this plays out in extreme cases, observe that if $|\I(\V; Q)| \approx 1$ then the standard deviation of $\V$ must remain roughly the same after one step of $Q$, with the value of $\V$ after one step being either highly correlated ($\I(\V; Q) \approx 1$) or anti-correlated ($\I(\V; Q) \approx -1$) with the initial value of $\V$.
The alternative expression $\I(\V;Q)= \frac{\Y^\T\X}{\X^\T\X}$ makes it clear that if a step of $Q$ changes $\V$ to something near-uniform, then $\I$ will have small magnitude.  Large $\I$ can only occur when the diffusion process leaves $\Y^\T\X\approx \X^\T\X$.

When we pass from the (symmetric) adjacency matrix $A$ to the stochastic normalization $P$, this need not be bistochastic unless the underlying graph is regular.  To take advantage of this theorem, we will use the uniformizing Metropolis-Hastings matrix $\MH=M(P,\frac 1n \one)$, which is both symmetric and bistochastic.  

In fact, $\MH$ is a best symmetric bistochastic approximation to $P$ in the sense that it minimizes the difference $\sum_{i \neq j} | Q_{ij} - P_{ij}|$ among all symmetric bistochastic matrices $Q$. 
This follows from the work of Billera and Diaconis in  \cite{billera2001geometric}, which proves the more general statement that the Metropolis-Hastings matrix $M(K,\pi)$ whose $i,j$ term is 
$$M(K,\pi)_{ij} = \min\left( K_{ij}, \quad \frac{\pi(j)}{\pi(i)} K_{ji}\right)$$
minimizes the $\pi$-weighted $\ell^1$ distance from $K$ to $\{Q\}$.  
In particular, in our setting, if the graph $\G$ is regular, then $P$ is already symmetric and bistochastic, so $\MH = P$. 
In the irregular case, $\MH$
can be thought of as a modification of the simple random walk; it works by introducing a rejection step that makes the walk extremely lazy when a low-degree vertex is next to a high-degree vertex.  (See Figure~\ref{fig:PvsM} for an illustration.)

\begin{figure}[ht]\centering
\begin{tikzpicture}
\node at (0,0)
{\includegraphics[width=2in]{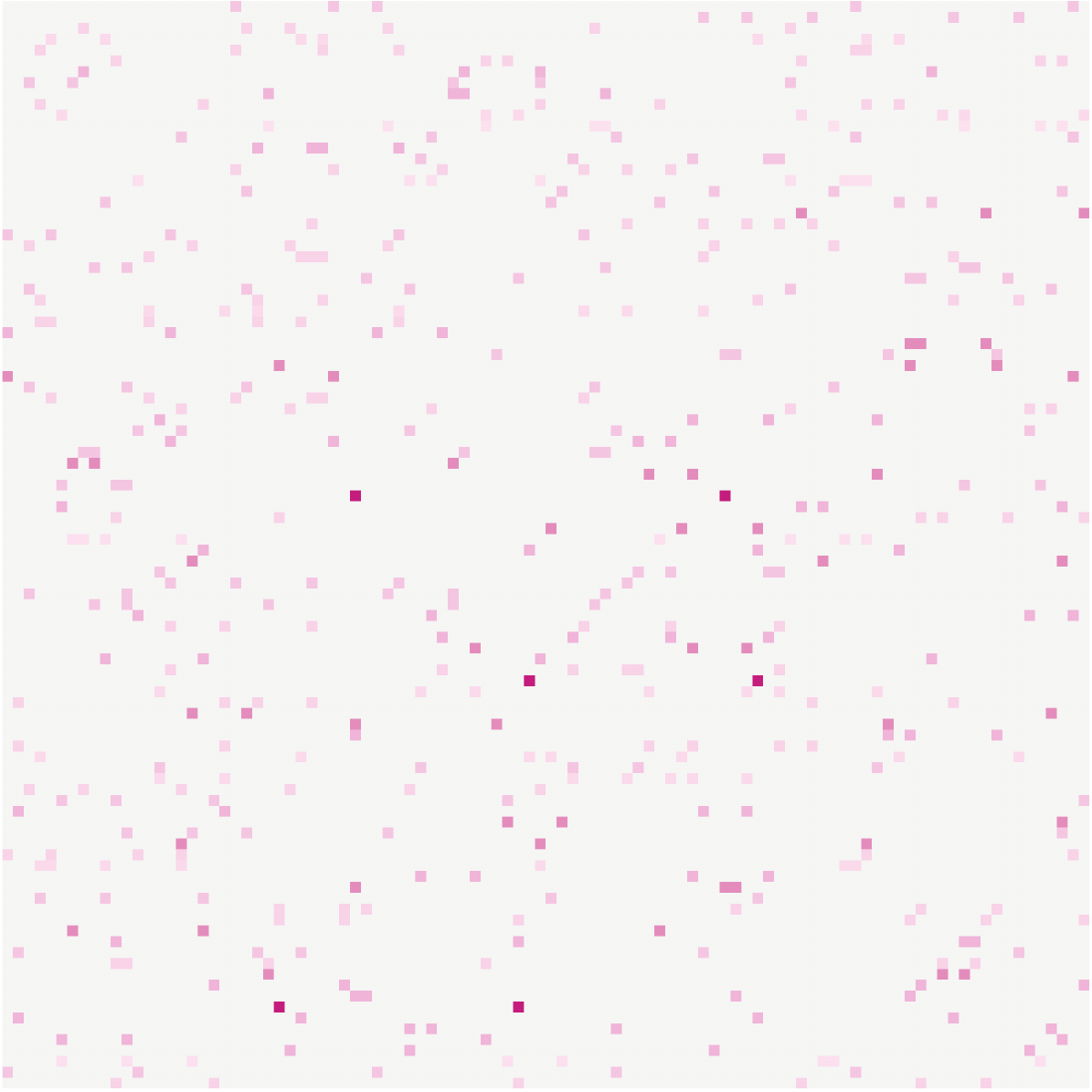}};
\node at (6,0) {\includegraphics[width=2in]{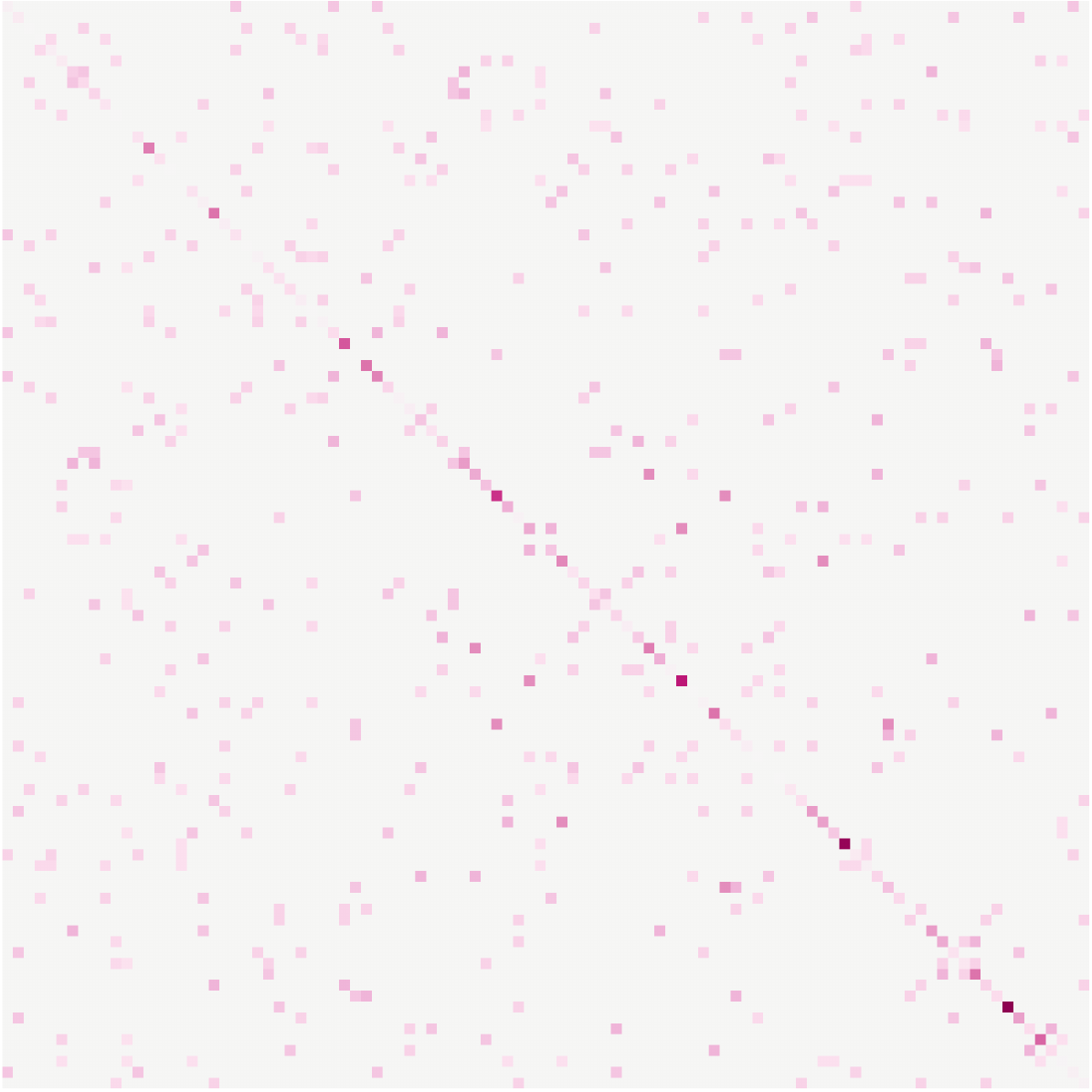}};
\end{tikzpicture}

\caption{The matrices $P$ (left) and $\MH$ (right) for the North Carolina county dual graph shown in Figure~\ref{fig:LowAndHighFrequency}.  Darker colors indicate higher matrix entries. We see that $\MH$ has some large diagonal entries, meaning that the associated random walk is very lazy at some nodes.}\label{fig:PvsM}
\end{figure}

Using $\MH$ for the spatial weighting in \I succeeds where $P$ does not in mitigating the degree effects discussed throughout this paper.  This is because $\MH$ eliminates $W$-degree discrepancies in both the rows and columns of $A$, while $P$ only standardizes the rows.  Indeed, because $\MH$ is a non-negative stochastic matrix, its largest eigenvalue is $1$, realized by its stationary vector $\one$.  (This is a well known Markov chain property following from the Perron-Frobenius theorem.)  We recall from Corollary~\ref{cor:extremeI} that the extreme values of $\I(\cdot \ ;M)$ are realized at the eigenvalues $\lambda_2$ and $\lambda_n$ of $M$.

\begin{corollary}[The range of \I with weights from $\MH$]
For any graph $\G$ and any function $\V$, we have $-1\le \I(\V;\MH)\le 1$.
\end{corollary}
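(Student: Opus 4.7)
The plan is to apply Corollary~\ref{cor:extremeI} directly, exploiting the fact that $\MH$ is both symmetric and doubly stochastic. The first preliminary step is to verify that $\MH$ is a nonnegative stochastic matrix with constant rowsum $1$. The off-diagonal entries $\MH_{ij} = A_{ij}/\max(D_{ii}, D_{jj})$ are manifestly nonnegative. For the diagonal, note that $\sum_{k \neq i} A_{ik}/\max(D_{ii}, D_{kk}) \le \sum_{k \neq i} A_{ik}/D_{ii} = 1$, so $\MH_{ii} = 1 - \sum_{k \neq i} \MH_{ik} \ge 0$. By construction each row of $\MH$ sums to $1$, so $\MH$ is nonnegative and row-stochastic, with constant $\MH$-degree $k = 1$.

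Next, I would invoke the Perron–Frobenius theorem. Since $\MH$ is a nonnegative matrix with row sums equal to $1$, its spectral radius is exactly $1$, with $\one$ the corresponding eigenvector at eigenvalue $1$. All other eigenvalues of $\MH$ therefore lie in $[-1, 1]$.

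Finally, I would apply the second part of Corollary~\ref{cor:extremeI}: because $\MH$ is symmetric with constant rowsum $k = 1$, the generalized eigenvalues of $(\Pi \MH \Pi, \Pi)$ coincide with the eigenvalues of $\MH$, except that the eigenvalue $1$ associated to $\one$ is replaced by $0$. Since all eigenvalues of $\MH$ were already in $[-1, 1]$, all generalized eigenvalues of $(\Pi \MH \Pi, \Pi)$ lie in $[-1,1]$ as well. By the first part of Corollary~\ref{cor:extremeI} (or equivalently the convex-combination description from Theorem~\ref{thm:GEV}), the value $\I(\V; \MH)$ for any $\V$ is bounded between the smallest and largest such generalized eigenvalues, and therefore lies in $[-1, 1]$.

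The only real work is the brief verification that $\MH$ has nonnegative entries so that Perron–Frobenius applies; every other ingredient is already packaged in Corollary~\ref{cor:extremeI}. In that sense there is no substantive obstacle — the corollary is essentially an immediate consequence of combining bistochasticity with the spectral framework developed in \S\ref{sec:SpectralGraphTheoryAndI}.
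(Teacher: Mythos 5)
Your proof is correct and follows essentially the same route as the paper: verify that $\MH$ is a symmetric nonnegative stochastic matrix so that Perron--Frobenius places its spectrum in $[-1,1]$ with $\one$ as the top eigenvector, then invoke Corollary~\ref{cor:extremeI} to identify the extreme values of $\I(\cdot\,;\MH)$ with eigenvalues of $\MH$ on $\one^{\perp}$. Your explicit check that the diagonal entries of $\MH$ are nonnegative is a detail the paper leaves implicit, but the argument is otherwise identical.
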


This discussion suggests another way in which the random walk interpretation can be fruitful.  It is a standard fact in Markov chain theory that the convergence statistics (such as {\em mixing time}) of a chain have upper and lower bounds in terms of the spectral gap, here $1-\lambda_2$, of the associated matrices.  Random walks that converge more slowly correspond to smaller spectral gaps and smaller Cheeger constants (graphs that have relatively short cuts into large pieces).  In the world of geography dual graphs, this says that if the locality itself can be cut in half with a relatively short cut, as in all of the realistic examples here, then $\I$ scores for $\MH$ can get close to 1. Indeed, planar graphs where the number of vertices is much greater than the largest vertex degree are slow-mixing and have a small spectral gap \cite{SpielmanTeng}, which ensures that $\I(\cdot\ ;\MH)\approx 1$ is achievable.  

In sum, using $\MH$ for the spatial weight matrix allows us to characterize $\I$ in extremely intuitive language.  We imagine a diffusion process that begins with the observed demographic distribution in a locality and conducts a random walk of residents that targets the uniform distribution.  Over the long term, this random walk process must reduce the variance, which is initially $\|\X\|$, to zero.  
Moran's \I now measures {\em how well a uniformizing diffusion succeeds in a single step}.  It is quite reasonable to regard this as a measurement of segregation:  a certain group will be considered very far from uniformly dispersed in a population if  many steps through neighboring geography are required for the group to approach uniformity.


\section{Conclusions, Recommendations, and Future Directions}
\label{sec:Conclusions}

Moran's \I is a valuable way to detect spatial patterns, or to test for spatial correlation in the residuals of some models, especially when combined with a statistical significance test. However, users must exercise caution when using Moran's \I as a gradated measurement (and not just a qualitative test) for a number of reasons.  The underlying graph topology and the choice of spatial weight matrix used in computing $\I$ both strongly impact the range of possible $\I$ values, so using \I to compare across localities remains challenging.  

We summarize the findings of this paper with the following practical recommendations.

\bigskip

\begin{enumerate}[(1)]

\item  For a given graph $\G$ and weight matrix $W$, use the methods here to compute the range of achievable \I values in order to decide whether a particular demographic vector $\V$  has an extreme score.  However, intermediate \I values (say, $\I=.6$)  remain hardest to interpret.  

\item Use circumspect language when comparing \I values for different graphs $\G$ and $\G'$, particularly with standard choices of weight matrix like $A$ and $P$.  The computation $\I(\V;P_\G)>\I(\V';P_{\G'})$ should not be presented as a finding that the first city is more segregated than the second.

\item Both for within-graph and between-graph comparisons, the best-suited spatial weight matrix is $\MH$, which makes \I interpretable in terms of how the subgroup's population diffuses in a random walk.  Furthermore, with $\MH$ weights, \I is actually bounded between $-1$ and $1$ and for large planar graphs can achieve $\I\approx 1$.

\item The discussion above suggests a novel role for Moran's \I when a 
demographic function $\V$ is fixed on a sufficiently fine graph.  
Recall that the dependence of a measurement on the choice of units is an important problem in geography called the {\em modifiable areal unit problem}.
Given $\V$, a choice of units for which $\I<0$ can be interpreted as an aggregation of the underlying fine data that captures regions that are demographically distinct---having similar demographics within the unit and different demographics on neighboring units.  
That is, when considering alternative choices of geographical units (like census tracts in Chicago versus official neighborhoods maintained in city statistics),  a {\em negative} Moran's score can be interpreted as a signal that the units track with demographic differences.  From this point of view, \I can facilitate a kind of demographic community detection.  
\end{enumerate}

\bigskip

This study suggests several interesting questions for future exploration.  
While the connection between high segregation and low-frequency eigenvectors follows from our analysis, the connection between low segregation and high-frequency eigenvectors is more subtle.  This is due to the large impact that the underlying graph geometry has on even the local properties of high-frequency eigenvectors (see Figure \ref{fig:LowAndHighFrequency}). 
Understanding the extent to which high-frequency eigenvectors localize (i.e., have concentrated support) on irregular graphs would be interesting in its own right, and has potential connections to Anderson localization in the continuum setting \cite{Filoche2012_Universal}.

Another useful direction of inquiry would be to modify the definition of $\I$ by building new metrics that make use of $\X^\T W \X$ without normalizing by the denominator $\X^\T \X$.  By creating scale-sensitive scores for the deviation in a population, we could remediate the degeneration in the interpretation of \I for low-variance distributions.

\section*{Acknowledgments}

We thank Larry Guth and Eugene Henninger-Voss for enlightening conversations.  We also thank the two anonymous referees for their helpful comments that significantly improved the paper.  MD and TW were partially supported by the NSF through grant OIA-1937095, and MD through DMS-2005512.  JMM was partially supported by the NSF through grants DMS-1924513 and DMS-1912737. 

\bibliographystyle{abbrv}
\bibliography{Moran.bib}

\clearpage

\section*{Supplementary Material}

\section{Georgia Dual Graph Degree Distributions}

The dual graphs for Georgia's census geography on the level of blocks, block groups, and counties are shown in the main text in Figure 2. 
Some statistics for those graphs are recorded in 
Table 2; 
in particular, they have an average vertex degree of roughly 9.6, 5.5, and 5.3, respectively.  The degree distributions are shown here.

\begin{figure}[ht] \centering
\begin{tikzpicture}
\node at (0,2.75) {\bf Blocks};
\node at (5,2.75) {\bf Block Groups};
\node at (10,2.75) {\bf Counties};
\node at (0,0) {\includegraphics[width=1.8in]{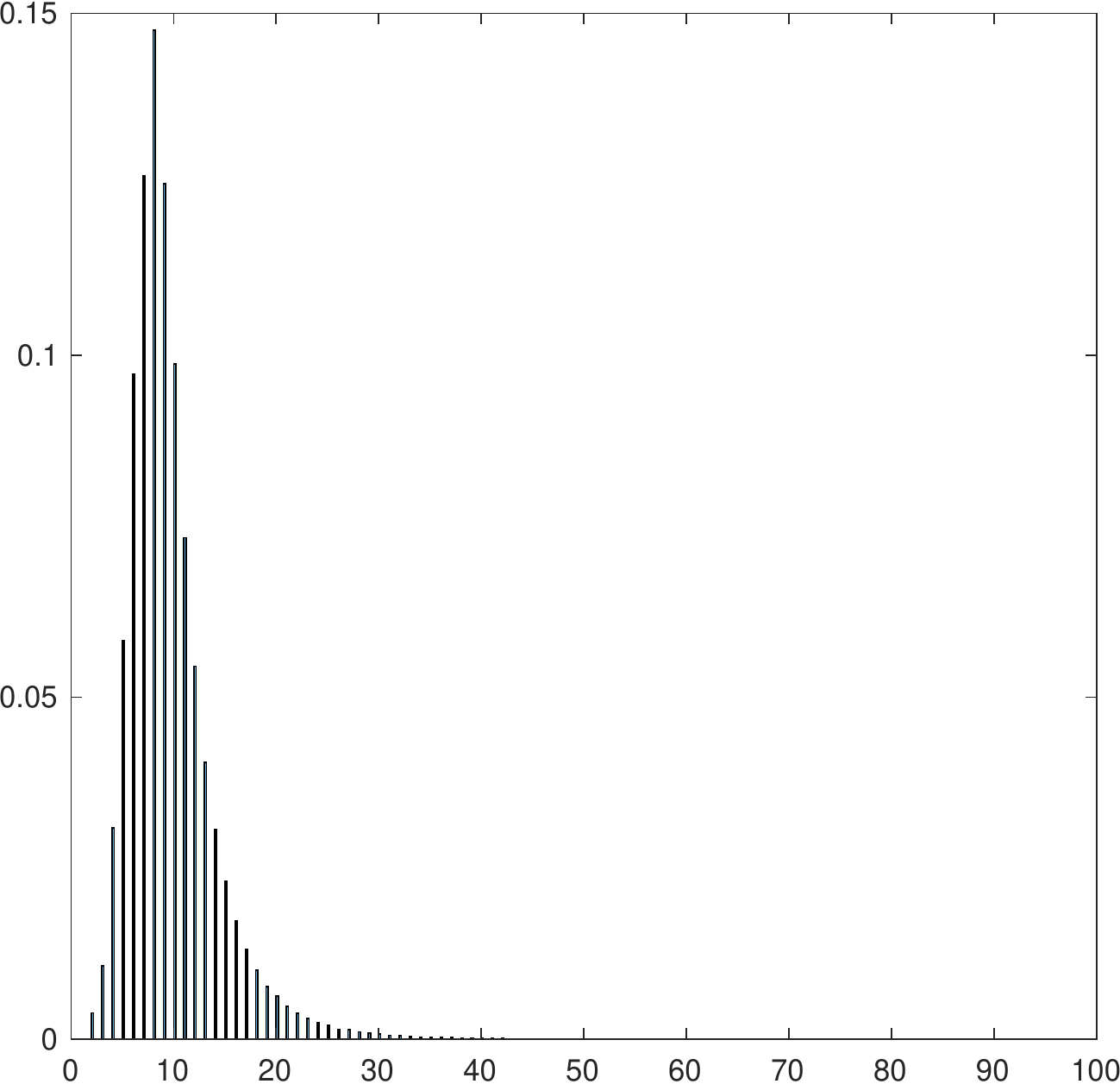}};
\node at (5,0) {\includegraphics[width=1.8in]{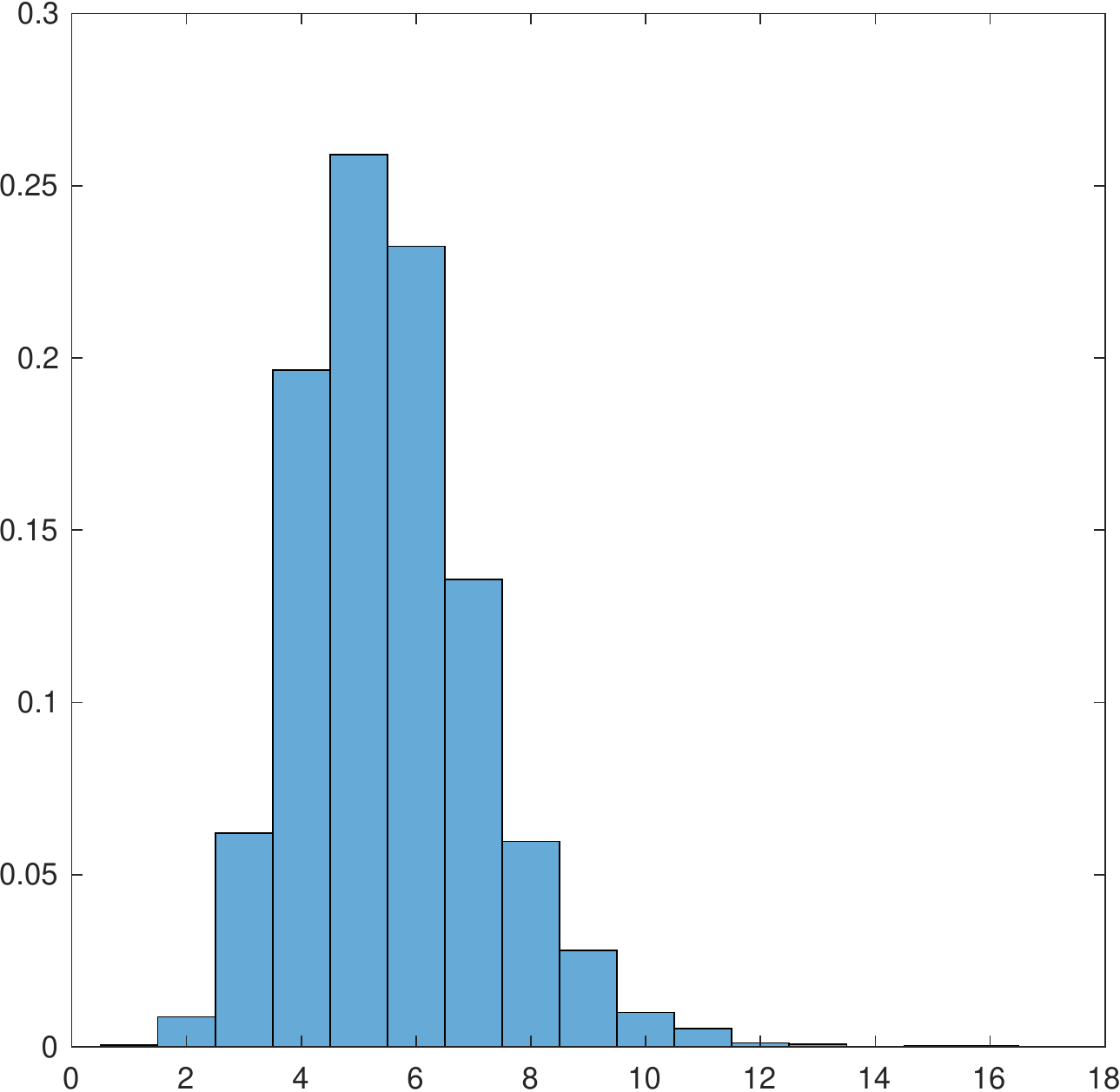}};
\node at (10,0) {\includegraphics[width=1.8in]{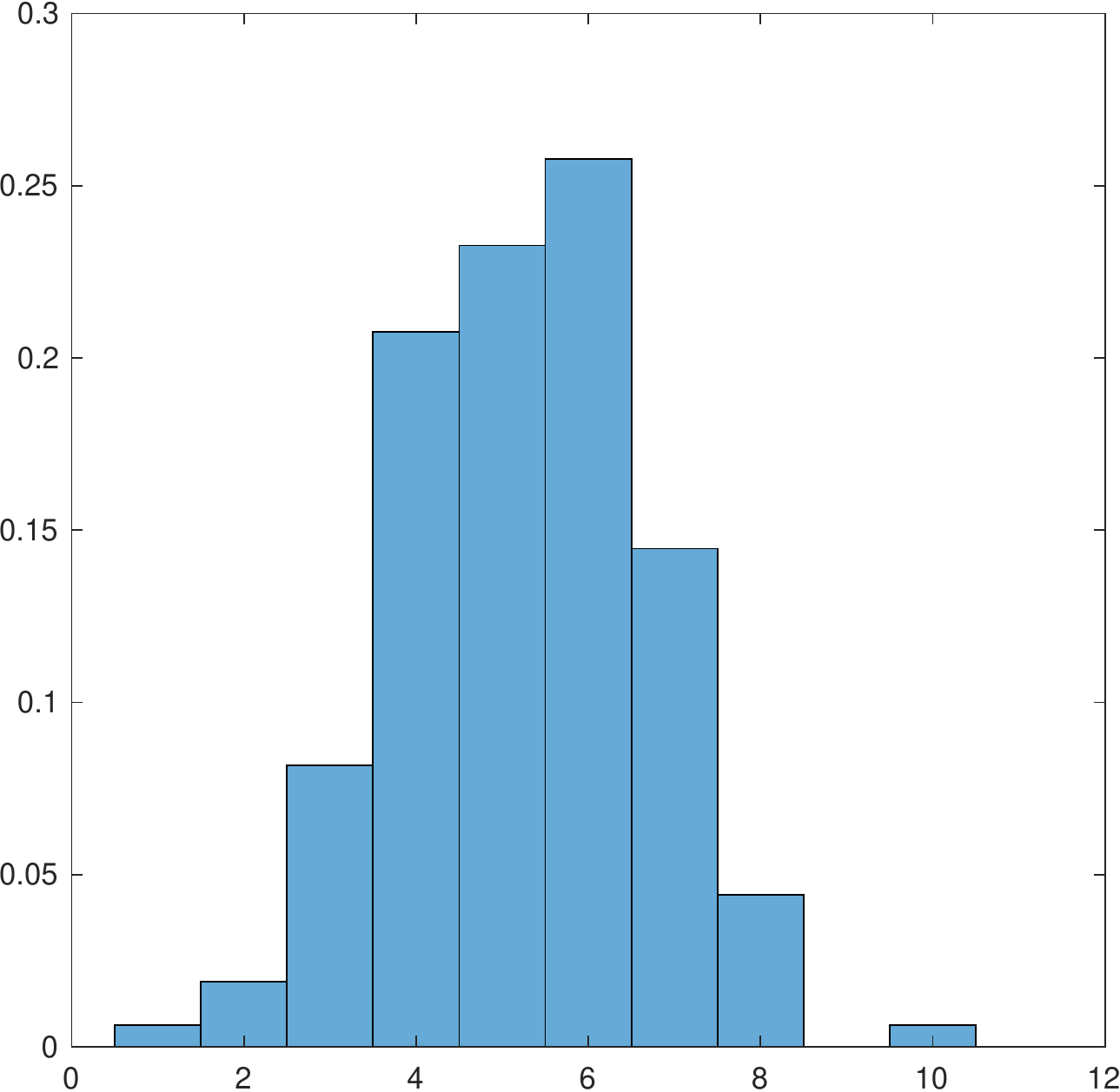}};
\end{tikzpicture}
\caption{Degree distribution for Georgia's  census geography dual graphs.}\label{fig:GA_DegDists}
\end{figure}

\section{Section 5.1 Experimental Details}
In comparing Moran's \I across different choices of weight matrix, one example relied on an experiment using edge deletions from a triangulated graph to produce some square faces and some variation in degree.  The following figures illustrate the loose sense in which this interpolates from hex to square lattice graph.  
(Compare to Figure~\ref{fig:GA_DegDists}.)

\begin{figure}[ht] \centering
\begin{tikzpicture}
\begin{scope}
\draw (0:1)--(60:1)--(120:1)--(180:1)--(240:1)--(300:1)--cycle;
\draw (120:1)--(300:1);
\draw (0:1)--(180:1);
\draw (60:1)--(240:1);
\end{scope}
\begin{scope}[xshift=1.5cm,yshift=.866cm]
\draw (0:1)--(60:1)--(120:1)--(180:1)--(240:1)--(300:1)--cycle;
\draw (120:1)--(300:1);
\draw (0:1)--(180:1);
\draw (60:1)--(240:1);
\end{scope}

\node at (3.1,.6) {$\cong$};

\begin{scope}[xshift=4.8cm,yshift=-1cm]
\draw (0,0) rectangle (1,3);
\draw (-1,1) rectangle (2,2);
\draw (1,3) -- (2,2);
\draw (0,0)--(-1,1);
\draw (1,0)--(-1,2);
\draw (1,1)--(0,2);
\draw [green,line width=4,opacity=.5] (1,1)--(0,2);
\draw (2,1)--(0,3);
\end{scope}

\node at (3.1+4.4,.6) {$\to$};

\begin{scope}[xshift=9.2cm,yshift=-1cm]
\draw (0,0) rectangle (1,3);
\draw (-1,1) rectangle (2,2);
\draw (1,3) -- (2,2);
\draw (0,0)--(-1,1);
\draw (1,0)--(-1,2);
\draw [gray,dashed] (1,1)--(0,2);
\draw (2,1)--(0,3);
\end{scope}
\end{tikzpicture}
\caption{The hexagonal lattice is equivalent to the square lattice with diagonals.  Deletions of diagonals would strictly interpolate from hex to square.  Instead, we delete 10 or 20\% of all edges to create a graph with more variation in vertex degree.}\label{fig:deletions}
\end{figure}
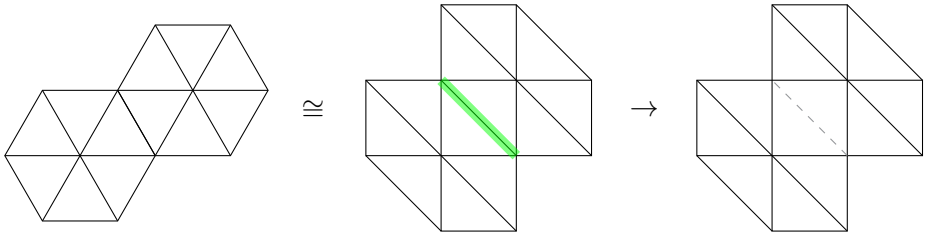

\begin{figure}[ht] \centering
\begin{tikzpicture}
\node at (0,3.7) {\bf 10\% deletions};
\node at (8,3.7) {\bf 20\% deletions};
\node at (0,0) {\includegraphics[width=2.5in]{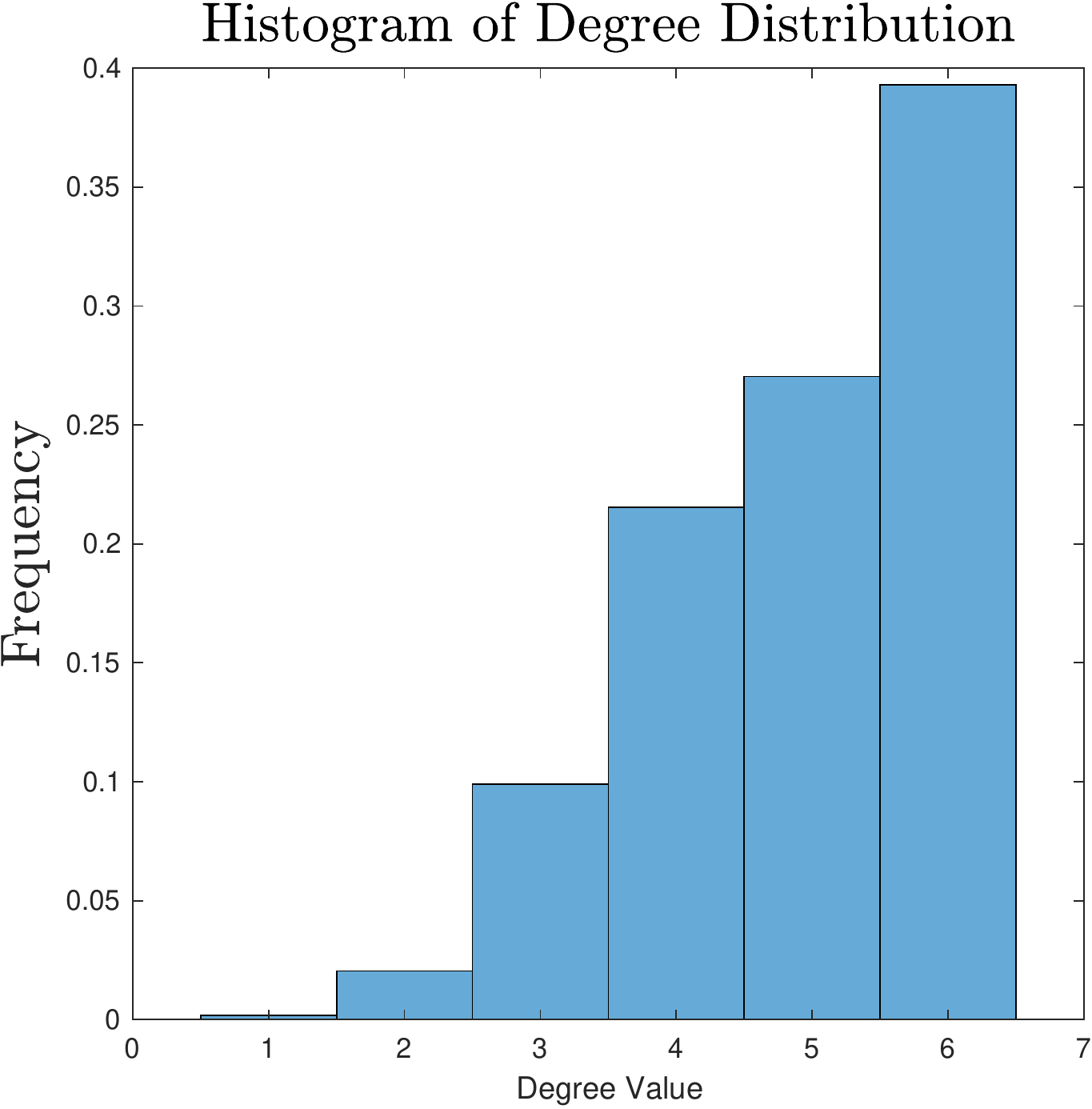}};
\node at (8,0) {\includegraphics[width=2.5in]{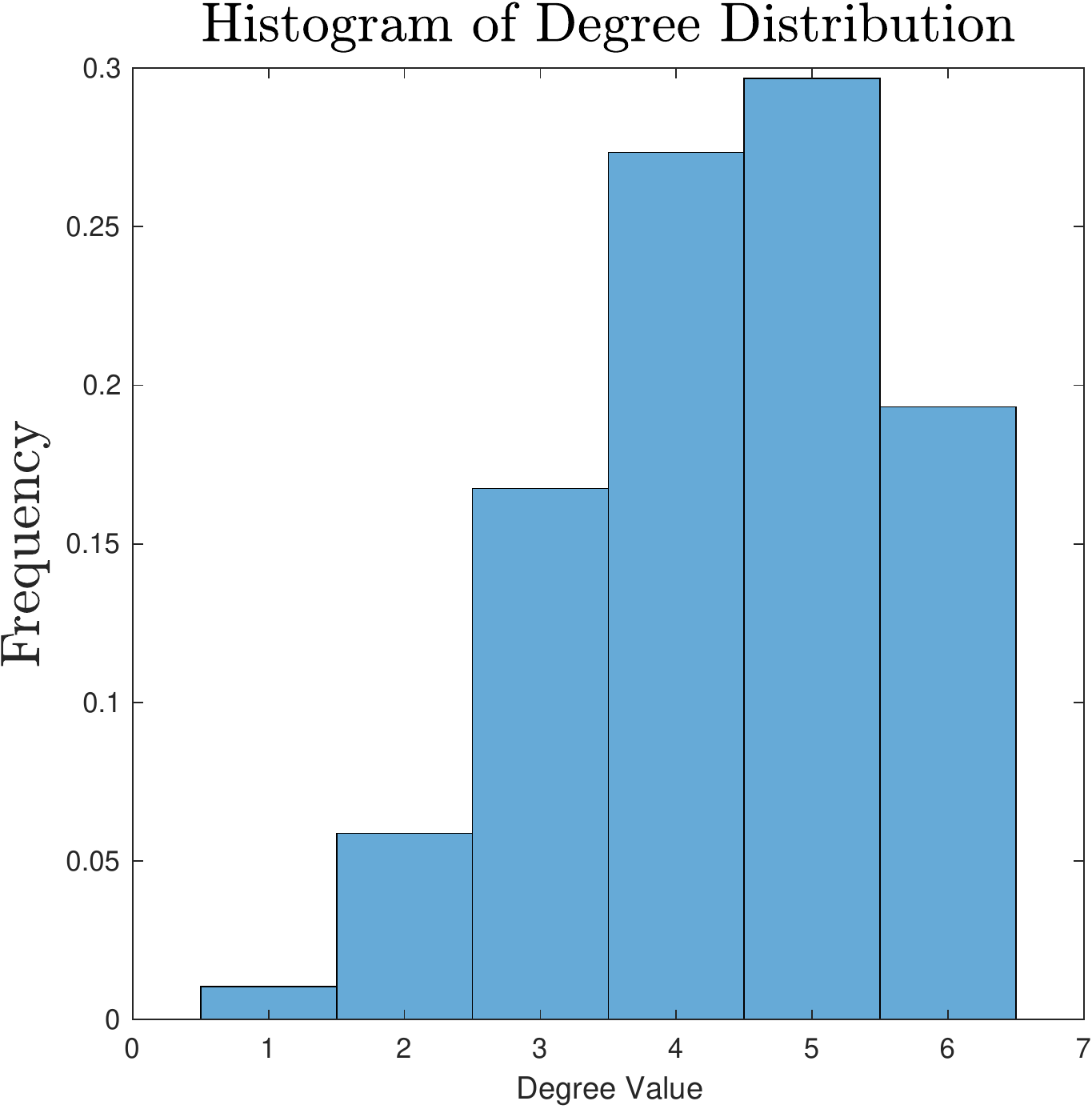}};
\end{tikzpicture}
\caption{Degree distribution for 169-node hexagonal lattice with random edge deletions.}\label{fig:delete-degree}
\end{figure}

\section{Supplemental Plots to Section 5.2}

The following tables present the pairwise comparisons for $A,P,L,M,$ and $M^2=M^\T M$ on the Black and Hispanic population by tract for all 50 states. (We include $M^2$ because Theorem 7.1 
implies that $\I(\V \ ; M^2)$ equals the two-step variance reduction in the uniformizing diffusion.)

\begin{figure}[ht]
    \centering
    \begin{tikzpicture}[scale=0.72]
\node at (4, -22.5) {$\I (Hisp, A)$};
\node at (8, -22.5) {$\I (Hisp, P)$};
\node at (12, -22.5) {$\I (Hisp, M)$};
\node at (16, -22.5) {$\I (Hisp, M^2)$};
\node at (20, -22.5) {$\I (Hisp, L)$};
\node[rotate=90] at (1.5, -4) {$\I (Hisp, A)$};
\node[rotate=90] at (1.5, -8) {$\I (Hisp, P)$};
\node[rotate=90] at (1.5, -12) {$\I (Hisp, M)$};
\node[rotate=90] at (1.5, -16) {$\I (Hisp, M^2)$};
\node[rotate=90] at (1.5, -20) {$\I (Hisp, L)$};
\foreach \Y [count=\i] in {A, P, M, M2, L}
{
\foreach \X [count=\j] in {A, P, M, M2, L}
{
\ifthenelse{\j > \i \OR \i > \j}{
\node at (4*\j,-4*\i) {\includegraphics[width=3cm]{Images/IvsI/\X vs\Y _HISP.png}};
}{}
}
}
\end{tikzpicture}
\caption{Comparing \I with various spatial weight matrices for 50-state Census tract data for Hispanic population shares. The slope $m$ of the best fit line and the correlation $r$ are reported for each plot. The dashed lines indicate the relationship for regular graphs.}
\end{figure}

\begin{figure}[ht]
    \centering
\begin{tikzpicture}[scale=0.72]
\node at (4, -22.5) {$\I (Black, A)$};
\node at (8, -22.5) {$\I (Black, P)$};
\node at (12, -22.5) {$\I (Black, M)$};
\node at (16, -22.5) {$\I (Black, M^2)$};
\node at (20, -22.5) {$\I (Black, L)$};
\node[rotate=90] at (1.5, -4) {$\I (Black, A)$};
\node[rotate=90] at (1.5, -8) {$\I (Black, P)$};
\node[rotate=90] at (1.5, -12) {$\I (Black, M)$};
\node[rotate=90] at (1.5, -16) {$\I (Black, M^2)$};
\node[rotate=90] at (1.5, -20) {$\I (Black, L)$};
\foreach \Y [count=\i] in {A, P, M, M2, L}
{
\foreach \X [count=\j] in {A, P, M, M2, L}
{
\ifthenelse{\j > \i \OR \i > \j}{
\node at (4*\j,-4*\i) {\includegraphics[width=3cm]{Images/IvsI/\X vs\Y _NH_BLACK.png}};
}{}
}
}
\end{tikzpicture}
    \caption{Comparing \I with various spatial weight matrices for 50-state Census tract data for Black population shares. The slope $m$ of the best fit line and the correlation $r$ are reported for each plot. The dashed lines indicate the relationship for regular graphs.}
    \label{fig:my_label}
\end{figure}

\end{document}